\documentclass[letterpaper,11pt]{article}

\usepackage{graphicx} % Required for inserting images
\usepackage[margin=1in]{geometry}
\usepackage{algorithm}
\usepackage[noend]{algpseudocode}
\usepackage{amsthm}
\usepackage{amssymb,amsmath}  
\usepackage{thmtools}  
\usepackage{mathtools}
\usepackage[colorlinks=true, allcolors=blue]{hyperref}
\usepackage[capitalize]{cleveref}
\usepackage{todonotes}
\usepackage{placeins}
\usepackage{subcaption}
\usepackage{booktabs}

\usepackage{newtxtext}
\usepackage{newtxmath}

\usepackage{enumitem}
\setlist[1]{labelindent=\parindent}
\setlist[enumerate]{label=(\arabic*)}
\setlist[itemize]{noitemsep}

\usepackage{authblk}

\author[1,2]{Ben Bals}
\author[3]{Panagiotis Charalampopoulos}
\author[4]{Oded Lachish}
\author[5]{Solon P. Pissis}
\author[1]{Hilde Verbeek}
\affil[1]{CWI, Amsterdam, The Netherlands}
\affil[2]{Vrije Universiteit, Amsterdam, The Netherlands}
\affil[3]{King's College London, UK}
\affil[4]{Birkbeck, University of London, UK}
\affil[5]{The Cyprus Institute, Nicosia, Cyprus}
\date{\today}

\declaretheorem[numberwithin=section]{theorem}

\declaretheorem[sibling=theorem]{corollary}
\declaretheorem[sibling=theorem]{lemma}
\declaretheorem[sibling=theorem]{definition}

\declaretheorem[sibling=theorem]{claim}
\declaretheorem[sibling=theorem]{remark}
\declaretheorem[sibling=theorem]{example}

\declaretheorem[sibling=theorem]{proposition}

\makeatletter
\newcommand{\proofsubparagraph}[1]{\noindent{\color{darkgray}\normalsize\bfseries #1}}

\newenvironment{claimproof}[1][\proofname]{
  \pushQED{\qed}%
  \normalfont \topsep6\p@\@plus6\p@\relax
  \trivlist
  \item[\hskip\labelsep
        \color{darkgray}
    #1\@addpunct{.}]\ignorespaces
}{%
  
  \popQED\endtrivlist%\@endpefalse
  
}
\makeatother

\usepackage{mathtools}
\usepackage{xspace}
\usepackage{tikz}
\usetikzlibrary{decorations.pathreplacing}

\newcommand{\defDSproblem}[3]{
\vspace{2mm}
\noindent\fbox{
   \begin{minipage}{0.9\columnwidth}
   \textsc{#1}\\
   {\bf{Preprocess:}} #2  \\
   {\bf{Query:}} #3
   \end{minipage}
   }
   \vspace{2mm}
}

\def\dd{\mathinner{.\,.}}
\newcommand{\polylog}{\mathrm{polylog}}

\newcommand{\str}{\textsf{str}}
\newcommand{\per}{\textsf{per}}
\newcommand{\cO}{\mathcal{O}}
\newcommand{\cOt}{\widetilde{\mathcal{O}}}

\newcommand{\Stat}{\textsf{Stat}}
\newcommand{\aggOp}{\bigodot}

\newcommand{\Occ}{\textsf{Occ}}
\newcommand{\Occno}{\Occ^{\text{NO}}}
\newcommand{\Occco}{\Occ^{\alpha\beta}}
\newcommand{\HO}{\textsf{H$z$O}}

\newcommand{\AOP}{\textsf{A$z$OP}}
\newcommand{\USI}{USI\xspace}

\newcommand{\COC}{COC\xspace}
\newcommand{\setsize}[1]{\left|#1\right|}
\newcommand{\E}{\mathbb{E}}

\newcommand{\ie}[1]{(i.e.,~#1)}
\newcommand{\eg}[1]{(e.g.,~#1)}

\newcommand{\DrawRun}[4]{
	\def\h{0.5};
	\def\ext{0.4};

    \filldraw [fill=#4] (#1, #2)
		-- +(-\ext, 0)
		-- +(-\ext - 0.1, 0.5 * \h)
        -- +(-\ext, \h)
		-- +(#3 + \ext, \h)
		-- +(#3 + \ext + 0.1, 0.5 * \h)
        -- +(#3 + \ext, 0)
		-- +(0, 0);

    \foreach \i in {0,...,#3} {
        \draw (#1 + \i, #2) -- +(0, \h);
    }
}

\newcommand{\DrawRunExt}[5]{
	\def\h{0.5};
	\def\ext{0.4};

    \filldraw [fill=#5] (#1, #2)
		-- +(-\ext, 0)
		-- +(-\ext - 0.1, 0.5 * \h)
        -- +(-\ext, \h)
        -- +(0, \h);

    \filldraw [fill=#5] (#1 + #3, #2)
        -- +(\ext, 0)
        -- +(\ext + 0.1, 0.5 * \h)
        -- +(\ext, \h)
        -- +(0, \h);

    \filldraw [fill=#4] (#1, #2) -- +(#3, 0) -- +(#3, \h) -- +(0, \h);

    \foreach \i in {0,...,#3} {
        \draw (#1 + \i, #2) -- +(0, \h);
    }
}

\title{Text Indexing: From Reporting to Counting}
\date{\vspace{-.5cm}}

\begin{document}

\maketitle

\begin{abstract}
We prove an elementary yet powerful combinatorial lemma: 
 in any rooted tree with $L$ leaves, the number of nodes whose depth is smaller than the number of their leaf descendants is at most~$L$.
 For any string $T$ of length $n$, a direct application of this lemma to the suffix trie of $T$ yields that
 the number of substrings of $T$ whose length is smaller than their number of occurrences in $T$ is at most~$n$.
 This combinatorial insight leads to space-efficient data structures with optimal query times for string \emph{counting} problems via the following algorithmic framework:
 store the counts for the at most~$n$ ``frequent'' substrings of $T$ in a preprocessing step, and use a \emph{reporting} query to count for the ``infrequent'' substrings.
 Our framework acts as a convenient black box, lifting indexes with reporting time $\cO(\setsize{P}+|\Occ_T(P)|)$ to support counting queries in time $\cO(\setsize{P})$, where $P$ is the queried pattern and $\Occ_T(P)$ is the set of occurrences of $P$ in~$T$.
 As applications, we show efficient indexes for
 consecutive occurrences, 
 weighted sequences, 
 strings with utilities, and
 non-overlapping occurrences.
\end{abstract}

\thispagestyle{empty}
\clearpage
\pagenumbering{arabic}
\setcounter{page}{1}

\section{Introduction}

Text indexing is a classical problem in computer science with many applications,
especially in bioinformatics~\cite{DBLP:books/cu/Gusfield1997} and information retrieval~\cite{DBLP:books/aw/Baeza-YatesR2011}. The goal of text indexing is to preprocess a  string $T=T[1\dd n]$ of length $n$, called the \emph{text}, into a data structure, called
the \emph{index}, supporting queries for a string $P=P[1\dd m]$ of length $m \leq n$,
called the \emph{pattern}. The most typical query types in text indexes are
\emph{reporting} and \emph{counting}: the former asks for the set $\Occ_T(P)$ of
all starting positions (occurrences) of $P$ in $T$, while the latter asks for $|\Occ_T(P)|$.

The study of text indexing dates back to the early 1970s,
when Weiner~\cite{DBLP:conf/focs/Weiner73} introduced suffix trees.
This foundational data structure occupies $\cO(n)$ space, can be constructed in $\cO(n)$ time~\cite{DBLP:conf/focs/Farach97}, and answers reporting queries in $\cO(m + |\Occ_T(P)|)$ time and counting queries in $\cO(m)$ time. Subsequent work on text indexing focused mainly on reducing the space
complexity---the main bottleneck in practice---often at the cost of increased
query time. Key milestones in this direction include the suffix
array~\cite{DBLP:books/daglib/GonnetBS92,DBLP:journals/siamcomp/ManberM93},
the compressed suffix array~\cite{DBLP:journals/siamcomp/GrossiV05}, the
FM-index~\cite{DBLP:journals/jacm/FerraginaM05}, as well as the
$r$-index~\cite{DBLP:conf/soda/GagieNP18} and other indexes for 
highly repetitive texts~\cite{DBLP:journals/csur/Navarro21}.

The \emph{suffix tree} of $T$ is the compacted trie of all the suffixes of $T$.
Starting from the (uncompacted) suffix trie of~$T$, the suffix tree is obtained
by suppressing all internal nodes with exactly one child: these nodes become
\emph{implicit}, corresponding to points on edges of the suffix tree rather than
explicit nodes. The remaining nodes---the root, all branching internal nodes, and
all leaves---are called \emph{explicit}. This compactification leads to $\cO(n)$ explicit nodes, for a text $T$ of length $n$, and therefore to an $\cO(n)$-space index.

Despite the considerable attention given to suffix trees, the following intuition appears to have gone unnoticed. The structural property that allows suffix tries to be compacted from up to $\Theta(n^2)$ to $\Theta(n)$ nodes---namely, that many nodes of the trie share the exact same set of leaf descendants along unary paths---also bounds the number of nodes whose depth is smaller than their number of leaf descendants. Our central contribution is an elementary yet powerful combinatorial lemma formalizing this intuition: in any rooted tree on $L$ leaves, the number of nodes whose depth is smaller than the number of their leaf descendants is at most~$L$. Applied to the suffix trie of a given string, this yields the following interpretation.

\begin{restatable}{theorem}{stringthm}
\label{thm:string}
    In any string $T$ of length $n$, the number of non-empty substrings of $T$ whose length is smaller than their number of occurrences in $T$ is at most $n$.
\end{restatable}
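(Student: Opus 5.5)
We derive \cref{thm:string} from the combinatorial tree lemma described in the introduction: in any rooted tree with $L$ leaves, the number of nodes whose depth is smaller than their number of leaf descendants is at most $L$. Since that lemma is only announced here, we also sketch how to prove it.

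\textbf{Reduction to the suffix trie.} Append a unique terminator $\$$ to $T$ and consider the suffix trie of $T\$$. It has exactly $n+1$ leaves, one per suffix $T[i\dd n]\$$ for $i\in[1,n+1]$. Every non-empty substring $S$ of $T$ corresponds to a unique non-root node $v_S$ of depth $|S|$. The leaves below $v_S$ are exactly the suffixes having $S$ as a prefix, and none of them is the lone leaf $\$$, since $S$ contains no $\$$. Hence the number of leaf descendants of $v_S$ equals $|\Occ_T(S)|$.

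Applying the lemma with $L=n+1$ gives at most $n+1$ nodes with depth smaller than leaf count. The root, of depth $0$ with $n+1\ge 1$ leaves, is one of them. It does not correspond to a non-empty substring, so at most $n$ non-empty substrings satisfy $|S|<|\Occ_T(S)|$. Nodes that are not substring nodes, such as those on paths ending in $\$$, only cause overcounting, which is harmless.

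\textbf{Proving the tree lemma.} Call a node $v$ \emph{heavy} if $\mathrm{depth}(v)<\ell(v)$, where $\ell(v)$ is its number of leaf descendants.

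\begin{itemize}
\item The set of heavy nodes is closed under taking parents: moving up decreases depth by one and does not decrease $\ell$. So the heavy nodes form a rooted subtree $H$.
\item Charge each heavy node to leaves. For each leaf $x$, walk up from $x$ and consider the heavy ancestors $u$ of $x$.
\item Counting pairs (heavy node $u$, leaf $x$ below $u$) gives $\sum_{u\in H}\ell(u)$, which is at least $\sum_{u\in H}(\mathrm{depth}(u)+1)$.
\item This bound alone is not enough. A cleaner route is a \emph{distinguished-leaf} assignment: injectively map each heavy node $u$ to a leaf below it.
\end{itemize}

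The assignment goes by induction on the tree. At node $u$ with children $c_1,\dots,c_k$, order the leaves below $u$ consistently, say in DFS order. Assign to $u$ the leaf at rank $\mathrm{depth}(u)+1$ among its descendants; this leaf exists because $\ell(u)\ge \mathrm{depth}(u)+1$.

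\textbf{Main obstacle: injectivity.} Suppose $u$ is a proper ancestor of $w$ and both are assigned the same leaf $x$. In DFS order, the leaves of $w$ form a contiguous block inside the leaves of $u$. Along a unary chain, the rank-$(d+1)$ rule then gives distinct leaves at distinct depths: the leaf sets coincide while the depths differ. In branching situations, however, the ranks could collide. If this rank-based rule fails, the fallback is a direct counting induction. Define $f(v)$ as the number of heavy nodes in the subtree of $v$, and prove
\[
f(v)\le \ell(v)-\min\bigl(\ell(v),\,\mathrm{depth}(v)\bigr).
\]
The inequality is trivial at leaves. Summing over children works because the deficit $\min(\ell(c),\mathrm{depth}(v)+1)$ of each child exceeds the deficit of $v$, leaving room to add $v$ itself when $v$ is heavy. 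Verifying this inequality in the case split ($v$ heavy or not, children saturated or not) is where I expect the real work. At the root, depth $0$ yields $f\le L$.
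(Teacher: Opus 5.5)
If you take the tree lemma as given (the paper proves it as \Cref{lem:tree}, and its proof of the theorem is a one-line application), your reduction is complete and is essentially the paper's. Using the suffix trie of $T\$$ with $n+1$ leaves and then discarding the root is a harmless variant of the paper's trie of $T[1\dd n]\$,\dots,T[n]\$$, which has $n$ leaves.

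The gap is in the part you chose to prove yourself. Your first idea, mapping each heavy $u$ to the $(\mathrm{depth}(u)+1)$th leaf below it in DFS order, is exactly the paper's injection. However, you left injectivity open and suggested that ``in branching situations the ranks could collide.'' They cannot, and the contiguity you already observed settles it. Suppose $u$ is a proper ancestor of $w$, and $k\ge 0$ leaves of $u$ precede the block of leaves of $w$. Then the leaf assigned to $w$ sits at position $k+\mathrm{depth}(w)+1\ge \mathrm{depth}(w)+1>\mathrm{depth}(u)+1$ in $u$'s list, so it is not the leaf assigned to $u$. Branching only contributes to the offset $k\ge 0$, which can only push the two positions further apart. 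You also need the case you omitted: if neither node is an ancestor of the other, their leaf sets are disjoint.

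Your fallback is also salvageable. The inequality $f(v)\le \ell(v)-\min(\ell(v),\mathrm{depth}(v))$ is true and does give $f\le L$ at the root. But the reason you give for the inductive step is false: the deficit of a single child need not exceed that of $v$. A one-leaf child has deficit $1$, whereas $v$ at depth $d\ge 2$ with at least $d$ leaves has deficit $d$. What the step needs is a bound on the \emph{sum}. Writing $d=\mathrm{depth}(v)$, one must show $\sum_c \min(\ell(c),d+1)\ge \min(\ell(v),d)$, with strict inequality when $v$ is heavy. If $v$ is heavy, there are two cases. Either some child has $\ell(c)\ge d+1$ and alone contributes $d+1$, or every child has $\ell(c)\le d$ and the sum equals $\ell(v)\ge d+1$. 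If $v$ is not heavy, every $\ell(c)\le \ell(v)\le d$, so the sum is exactly $\ell(v)$. With either fix the argument is complete; as submitted, the one nontrivial step of the lemma is left unresolved.
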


We prove that this bound is asymptotically tight for $T = (a_1 a_2 \dots a_r)^r$ over alphabet $\Sigma = \{a_1, a_2, \dots, a_r\}$.
The algorithmic implications of \Cref{thm:string} are far-reaching. \Cref{thm:string} leads to space-efficient data structures with
\emph{optimal query times} for string \emph{counting} problems via the following framework: store the counts for the at most~$n$ ``frequent'' substrings of $T$ in a preprocessing step, and use a \emph{reporting} query to count for the ``infrequent'' substrings.
Our framework acts as a convenient black box,
lifting indexes with reporting time $\cO(m+|\Occ_T(P)|)$ to counting indexes with query time $\cO(m)$. More formally, we prove the following theorem.

\begin{restatable}[Algorithmic Framework]{theorem}{thmDSBlackbox}
\label{thm:ds-blackbox}
    Let $D$ be a data structure constructed over a text $T$ of length~$n$, so 
    that, when queried for a pattern $P$ of length $m$, it returns an answer of total size $w$ in ${\cO(m+\setsize{\Occ_T(P)} + w)}$ time. Then, there is a data structure $\hat{D}$ of size $\cO(\setsize{D}+w\cdot n)$ that answers the same queries for $P$ in time $\cO(m + w)$.
\end{restatable}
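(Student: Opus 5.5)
We follow the framework outlined in the introduction: precompute the answers for the ``frequent'' patterns and fall back on $D$ for all others. Call a non-empty substring $S$ of $T$ \emph{frequent} if $|S| < \setsize{\Occ_T(S)}$. By \cref{thm:string}, $T$ has at most $n$ frequent substrings. For each of them, we run $D$ once at preprocessing time and store the returned answer, whose size is at most $w$. This costs $\cO(w\cdot n)$ space in total, on top of $D$.

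To locate a stored answer within $\cO(m)$ time, we index the frequent substrings using the suffix tree of $T$, which takes $\cO(n)$ space. Frequent substrings are closed under taking prefixes: if $S'$ is a prefix of $S$, then $|S'|\le |S|$ and $\setsize{\Occ_T(S')}\ge \setsize{\Occ_T(S)}$. Hence the frequent substrings form a rooted subtree of the suffix trie with at most $n$ nodes. We therefore store them explicitly as a trie $\mathcal{F}$ of $\cO(n)$ nodes, with the children of each node held in a perfect hash table so that one character is followed in $\cO(1)$ time (assuming hashing or an integer alphabet, as is standard). We attach the precomputed answer to each node of $\mathcal{F}$.

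Alternatively, we can avoid hashing by walking down the suffix tree. Along a single suffix-tree edge the number of occurrences is constant while the length grows. So the frequent loci on that edge form a prefix of the edge, and it suffices to store, for each edge, a threshold together with an array of answers indexed by offset. The total size of these arrays equals the number of frequent substrings, which is at most $n$.

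To answer a query for $P$, we spell $P$ in $\mathcal{F}$ (or in the suffix tree) in $\cO(m)$ time. If $P$ ends at a node of $\mathcal{F}$, then $P$ is frequent and we output its stored answer in $\cO(m+w)$ time. Otherwise $P$ either does not occur in $T$, which the suffix-tree walk detects in $\cO(m)$ time and handles in the same way as in $D$, or it is infrequent, so that $\setsize{\Occ_T(P)}\le m$. In the latter case we query $D$ directly, which takes $\cO(m+\setsize{\Occ_T(P)}+w)=\cO(m+w)$ time.

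The main subtlety lies in the space bound. The parameter $w$ should be read as an upper bound on the answer size, for instance a constant for counting, and the stored answers of the at most $n$ frequent substrings total $\cO(w\cdot n)$. Beyond that, the only point that needs care is ensuring that the membership test for frequent patterns runs in $\cO(m)$ time within $\cO(n)$ extra space. The prefix-closure and per-edge threshold argument above is the step to check, and it holds because occurrence counts are constant along each suffix-tree edge.
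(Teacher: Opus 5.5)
Your proof is correct and follows the paper's argument. Both store $D$'s answers for the at most $n$ frequent substrings given by \cref{thm:string}, locate $P$ in $\cO(m)$ time, and otherwise fall back on $D$, using that an infrequent $P$ satisfies $\setsize{\Occ_T(P)}\le m$. The paper simply makes the frequent implicit suffix-tree nodes explicit, rather than building a separate prefix-closed trie or per-edge arrays, so the difference is only in implementation. One small correction: your ``hash-free'' variant still needs the per-node child dictionaries of \cref{lem:dict} to walk the suffix tree in $\cO(m)$ time, so it does not actually avoid hashing.
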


Our framework can be easily extended for efficiently addressing the problem where each occurrence of a pattern has some associated value, and we wish to aggregate these values for all sought occurrences under an associative operator.
See~\cref{cor:ds-stat} for a formal statement.

As a first application, we consider the \emph{consecutive occurrences indexing} problem~\cite{DBLP:journals/algorithmica/BilleG14,
DBLP:journals/tcs/BilleGPRS22,
DBLP:journals/algorithmica/BilleGPS23,
DBLP:journals/mst/BilleGVV14,
DBLP:journals/is/GawrychowskiGSS26,
DBLP:journals/algorithmica/IliopoulosR09,
DBLP:journals/tcs/NavarroT16,
DBLP:conf/esa/Radoszewski23}.
In its most basic variant, we are asked to preprocess $T$ and an interval $[\alpha,\beta]$ into an index, such that, for any pattern $P$, we can report the consecutive occurrences of $P$ in $T$ with a distance in $[\alpha,\beta]$.
While a more general $\cO(n\log n)$-space index (that also gets the interval $[\alpha,\beta]$ at query time)
with optimal reporting queries exists due to Navarro and Thankachan~\cite{DBLP:journals/tcs/NavarroT16}, the counting version with fixed bounded-gap constraints has no known $\cO(n \cdot \polylog(n))$-space solution with $\cO(m)$-time queries.
A black-box application of \Cref{thm:ds-blackbox} to the main result of~\cite{DBLP:journals/tcs/NavarroT16} yields an $\cO(n\log n)$-space index with optimal $\cO(m)$-time counting queries. We improve the space complexity to $\cO(n)$.
We also present other implications of \Cref{thm:ds-blackbox} in data structures for indexing problems, including problems involving weighted sequences~\cite{DBLP:journals/iandc/BartonK0PR20,DBLP:journals/jea/Charalampopoulos20,DBLP:conf/icde/Gabory0LPZ24} and strings with utilities~\cite{DBLP:conf/sdm/0001CCGGLPP24,DBLP:conf/icde/BernardiniCCGGLPP25}.

For a class of indexing problems in the framework of \Cref{cor:ds-stat}, including  indexing strings with utilities, we further show that we can \emph{construct} an $\cO(n)$-space index  in $\cOt(n\sqrt{n})$ time by using \Cref{thm:string} and exploiting the periodic structure of $T$.
Our data structure improves over the state-of-the-art index for strings with utilities~\cite{DBLP:conf/icde/BernardiniCCGGLPP25} \emph{polynomially} in both the preprocessing time (from $\cO(n^2)$ to $\cOt(n\sqrt{n})$) as well as the space--query time trade-off (from $\cO(n^2/\tau)$ space and $\cO(m+\tau)$ query time to $\cO(n)$ space and $\cO(m)$ query time).

As a bonus, we unify the treatment of several indexing problems, recovering known results for their counting variants by applying our algorithmic framework to their reporting variants; we do so for indexing strings with properties~\cite{DBLP:journals/tcs/AmirCIKZ08,DBLP:journals/iandc/BartonK0PR20}, for indexing for overlapping or non-overlapping occurrences, and for covers~\cite{DBLP:conf/icalp/BrodalLOP02,DBLP:conf/isaac/CohenP09,DBLP:conf/esa/Radoszewski23}.

Notably, our combinatorial lemma on trees can be generalized to yield the following two corollaries on strings, which have further consequences.

\begin{restatable}{corollary}{stringcor}\label{cor:general}
    In any string $T$ of length $n$, for any integer $
    \tau \in [1,n]$, the number of non-empty substrings~$U$ of $T$ for which $\setsize{U} \cdot \tau < |\Occ_T(U)|$ is at most $n/\tau$.  
\end{restatable}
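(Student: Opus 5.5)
We will prove a generalized tree lemma and then apply it to the suffix trie of $T$, exactly as for \Cref{thm:string}. \textbf{Tree lemma (generalized):} in any rooted tree with $L$ leaves and any integer $\tau\ge 1$, the number of nodes $v$ with $\tau\cdot\mathrm{depth}(v) < \ell(v)$ is at most $L/\tau$, where $\ell(v)$ is the number of leaf descendants of $v$. Call such nodes \emph{heavy*}. Apply this to the suffix trie of $T\$$ (with a unique terminator, so that there are exactly $n$ leaves, one per suffix). A non-empty substring $U$ corresponds to a node of depth $|U|$ whose leaf count is $|\Occ_T(U)|$. Hence the corollary follows with $L=n$; the root has depth $0$ and is excluded since $U$ is non-empty.

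\textbf{Proof of the tree lemma by charging.} Heavy* nodes are closed under taking ancestors, excluding the root: if $v$ is heavy* and $u$ is a non-root ancestor, then $\tau\,\mathrm{depth}(u)\le\tau\,\mathrm{depth}(v)<\ell(v)\le\ell(u)$. So the heavy* non-root nodes form a union of root-paths, i.e.\ a subtree $H$ hanging below the root.

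Let $F$ be the set of maximal (deepest) heavy* nodes, the leaves of $H$. Every heavy* node lies on the path from the root to some $f\in F$. Assign each heavy* node $u$ to one descendant $f\in F$, chosen as the one whose leaf set is leftmost. The count then becomes a charging argument: the nodes of $H$ on a root path to $f$ number $\mathrm{depth}(f)<\ell(f)/\tau$. Since the nodes of $F$ are pairwise non-ancestral, their leaf sets are disjoint, so naively $|H|\le\sum_{f\in F}\mathrm{depth}(f)<\sum_f \ell(f)/\tau\le L/\tau$. This overcounts shared ancestors, but overcounting only helps an upper bound, since the union of the root-to-$f$ paths is $H$ itself. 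Thus
\[
|H|\le \sum_{f\in F}\mathrm{depth}(f) < \frac{1}{\tau}\sum_{f\in F}\ell(f)\le \frac{L}{\tau}.
\]
Note that $|F|\ge1$ only when $H$ is nonempty, and the strict inequality is harmless.

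The main subtlety is the union bound $|H|\le\sum_f \mathrm{depth}(f)$. Each node of $H$ lies on at least one root-to-$f$ path; that path contains exactly $\mathrm{depth}(f)$ non-root nodes. The other point to check is the leaf correspondence in the suffix trie: positions where $U$ occurs correspond bijectively to suffixes having $U$ as a prefix, which are the leaves below the node of $U$. Both are routine, and setting $\tau=1$ recovers \Cref{thm:string} as a sanity check.
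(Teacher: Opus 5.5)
Your proof is correct, but it takes a different route from the paper. The paper proves \Cref{lem:tree-general} with an explicit disjoint-charging injection. Each qualifying node $v$ is assigned the block of leaves of $\mathcal{L}(v)$ with in-order indices $(\textsf{depth}(v)-1)\tau+1,\dots,\textsf{depth}(v)\tau$. A two-case analysis then shows these blocks are pairwise disjoint: either the subtrees are disjoint, or one node is an ancestor whose leaf list contains the other's at a nonnegative offset $k$. You instead use monotonicity. Moving to an ancestor can only decrease the depth and increase the leaf count, so the qualifying non-root nodes form an ancestor-closed set $H$, covered by the root paths to its maximal elements $F$. Since $F$ is an antichain, its leaf sets are disjoint, which gives $|H|\le\sum_{f\in F}\textsf{depth}(f)<\frac{1}{\tau}\sum_{f\in F}\ell(f)\le L/\tau$.

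Your argument buys several things. It is shorter and needs no in-order ranks or case analysis. It works verbatim for any real $\tau>0$, whereas the paper's blocks of exactly $\tau$ leaves need integral $\tau$. It also gives a strict inequality whenever $H\neq\emptyset$, so for $\tau=1$ the root can be added back to recover \Cref{lem:tree}. The paper's argument, in exchange, gives an explicit assignment of $\tau$ private occurrences to each frequent substring, the same scheme it uses for \Cref{lem:tree}.

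One small correction: your tree lemma as literally stated also counts the root, and then it is false for $\tau>1$. For example, a root with a single child that has three leaf children, with $\tau=2$, has two qualifying nodes against a bound of $3/2$. State the lemma for nodes of positive depth, which is exactly what your proof establishes and what the application uses. Similarly, the suffix trie of $T\$$ literally has $n+1$ leaves, counting the suffix $\$$. Either use the trie of the $n$ strings $T[i\dd n]\$$ as the paper does, or note that this extra leaf lies below no qualifying non-root node.
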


\begin{restatable}{corollary}{inflatecor}\label{cor:inflate}
    In any string $T$ of length $n$, for any integer $
    \tau \in [1,n]$, the number of non-empty substrings~$U$ of $T$ for which $\setsize{U} < \tau \cdot |\Occ_T(U)|$ is at most $n\tau$.
\end{restatable}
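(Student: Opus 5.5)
The plan is to derive \cref{cor:inflate} from the combinatorial lemma on trees stated in the introduction: in any rooted tree with $L$ leaves, the number of nodes whose depth is smaller than their number of leaf descendants is at most $L$. I will not prove a new tree lemma. Instead I will apply that lemma to an ``inflated'' version of the suffix trie, in which every leaf is replaced by $\tau$ leaves.

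\textbf{Setting up the suffix trie.} Append a fresh terminal symbol $\$$ to $T$ and take the suffix trie $\mathcal{T}$ of $T\$$. Its leaves correspond bijectively to the $n+1$ suffixes of $T\$$; the leaf for $T[n+1\dd n]\$ = \$$ hangs directly off the root. Each non-empty substring $U$ of $T$ corresponds to a unique non-root node $v_U$ of $\mathcal{T}$ with $\mathrm{depth}(v_U)=|U|$. Because $U$ contains no $\$$, the leaves below $v_U$ are exactly the suffixes $T[i\dd n]\$$ with $i\in\Occ_T(U)$, so the number of leaf descendants of $v_U$ equals $|\Occ_T(U)|$.

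\textbf{Inflation step.} Build $\mathcal{T}'$ from $\mathcal{T}$ by attaching $\tau$ new child leaves to every leaf of $\mathcal{T}$.
\begin{itemize}
\item The old leaves become internal nodes.
\item Every original node keeps its depth.
\item Every original node $v$ now has exactly $\tau$ times as many leaf descendants as before. In particular, $v_U$ has $\tau\cdot|\Occ_T(U)|$ leaf descendants in $\mathcal{T}'$.
\item The number of leaves becomes $L'=\tau(n+1)$.
\end{itemize}
Hence every substring $U$ with $|U|<\tau\cdot|\Occ_T(U)|$ gives a distinct node $v_U$ of $\mathcal{T}'$ whose depth is smaller than its number of leaf descendants. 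The tree lemma then bounds the number of such substrings by the number of all such nodes of $\mathcal{T}'$, which is at most $\tau(n+1)$.

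\textbf{Main obstacle: the off-by-one term.} The crude count above gives $\tau(n+1)$ rather than $n\tau$. I will remove the extra $\tau$ by discarding the $\$$-leaf of $\mathcal{T}$ before inflating; equivalently, I will inflate only the $n$ leaves corresponding to non-empty suffixes of $T$. This removal affects only the root: the root is the only ancestor of the $\$$-leaf, since that leaf hangs directly off the root. Every node $v_U$ keeps its depth and its $\tau\cdot|\Occ_T(U)|$ leaf descendants, because no occurrence of a non-empty $U$ is lost. The resulting tree has $L=n\tau$ leaves, and the tree lemma yields the claimed bound of $n\tau$.

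Two checks are needed here:
\begin{itemize}
\item The nodes $v_U$ are distinct for distinct $U$, so substrings are not double-counted.
\item The extra nodes in $\mathcal{T}'$ (the root, the new leaves, and nodes whose labels contain $\$$) can only add to the lemma's count. We count only a subset of the nodes the lemma bounds, so the inequality is preserved.
\end{itemize}
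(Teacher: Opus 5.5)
Your proof is correct. It uses the same two ingredients as the paper's proof: \Cref{lem:tree}, and the idea of multiplying every leaf count by $\tau$ while keeping depths fixed. The difference is where the inflation happens. You do it on the tree. The paper does it on the string: it sets $S=(T\$)^{\tau}$, observes $|\Occ_S(U)|=\tau\cdot|\Occ_T(U)|$ for every substring $U$ of $T$, and applies \Cref{lem:tree} to the suffix trie of $S$ with the $\tau$ suffixes starting with $\$$ removed.

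The string-level route keeps the argument in the language of occurrences in a concrete text. However, it must ensure that this trie really has $n\tau$ leaves, with $\tau\cdot|\Occ_T(U)|$ of them below the locus of $U$. With a single repeated separator this fails as written: $T[i\dd n]\$$ is a proper prefix of $T[i\dd n]\$T\$$, so it does not end at a leaf. One needs distinct separators or an extra unique end marker.

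Your tree surgery produces exactly the required leaf multiplicities by construction, so this issue never arises. Your removal of the $\$$-leaf is also handled correctly. You could skip that step by starting, as in the proof of \Cref{thm:string}, from the trie of $\{T[1\dd n]\$,\dots,T[n]\$\}$.
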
  

\Cref{cor:general} enables us to transform state-of-the-art reporting data structures (e.g., \cite{DBLP:journals/algorithmica/GangulyST20,DBLP:journals/tcs/GibneyMT25}) that occupy less than $\cO(n)$ space (say, they use $o(n\log n)$ bits) to support counting queries.
Roughly speaking, given an $s$-space data structure, we set $\tau = n/s$, precompute and store the answers for at most $n/\tau = s$ substrings $U$ of $T$ (the ones for which $\setsize{U} \cdot \tau < \setsize{\Occ_T(U)}$), and spend $\cO(m \cdot n/s)$ time at query time for any given length-$m$ pattern. We use this idea to construct a compact data structure for counting non-overlapping occurrences in near-optimal time, which was mentioned in the literature (cf.~\cite{DBLP:journals/tcs/GibneyMT25}) as a ``challenging'' open problem.

Furthermore, \Cref{cor:inflate} enables us to extend the application of our framework to the \emph{internal pattern matching} (IPM) model~\cite{DBLP:journals/algorithmica/Charalampopoulos21,DBLP:journals/siamcomp/KociumakaRRW24}, where the pattern is specified as a fragment of $T$. We use this idea to construct an IPM data structure for indexing strings with utilities.

\paragraph{Paper organization.} In \Cref{sec:lemma}, we prove the combinatorial lemma and \Cref{thm:string}.
In \Cref{sec:framework}, we prove \Cref{thm:ds-blackbox} and present the algorithmic framework.
In \Cref{sec:headline}, we present the headline applications of our framework and demonstrate the versatility of our approach by recovering several known results under our framework. In \Cref{sec:general}, we prove \Cref{cor:general} and showcase an application.
In \Cref{sec:internal}, we prove \Cref{cor:inflate} and showcase the IPM index for strings with utilities as an application.
In \Cref{sec:construct}, we present a construction algorithm for a class of indexing problems, including indexing strings with utilities. 

\section{The Combinatorial Lemma}\label{sec:lemma}

\begin{lemma}\label{lem:tree}
For any rooted tree $\mathcal{T}$ with $L$ leaves over a set $\mathcal{V}$ of nodes, we have
\[
|\{v\in \mathcal{V} \mid \textsf{depth}(v) < |\mathcal{L}(v)|\}|\leq L,
\]
where $\textsf{depth}(v)$ is the depth of node $v$ and $\mathcal{L}(v)$ is the set of leaf descendants of $v$.
\end{lemma}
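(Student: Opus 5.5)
A charging argument should work: each leaf $\ell$ is charged to the qualifying nodes on its root-to-leaf path, every qualifying node receives total charge at least $1$, and no leaf gives out more than $1$. Summing charges then bounds the number of qualifying nodes by $L$.

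Call a node $v$ \emph{heavy} if $\textsf{depth}(v) < |\mathcal{L}(v)|$, with the root at depth $0$. Each heavy $v$ distributes one unit of charge evenly over its leaf descendants, so each $\ell\in\mathcal{L}(v)$ receives $1/|\mathcal{L}(v)|$ from $v$. The total charge distributed equals the number of heavy nodes. It therefore suffices to show that every leaf $\ell$ receives at most $1$ in total, that is,
\[
\sum_{v \text{ heavy},\ v \text{ ancestor of } \ell} \frac{1}{|\mathcal{L}(v)|} \le 1 ,
\]
where "ancestor" includes $\ell$ itself.

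To bound this sum, fix $\ell$ and let $v_0,v_1,\dots,v_k=\ell$ be the root-to-leaf path, so $\textsf{depth}(v_d)=d$. Leaf-descendant counts are non-increasing along the path. If $v_d$ is heavy, then $|\mathcal{L}(v_d)|\ge d+1$, so its contribution is at most $1/(d+1)$. This gives only a harmonic bound, which is too weak, so the main obstacle is to exploit more structure.

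The way I would handle it is to look at where the heavy nodes sit. Let $d^*$ be the largest depth on the path at which $v_{d^*}$ is heavy. Then $|\mathcal{L}(v_d)|\ge|\mathcal{L}(v_{d^*})|\ge d^*+1$ for every $d\le d^*$. Only the $d^*+1$ nodes $v_0,\dots,v_{d^*}$ can be heavy, and each contributes at most $1/(d^*+1)$, so the total charge to $\ell$ is at most $(d^*+1)\cdot\frac{1}{d^*+1}=1$. This uses only monotonicity of $|\mathcal{L}(\cdot)|$ along ancestor chains. If no node on the path is heavy, the charge is $0$. (The root is heavy whenever $L\ge1$, but that case needs no special treatment.)

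Combining the two steps, the number of heavy nodes equals the total charge, which is $\sum_\ell(\text{charge to }\ell)\le L$. This proves \cref{lem:tree}.
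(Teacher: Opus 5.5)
Your proof is correct, but it takes a different route from the paper's. The paper builds an explicit injection from heavy nodes to leaves. Ordering each subtree's leaves by in-order rank, it maps a heavy node $v$ to its $(\textsf{depth}(v)+1)$th leaf descendant. Injectivity holds for two reasons: for a strict ancestor $u$ of $v$, the image of $v$ sits at position at least $\textsf{depth}(v)+1>\textsf{depth}(u)+1$ among the leaves of $u$; and incomparable nodes have disjoint leaf sets. You replace the injection by a fractional double count, and your key step is sound. The deepest heavy node $v_{d^*}$ on the path to $\ell$ has at least $d^*+1$ leaves, hence so does every ancestor, so the at most $d^*+1$ heavy nodes charging $\ell$ give it at most $1$ in total. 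Heaviness also guarantees $|\mathcal{L}(v)|\ge 1$, so the charges are well defined. Your argument needs no ordering of children and uses only monotonicity of $|\mathcal{L}(\cdot)|$ along ancestor chains. It also adapts directly to the paper's generalizations. For \Cref{lem:tree-general}, the at most $d^*$ heavy nodes of positive depth each contribute less than $1/(\tau d^*)$, which gives $|\mathcal{S}|\le L/\tau$, even for non-integer $\tau$. For the tree analogue of \Cref{cor:inflate}, write $d^*=q\tau+r$ with $0\le r<\tau$; the charge to a leaf is then at most $(d^*+1)/(q+1)\le\tau$, which gives the bound $L\tau$ without the paper's detour through $(T\$)^{\tau}$. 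The paper's injection, in turn, gives an explicit witness leaf for each heavy node (or, in \Cref{lem:tree-general}, a disjoint block of $\tau$ leaves), which is more concrete than an averaging bound.
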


\begin{proof}
Let $\mathcal{L}(\mathcal{T})$ be the set of leaves of $\mathcal{T}$.
Further, let
$\mathcal{S} = \{v \in \mathcal{V} \mid \textsf{depth}(v) < |\mathcal{L}(v)|\}$.
We show that $|\mathcal{S}| \le L$ by constructing an \emph{injective} mapping $\phi: \mathcal{S} \to \mathcal{L}(\mathcal{T})$.

For any node $v \in \mathcal{S}$, the condition $\textsf{depth}(v) < |\mathcal{L}(v)|$ means that the number of leaf descendants of~$v$ is strictly greater than the depth of $v$.
We say that a node $v$ has in-order rank $i$ if it is the $i$th node visited in an in-order traversal of~$\mathcal{T}$.
Let the leaf descendants of $v$ be $\ell_{v,1}, \ell_{v,2}, \dots, \ell_{v,|\mathcal{L}(v)|}$, in increasing order with respect to their in-order ranks.
The function $\phi: \mathcal{S} \to \mathcal{L}(\mathcal{T})$ maps each node $v \in \mathcal{S}$ to its $(\textsf{depth}(v) + 1)$th leaf descendant according to the above order; i.e.,
$\phi(v) = \ell_{v, \textsf{depth}(v) + 1}$.
Since $v \in \mathcal{S}$, we have $|\mathcal{L}(v)| \ge \textsf{depth}(v) + 1$, ensuring that $\ell_{v, \textsf{depth}(v) + 1}$ exists within the subtree rooted at $v$. We must show that for any $u, v \in \mathcal{S}$ with $u \neq v$, we have $\phi(u) \neq \phi(v)$. We consider the following two cases:

\proofsubparagraph{\boldmath{Case 1: Neither $u$ nor $v$ is an ancestor of the other.}}
In this case, the subtrees rooted at $u$ and $v$ are disjoint. Consequently, their sets of leaf descendants $\mathcal{L}(u)$ and $\mathcal{L}(v)$ are disjoint.
Since $\phi(u) \in \mathcal{L}(u)$ and $\phi(v) \in \mathcal{L}(v)$, it immediately follows that $\phi(u) \neq \phi(v)$.

\proofsubparagraph{\boldmath{Case 2: $u$ is an ancestor of $v$ or vice versa.}}
We assume, without loss of generality, that $u$ is an ancestor of~$v$---the other case is symmetric.
Let $d_u \coloneqq \textsf{depth}(u)$ and $d_v \coloneqq \textsf{depth}(v)$. Since~\(u\) is a strict ancestor of~\(v\), we have \(d_u < d_v\). The leaf $\phi(u)$ is the $(d_u+1)$th leaf descendant of~$u$.
The leaf $\phi(v)$ is the $(d_v+1)$th leaf descendant of $v$.
Because \(v\) is in the subtree of \(u\), the leaf descendants of \(v\) form a contiguous segment within the leaf descendants of \(u\) (ordered by in-order ranks).
Let \(k \ge 0\) be the number of leaf descendants of \(u\) with smaller in-order rank than \(v\).
The index of \(\phi(v)\) among the leaf descendants of \(u\) is \(k + d_v + 1\).
Since \(k \ge 0\) and \(d_u < d_v\), we must have \(d_u + 1 < d_v + 1 \le k + d_v + 1\Rightarrow \phi(u) \neq \phi(v)\).

Since the mapping $\phi: \mathcal{S} \to \mathcal{L}(\mathcal{T})$ is injective, the cardinality of the set $\mathcal{S}$ must be less than or equal to the cardinality of the set $\mathcal{L}(\mathcal{T})$, and hence
$|\mathcal{S}| \le |\mathcal{L}(\mathcal{T})| = L$.
\end{proof}

The following consequence of the above lemma for strings is immediate.

\stringthm*
\begin{proof}
    Apply \Cref{lem:tree} to the trie of $\{T[1\dd n]\$, T[2\dd n]\$, \ldots, T[n]\$\}$, for a unique letter $\$$ that does not occur in~$T$.
    Since, by construction, there is a bijective mapping between the internal nodes of the trie and the substrings of $T$, the claim follows.
\end{proof}

We next show that the bounds in both \cref{lem:tree,thm:string} are asymptotically tight.

\begin{proposition}
\label{prop:string-lower-bound}
For every integer $r\geq 2$, there exists a string $T$ of length $n=r^2$ over an alphabet of size $r$ that has $n - 2\sqrt{n} + 2$ non-empty substrings $U$ satisfying $\setsize{U} < |\Occ_T(U)|$. 
\end{proposition}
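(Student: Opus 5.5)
We take the string $T=(a_1a_2\cdots a_r)^r$ over $\Sigma=\{a_1,\dots,a_r\}$, which has length $n=r^2$. The plan is to count directly, length by length, the distinct substrings $U$ with $\setsize{U}<|\Occ_T(U)|$, and to check that the total is $r^2-2r+2=n-2\sqrt{n}+2$.

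\emph{Structure of the substrings.} Since $T[i]=a_{((i-1)\bmod r)+1}$, every substring is determined by its length $k$ and by the residue class $j\in\{1,\dots,r\}$ of a starting position; here $j$ is the index of its first letter. The $a_i$ are pairwise distinct, so substrings with the same $k$ but different first letters are distinct. Hence, for each $k\ge 1$, the distinct length-$k$ substrings are $U_{j,k}$ for $j=1,\dots,r$. The occurrences of $U_{j,k}$ are exactly the positions $i\equiv j \pmod r$ with $1\le i\le n-k+1$, so
\[
|\Occ_T(U_{j,k})| = \left\lfloor \frac{r^2-k+1-j}{r}\right\rfloor + 1
\]
whenever this quantity is positive.

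\emph{Reducing the condition.} The condition $k<|\Occ_T(U_{j,k})|$, i.e.\ $|\Occ_T(U_{j,k})|\ge k+1$, is equivalent to $r^2-k+1-j\ge kr$. This rewrites as $k(r+1)\le r^2+1-j$. So for fixed $j$, the number of valid lengths $k\ge1$ is $\lfloor (r^2+1-j)/(r+1)\rfloor$. The reduction is legitimate for every such $k$, because the occurrence count is then at least $k+1\ge 2>0$.

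\emph{Evaluating the sum.} Write $r^2+1=(r+1)(r-1)+2$, and split on $j$:
\begin{itemize}
\item for $j\in\{1,2\}$, the floor equals $r-1$;
\item for $3\le j\le r$, the remainder $2-j$ lies in $[2-r,-1]$, so the floor equals $r-2$.
\end{itemize}
Summing over $j$ gives $2(r-1)+(r-2)^2=r^2-2r+2$, which is $n-2\sqrt n+2$, as claimed.

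The only delicate point is the floor/residue bookkeeping: getting the occurrence count right at the boundary, and handling the split $j\le 2$ versus $j\ge 3$. I would sanity-check these against $r=2$, where $T=abab$ and the valid substrings are $a$ and $b$, giving $2=4-4+2$.

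Together with \Cref{thm:string}, this shows that the upper bound of $n$ is asymptotically tight.
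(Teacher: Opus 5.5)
Your proof is correct and uses the same construction $T=(a_1a_2\cdots a_r)^r$ as the paper, together with the same key observation: a substring is determined by its length and its starting residue modulo $r$, so its occurrences form an arithmetic progression. The only difference is the order of counting. You sum over residues $j$ using the exact number $\lfloor (r^2+1-j)/(r+1)\rfloor$ of valid lengths, while the paper sums over lengths: all $r$ substrings of each length $\ell\le r-2$ qualify, exactly two of length $r-1$ do, and none of length $\ell\ge r$ does.
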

\begin{proof}
    Let $T = X^r$, with $X = a_1 a_2 \dots a_r$ over $\Sigma = \{a_1, a_2, \dots, a_r\}$.
    Any substring $U$ of $T$ of length $\ell \le r$ is uniquely determined by its length and its starting position $i$ modulo $r$; such a substring $U$ occurs $r$ times if $(i-1) \bmod r + \ell - 1 \le r$, and $r-1$ times otherwise.
    For any $\ell \le r-2$, all $r$ length-$\ell$ substrings of $XX[1 \dd \ell -1]$ are distinct and have at least $r-1 > \ell$ occurrences in $T$.
    The number of these substrings is $r(r-2)$. 
    Further, both $X[1 \dd |X|-1]$ and $X[2 \dd |X|]$ have $r$ occurrences and length $r-1$, while all other substrings of $T$ of length $r-1$ have $r-1$ occurrences.
    Finally, note that any substring $U$ of length $\ell \ge r$ has at most $r$ occurrences, and thus $\ell \geq |\Occ_T(U)|$. 
    We conclude that the total number of substrings $U$ of $T$ that satisfy $\setsize{U} < |\Occ_T(U)|$ is $r(r-2) + 2 = n - 2\sqrt{n} + 2$.
\end{proof}

\section{Algorithmic Framework}\label{sec:framework}
\Cref{thm:string} enables efficient solutions for a number of interesting data structure problems on strings.
\Cref{thm:ds-blackbox} captures a wide class of such data structure problems where we obtain improvements in a black-box manner.

Below, we use perfect hashing to efficiently navigate the suffix tree.

\begin{lemma}[\cite{DBLP:conf/icalp/Ruzic08}]\label{lem:dict}
A static $\cO(n)$-space dictionary on a set of $n$ keys can be
deterministically constructed in time $\cO(n\log^2 \log n)$, so
that look-up queries to the dictionary take time $\cO(1)$.
\end{lemma}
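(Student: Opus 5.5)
This statement is imported from Ru\v{z}i\'c~\cite{DBLP:conf/icalp/Ruzic08}, and in the paper it is used as a black box. So the honest "proof" is the citation. If I had to reconstruct an argument, I would follow the two-phase structure of deterministic dictionaries in the word RAM: first reduce the universe, then build a two-level (FKS-style) perfect hash table over the reduced universe. Throughout, every hash function must be found deterministically rather than sampled.

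\emph{Phase 1: universe reduction.} The keys are $n$ words drawn from a universe of size $u=2^w$. I would look for a multiplicative hash function $h_a(x)=\lfloor (a x \bmod 2^w)/2^{w-k}\rfloor$ with $k=\cO(\log n)$ that is injective on the key set. Equivalently, the number of colliding pairs must be zero, or at least small enough to be fixed up afterwards. The multiplier $a$ would be chosen bit by bit with the method of conditional expectations. At each step, I keep the bit that does not increase a pessimistic estimator of the number of collisions. Evaluating that estimator naively costs $\cO(n)$ per bit, giving $\cO(nw)$ time, which is too slow. The key trick is to fix bits in blocks and to use word-level parallelism, together with a recursive halving of the key length (signature sorting style), so the total work drops to $\cO(n\,\mathrm{poly}(\log\log n))$. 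After this phase the keys live in a universe of size $n^{\cO(1)}$.

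\emph{Phase 2: polynomial universe.} With keys in $[n^{c}]$, I build the top level of FKS, bucketing the keys so that $\sum_i b_i^2=\cO(n)$ where $b_i$ is the size of bucket $i$. Inside each bucket I then use an injective second-level function into a table of size $\cO(b_i^2)$. Both levels can again be derandomized by conditional expectations over a hash family whose description has $\cO(\log n)$ bits. Because the universe is polynomial, radix sorting makes collision counting linear per candidate block of bits. A query evaluates two hash functions and performs one table probe, so lookups take $\cO(1)$ time and the structure uses $\cO(n)$ words.

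\emph{Main obstacle.} The bottleneck is Phase 1: choosing a universe-reducing function deterministically in $\cO(n\log^2\log n)$ rather than $\cO(n\log n)$ or $\cO(nw)$ time. This is exactly the technical contribution of~\cite{DBLP:conf/icalp/Ruzic08}. For that step I would not attempt a self-contained derivation and would simply invoke that result.
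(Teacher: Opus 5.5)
Your proposal agrees with the paper, which gives no proof of this lemma and simply imports it from Ru\v{z}i\'c~\cite{DBLP:conf/icalp/Ruzic08}, so the citation is the entire argument. Your reconstruction sketch is not needed, but Phase~2 as written is not self-contained either: a linear-time collision count for each of the $\Theta(\log n)$ description bits (or for each candidate value of a block of bits) already totals $\Omega(n\log n)$, which is the classical deterministic bound of Hagerup, Miltersen and Pagh, so beating it on polynomial universes is also something the citation supplies, not just the universe reduction.
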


\thmDSBlackbox*

There are many text indexing problems for which there is a data structure $D$ that reports some specific subset of $\Occ_T(P)$ in $\cO(m+|\mathrm{output}|)$ time.
For any such problem, the counting variant (that is, when we want to find the size of this subset) naturally satisfies the requirements of \Cref{thm:ds-blackbox}.
See \Cref{sec:consecutive,sec:weighted-sequences,sec:other-apps} for applications.

\begin{proof}
  We construct $\hat{D}$ as follows. We first perform the preprocessing of $D$ and then construct the suffix tree of $T$.
  The suffix tree takes $\cO(n)$ space~\cite{DBLP:conf/focs/Weiner73}.
  Each branching node of the suffix tree is
  augmented with a dictionary that supports $\cO(1)$-time access of the edges based on the first letter (key) of their label using \Cref{lem:dict}.
  The total size of the dictionaries is $\cO(n)$.
  
  In the suffix tree, we find the implicit or explicit nodes that correspond to ``frequent'' substrings \ie{those that occur more often than they are long} as follows. The \emph{length} of a substring is the string-depth of its associated node. The \emph{number of occurrences} is the number of leaf descendants.
  Both quantities can be computed simultaneously for every node using a bottom-up traversal.
  For every node corresponding to a frequent substring, we proceed as follows.
  If the node is implicit, we make it explicit. 
  By \Cref{thm:string}, this adds only $\cO(n)$ extra nodes to the tree.
  Then, we query~$D$ using this substring to compute the $w$-size answer and store it with the corresponding node in the suffix tree. This requires $\cO(|D|+w \cdot n)$ space.

  Upon a query $P$, we first locate the corresponding node in our suffix tree in $\cO(m)$ time~\cite{DBLP:conf/focs/Weiner73}. If the node is annotated with a precomputed answer, we output the answer in $\cO(w)$ time. If not, we offload the query to $D$, which outputs the answer in $\cO(m + |\Occ_T(P)| + w)$ time (by the hypothesis).
  As $P$ was not frequent (otherwise, there would have been a precomputed answer), we know that $m \ge |\Occ_T(P)|$ and thus this case also takes $\cO(m + w)$ time.
\end{proof}

Another large category of problems where \Cref{thm:ds-blackbox} can be applied is where the desired output of the query is a summary statistic over some or all of the occurrences of the pattern $P$ in $T$. The following corollary shows that for many of these cases, we can achieve optimal query time, given that a few mild conditions are met. 
Yet, it does not exhaustively characterize these use cases.
In some cases, a more careful, specialized application can yield good results, even if the meta statement below is not applicable (we discuss some such cases in \Cref{sec:other-apps}).

\begin{corollary}[Summary Statistics]
  \label{cor:ds-stat}
  Let $T$ be a text of length $n$
  and $\Stat(T,i,j)$ be an $\cO(1)$-size statistic computable, for any fragment $T[i\dd j]$, in $\cO(1)$ time after preprocessing using space $s_\Stat$.
  Let $\aggOp$ be an associative and commutative aggregation function for the output of $\Stat$ 
  that can combine two values in $\cO(1)$ time.
  There is an $\cO(n+s_\Stat)$-space data structure that,
  given a pattern $P$ of length $m$,
  computes $\aggOp_{i \in \Occ_T(P)} \Stat(T,i, i+\setsize{P}-1)$ in $\cO(m)$ time.
\end{corollary}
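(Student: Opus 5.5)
We reduce \Cref{cor:ds-stat} to \Cref{thm:ds-blackbox} with $w=\cO(1)$. The only work is to build a base structure $D$ that, given $P$ of length $m$, returns the aggregate $\aggOp_{i \in \Occ_T(P)} \Stat(T,i,i+m-1)$ in $\cO(m+|\Occ_T(P)|)$ time and $\cO(n+s_\Stat)$ space.

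For $D$ we take the suffix tree of $T$ with the perfect-hashing dictionaries of \Cref{lem:dict} at branching nodes. Each leaf stores the starting position of its suffix, and we also store the $s_\Stat$-space preprocessing for $\Stat$. A query runs as follows:
\begin{itemize}
\item spell $P$ from the root in $\cO(m)$ time, reaching the (implicit or explicit) locus of $P$;
\item if the locus is implicit, move to the explicit node below it, which has the same leaf descendants;
\item traverse the subtree of that node, whose leaves are exactly $\Occ_T(P)$;
\item for each leaf with position $i$, evaluate $\Stat(T,i,i+m-1)$ in $\cO(1)$ time and fold it into a running aggregate with $\aggOp$ in $\cO(1)$ time.
\end{itemize}
If $P$ does not occur, we return a neutral or "empty" answer, which we encode as a special symbol.

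The subtree traversal is linear in the number of leaves because every internal node of the suffix tree is branching, so the subtree has $\cO(|\Occ_T(P)|)$ nodes. Associativity and commutativity make the fold order irrelevant. Hence $D$ meets the hypothesis of \Cref{thm:ds-blackbox} with $w=\cO(1)$ and $|D|=\cO(n+s_\Stat)$.

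Applying \Cref{thm:ds-blackbox} gives a structure of size $\cO(|D| + w\cdot n)=\cO(n+s_\Stat)$ with $\cO(m)$ query time. Concretely, this stores the precomputed aggregates at the at most $n$ frequent loci guaranteed by \Cref{thm:string}, and falls back to $D$ otherwise, where $|\Occ_T(P)|\le m$.

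The one subtle point is that the statistic depends on $m=|P|$, not only on the node. Two patterns ending at the same explicit node but with different lengths have the same occurrences yet different aggregates. We handle this as in the proof of \Cref{thm:ds-blackbox}: we make every frequent substring's locus explicit and store a separate answer per locus, that is, per (node, string-depth) pair. Distinct frequent substrings are thus distinct loci, and \Cref{thm:string} bounds their number by $n$. A query first checks whether the exact locus of $P$ carries a stored value; otherwise it invokes $D$ with the actual length $m$.

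Preprocessing time is not claimed in the statement, so we may compute each stored answer by querying $D$ directly.
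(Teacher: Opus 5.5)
Your proposal is correct and follows the paper's proof essentially verbatim. You build the base structure $D$ from the suffix tree plus the $\Stat$ preprocessing, report $\Occ_T(P)$ in $\cO(m+|\Occ_T(P)|)$ time while folding the $\Stat$ values with $\aggOp$, and invoke \Cref{thm:ds-blackbox} with $w=\cO(1)$; your per-locus storage remark matches how that proof already makes the implicit loci of frequent substrings explicit.
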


We see an application of this corollary in \Cref{sec:u-string} and many more in \Cref{sec:other-apps}. Furthermore, in \Cref{sec:construct}, we show how to construct the data structure in $\cOt(n \sqrt{n})$ time for a class of problems in this framework.

\begin{proof}
  We will construct a data structure $D$ that fulfills the requirements from \Cref{thm:ds-blackbox}. The result then follows from \Cref{thm:ds-blackbox}.
  
  As preprocessing, we construct the suffix tree of $T$ (exactly as we do in \Cref{thm:ds-blackbox}) and, in addition, do the preprocessing for $\Stat$. The data structure has size $|D|=\cO(n + s_\Stat)$.

  Upon a query $P$, we find all occurrences of $P$ in $T$ in $\cO(m+|\Occ_T(P)|)$ time using the suffix tree of~$T$, and,
  for each occurrence $i\in \Occ_T(P)$, we compute $\Stat(T,i, i+m-1)$ in $\cO(1)$ time each.
  We next combine these $\setsize{\Occ_T(P)}$ values in $\cO(\setsize{\Occ_T(P)})$ time using repeated applications of $\aggOp$, obtaining the desired output in total query time $\cO(m + |\Occ_T(P)|)$.
\end{proof}

\begin{remark}
If the querying algorithm of $D$ from \Cref{thm:ds-blackbox} or the functions $\Stat$ or $\aggOp$ from \Cref{cor:ds-stat} have worse running times, the resulting data structure after applying our results degrades gracefully.
In particular, if they require logarithmic time, then we only get an extra logarithmic factor in our query time as well.
This is easy enough to verify by adding this overhead to the calculations in the proofs of \Cref{thm:ds-blackbox,cor:ds-stat}.
We omitted this to simplify the presentation (most importantly, to avoid overly complex statements).
\end{remark}

\section{Applications}\label{sec:headline}

We first present three applications of our framework, where it either yields a novel counting structure for a widely-studied reporting problem, improves upon the best known data structure for a problem, or enriches an existing index with natural non-trivial query types. Then, we recover several known results, thus unifying the treatment of a number of problems.

\subsection{Counting Consecutive Occurrences}
\label{sec:consecutive} 

Given a string $T$, an integer interval $[\alpha,\beta]$, and a pattern $P$, a pair $(i,j)$ is called an $(\alpha,\beta)$-\emph{consecutive occurrence} of $P$ in $T$ if $i,j\in \Occ_T(P)$, there is no other occurrence of $P$ between $i$ and $j$, and $\alpha \leq j - i \leq \beta$. We denote this set of pairs by $\Occco_T(P)$. Many variants of consecutive occurrences have been studied in the literature~\cite{DBLP:journals/algorithmica/BilleG14,
DBLP:conf/stacs/BilleGLPRS24,
DBLP:journals/tcs/BilleGPRS22,
DBLP:journals/algorithmica/BilleGPS23,
DBLP:journals/mst/BilleGVV14,
DBLP:conf/cpm/BrodalLPS99,
DBLP:journals/is/GawrychowskiGSS26,
DBLP:journals/algorithmica/IliopoulosR09,
DBLP:conf/wads/KellerKL07,
DBLP:conf/soda/Muthukrishnan02,
DBLP:journals/tcs/NavarroT16,
DBLP:conf/esa/Radoszewski23}. 
Here, we consider the counting version of the most basic variant.

\defDSproblem{Consecutive Occurrences Counting (\COC)}
{A text $T$ of length $n$ and an integer interval $[\alpha,\beta]$.}
{Given a pattern $P$ of length $m$, return $|\Occco_T(P)|$.}

Applying \Cref{thm:ds-blackbox} to the $\cO(n \log n)$-space data structure of Navarro and Thankachan~\cite{DBLP:journals/tcs/NavarroT16} for optimal reporting of $\Occco_T(P)$ yields an $\cO(n \log n)$-space data structure with $\cO(m)$-time queries. 
However, we can do better. We rely on the following auxiliary data structure.

\begin{lemma}[\cite{DBLP:conf/isaac/BrodalFGL09}]\label{lem:sort-range}
    Given an array $A$ of $n$ elements, there exists an $\cO(n)$-space data structure, so that given integers $i,k>0$, it  reports the elements of $A[i \dd i+k-1]$ in sorted order in the optimal $\mathcal{O}(k)$ time.
\end{lemma}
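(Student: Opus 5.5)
This is the sorted range reporting result of Brodal et al.; here is how I would prove it.

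The plan is a wavelet-tree-like hierarchical decomposition of $A$ with $\cO(1)$ words per element overall, organized so that sorted output for a range of length $k$ costs $\cO(k)$ rather than $\cO(k\log n)$. First, reduce to rank space: replace each element by its rank among all entries of $A$, so values lie in $[1,n]$, and keep the original values in a separate array for output. Then handle two regimes of $k$ separately.

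\textbf{Long ranges.} Take $k \ge \log^{c} n$ for a suitable constant $c$. Partition $A$ into blocks of length $b=\Theta(\log^{c} n)$. Build a static tree over the blocks in which every node stores the sorted permutation of its span, but only implicitly: at each level the sorted order is encoded as a bit vector marking whether each element comes from the left or the right child, as in a wavelet tree. This costs $\cO(n)$ bits per level and $\cO(n\log n)$ bits, i.e.\ $\cO(n)$ words, in total. A query range $[i, i+k-1]$ splits into a prefix fragment, a suffix fragment, and $\cO(\log n)$ canonical nodes. I would merge the outputs with a priority queue over a lazily-expanded hierarchy, which is the standard Brodal et al.\ trick: it emits the $k$ smallest elements of the union incrementally with amortized $\cO(1)$ cost per output once $k$ dominates the $\cO(\log n)$ canonical nodes. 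The key ingredient is that extracting the next element from a node's implicit sorted order must take $\cO(1)$ time. This is achievable with rank/select on the bit vectors plus a choice of heights that keeps the heap operations over $\cO(\log n)$ streams within budget.

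\textbf{Short ranges.} For $k<\log^{c} n$, all elements of the range lie in at most two consecutive blocks of length $\log^{c}n$. Inside a block, rank-reduce again so that values need only $\cO(\log\log n)$ bits; recursing one more level gives values of $\cO(\log\log\log n)$ bits, which fit many to a word. At the bottom level, precomputed lookup tables over all short sequences return the sorted permutation of any subrange in $\cO(1)$ time per output element. Each recursion level costs $\cO(n)$ words only if every level packs its reduced ranks into $\cO(n)$ bits times a word fraction; this is the main accounting to check. Mapping the results back up through the levels to original values costs $\cO(1)$ per element by storing, per block, its sorted local-to-global arrays.

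The main obstacle I expect is the long-range merge: achieving true $\cO(k)$ rather than $\cO(k\log\log n)$ or $\cO(k+\log n)$. The fix I would try first is to choose the block size so that the $\cO(\log n)$ canonical pieces are charged against $k\ge\log^{c}n$. The heap itself must then be replaced by an atomic-heap or packed structure, with keys of $\cO(\log n)$ bits over only $\mathrm{polylog}(n)$ streams, so that every insert and extract takes $\cO(1)$ time.
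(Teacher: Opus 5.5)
\textbf{Gap in the long-range case.} The paper gives no argument of its own here; it imports the lemma from Brodal et al. Your reconstruction breaks at the step the whole long-range case rests on. You claim that the next element of a canonical node's sorted order can be extracted in $\cO(1)$ time from wavelet-style bit vectors using rank/select. It cannot. In that encoding, the bit vector of a node $v$ only says which child the $t$th smallest element of $v$'s span comes from. To learn \emph{which} element it is (its position or value), you must follow rank operations down to a leaf, which costs $\Theta(\mathrm{height}(v))$ per element, up to $\Theta(\log n)$. Scanning the sorted order sequentially does not help, because that amounts to re-merging the whole subtree. Getting $\cO(1)$ extraction requires explicit sorted arrays at essentially every level, i.e., $\Theta(n\log n)$ words, which is exactly what the lemma must avoid. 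As written, your long case therefore costs up to $\Theta(k\log n)$ time or $\Theta(n\log n)$ space.

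There are two further problems. The merge is justified by ``the standard Brodal et al.\ trick'', which is the very result you are proving, so that step is circular. The short case leaves open how to sort a range spanning many sub-blocks inside a block of length $\log^c n$; you flag the space accounting yourself but do not carry it out.

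\textbf{The missing idea.} Here \emph{all} $k$ elements of the range are output, so nothing needs to be selected or merged across $\cO(\log n)$ canonical pieces. What you need instead is that the keys to be sorted come from a universe polynomial in $k$, so that radix sort runs in linear time. One construction that does this:
\begin{itemize}
\item \emph{Preprocessing.} After rank reduction, for every $\ell=0,1,\ldots,\lceil\log\log n\rceil$, partition $A$ into blocks of length $2^{2^\ell}$. For each element, store its rank inside its level-$\ell$ block, packed into $2^\ell$ bits.
\item \emph{Space.} This costs $\sum_\ell 2^\ell=\cO(\log n)$ bits per element, i.e., $\cO(n)$ words in total, and each field is read in $\cO(1)$ time with shifts and masks.
\item \emph{Query.} Take the smallest $\ell$ with $2^{2^\ell}\ge k$. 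The range then meets at most two level-$\ell$ blocks. Since $2^{2^{\ell-1}}<k$ when $\ell\ge 1$, all local ranks are smaller than $\max\{2,k^2\}$.
\item \emph{Sorting.} Two counting-sort passes in base $k$ sort each of the at most two parts in $\cO(k)$ time. A single merge by global rank then finishes the query in $\cO(k)$ time.
\end{itemize}
No atomic heaps, lookup tables, or rank/select structures are needed.
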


\begin{theorem}
    \label{the:coc}
    There exists an $\cO(n)$-space data structure for \COC with $\cO(m)$ query time.
\end{theorem}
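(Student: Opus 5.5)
We construct an $\cO(n)$-space structure $D$ that computes $|\Occco_T(P)|$ in $\cO(m+|\Occ_T(P)|)$ time. Then we invoke \Cref{thm:ds-blackbox} with $w=\cO(1)$, which yields an $\cO(n)$-space structure with $\cO(m)$-time queries. Unlike the Navarro--Thankachan index, $D$ does not need to report the consecutive pairs themselves. It only needs their number, which lets it afford $\cO(|\Occ_T(P)|)$ time to enumerate all occurrences.

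For $D$, we store the suffix tree of $T$ with the dictionaries of \Cref{lem:dict}, the suffix array $\mathsf{SA}$ of $T$, and the structure of \Cref{lem:sort-range} built over $A=\mathsf{SA}$. All three take $\cO(n)$ space. For a query $P$ we proceed as follows.
\begin{enumerate}
\item Descend the suffix tree in $\cO(m)$ time to the (implicit or explicit) locus of $P$. From its subtree, read off the suffix-array interval $[i, i+k-1]$ with $k=|\Occ_T(P)|$; the interval endpoints can be stored at explicit nodes, and an implicit locus inherits them from the explicit node below it.
\item Query \Cref{lem:sort-range} on $A[i\dd i+k-1]$. This returns the occurrences $p_1<p_2<\dots<p_k$ in sorted order in $\cO(k)$ time.
\item Consecutive occurrences are exactly the pairs $(p_t,p_{t+1})$ for $t<k$, since no occurrence lies strictly between adjacent elements of the sorted list. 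Hence $|\Occco_T(P)| = |\{t<k : \alpha\le p_{t+1}-p_t\le\beta\}|$, which a linear scan computes in $\cO(k)$ time.
\end{enumerate}
The total query time is $\cO(m+|\Occ_T(P)|)$ and the answer has size $w=1$, so $D$ satisfies the hypothesis of \Cref{thm:ds-blackbox}.

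Applying \Cref{thm:ds-blackbox} gives a structure $\hat D$ of size $\cO(|D| + n)=\cO(n)$ with query time $\cO(m)$. By \Cref{thm:string} there are at most $n$ frequent substrings, so $\hat D$ stores one integer for each. For infrequent patterns we have $|\Occ_T(P)|\le m$, so falling back to $D$ costs $\cO(m)$ time.

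The main obstacle is recovering the occurrences in sorted order within $\cO(k)$ time using linear space. Plain sorting would cost $\cO(k\log k)$, and a standard approach would need $\cO(n\log n)$ space. \Cref{lem:sort-range} resolves exactly this point, so the remaining care is only in confirming that the suffix tree locus yields the contiguous suffix-array range needed to apply it.
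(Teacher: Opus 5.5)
Your proposal is correct and follows essentially the paper's proof: the paper builds the same linear-space reporting structure (suffix tree with dictionaries, plus the suffix array preprocessed with \Cref{lem:sort-range}) and scans adjacent sorted occurrences. The only difference is that the paper writes out the frequent/infrequent case split (storing counts for the at most $n$ substrings given by \Cref{thm:string}) directly, whereas you obtain it by invoking \Cref{thm:ds-blackbox} as a black box with $w=1$.
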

\begin{proof}
    We construct the suffix tree of $T$, storing the labels of outgoing edges at each node using \Cref{lem:dict}, and its suffix array preprocessed using \Cref{lem:sort-range}.
    We store the answers for the at most $n$ substrings $U$ of~$T$ with $|\Occ_T(U)|>\setsize{U}$ (\Cref{thm:string}) in the suffix tree.
    The total space is $\cO(n)$.
    Upon a query $P$, if the answer is stored, we return it in $\cO(m)$ time by spelling $P$ in the suffix tree, thus reaching the node where it is stored.
    Otherwise, by using \Cref{lem:sort-range}, we obtain the \emph{sorted} array $A$ of the at most $m$ occurrences of $P$ in~$T$ using the corresponding suffix array range in $\cO(|A|)$ time.
    Consecutive occurrences of $P$ in~$T$ correspond to adjacent elements in $A$; we iterate over $A$ in $\cO(|A|)$ time, incrementing our count whenever $\alpha \le A[k+1] - A[k] \le \beta$. Since $|A| \le m$, the query time is $\cO(m)$.
\end{proof}

Interestingly, we can lift the restriction that \emph{there is no in-between occurrence}, and count \emph{all pairs} $(i,j)$, such that $i,j\in \Occ_T(P)$ and $\alpha \leq j - i \leq \beta$. These are known as \emph{$(\alpha,\beta)$-gapped occurrences}~\cite{DBLP:conf/stacs/BilleGLPRS24}.
For a fixed $j$, the value $(A[j]-A[i])$ decreases as $i$ increases, so the valid indices $i<j$ form a contiguous block. That lets us maintain two pointers (``fingers'') across the whole scan instead of restarting a search for every $j$. The query time is thus still $\cO(m)$.    

\subsection{Strings with Utilities}
\label{sec:u-string}

A \emph{string with utilities} (u-string, in short) is a pair $(T,w)$, consisting of a string $T\in\Sigma^n$ and a function $w$ mapping each $i\in[1,n]$ to a real number called the \emph{utility} of $T[i]$.
The \emph{local utility} of a fragment $T[i\dd j]$ of~$T$ is defined as $\sum_{k=i}^j w[k]$. The \emph{global utility} of a pattern $P\in\Sigma^m$ in $T$ is defined as $\sum_{i\in\Occ_T(P)} \sum_{j=i}^{i+m-1} w[j]$.
See~\cite{DBLP:journals/tkde/GanLFCTY21} for a survey.
The following problem of indexing u-strings was introduced and studied by Bernardini et al.~\cite{DBLP:conf/icde/BernardiniCCGGLPP25}. 

\defDSproblem{Useful String Indexing (\USI)}
{A u-string $(T,w)$ of length $n$.}
{Given a pattern $P$ of length $m$, return its global utility in $T$.}

Bernardini et al.~\cite{DBLP:conf/icde/BernardiniCCGGLPP25}
presented an $\cO(n^2/\tau)$-space data structure
with $\cO(m+\tau)$-time queries. 
We present a substantial improvement:

\begin{theorem}
    \label{the:usi}
    There exists an $\cO(n)$-space data structure for USI with $\cO(m)$ query time.
\end{theorem}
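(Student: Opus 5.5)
The plan is to derive this as a direct instance of \Cref{cor:ds-stat}. We take $\Stat(T,i,j)=\sum_{k=i}^{j} w[k]$, the local utility of the fragment $T[i\dd j]$, and let the aggregation function $\aggOp$ be ordinary addition of reals. Addition is associative and commutative, and combining two values takes $\cO(1)$ time. By definition, the global utility of $P$ is exactly $\aggOp_{i\in\Occ_T(P)}\Stat(T,i,i+m-1)$, which is precisely the quantity computed by the data structure of \Cref{cor:ds-stat}.

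What remains is to show that $\Stat$ has $\cO(1)$ size and can be evaluated in $\cO(1)$ time after preprocessing with $s_\Stat=\cO(n)$ space. For this we store the prefix-sum array $W[0\dd n]$, where $W[0]=0$ and $W[k]=W[k-1]+w[k]$. It takes $\cO(n)$ space and is built in $\cO(n)$ time. Then $\Stat(T,i,j)=W[j]-W[i-1]$ is obtained with a single subtraction, and its output is one real number. Plugging this into \Cref{cor:ds-stat} yields a data structure of size $\cO(n+s_\Stat)=\cO(n)$ that answers a query in $\cO(m)$ time, as claimed.

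No real obstacle is expected, since all of the difficulty was already handled in \Cref{thm:string} and \Cref{thm:ds-blackbox}. The one point to check is the model of computation. The utilities are reals, so we assume, as is standard for u-strings, that arithmetic on them takes unit time and that one value occupies $\cO(1)$ space. Under that assumption the precomputed sums stored at the at most $n$ frequent nodes also take $\cO(n)$ space, which is consistent with the bound $\cO(|D|+w\cdot n)$ of \Cref{thm:ds-blackbox} with $w=1$.
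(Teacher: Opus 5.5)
Your proposal is correct and is essentially the paper's proof. Both instantiate \Cref{cor:ds-stat} with $\Stat$ equal to the local utility of a fragment and $\aggOp$ equal to addition, and use a prefix-sum array of $w$ to evaluate $\Stat$ in $\cO(1)$ time with $\cO(n)$ space.
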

\begin{proof}
    We use \Cref{cor:ds-stat} setting $\Stat(T,i, j)\coloneqq \sum_{k=i}^j w[k]$ and $\aggOp \coloneqq \sum$.
    As adding two numbers takes constant time, it suffices to show how to preprocess $T$ and $w$ to enable $\cO(1)$-time access to $\Stat$.
    This is possible using the prefix sums of $w$.
    That is, in preprocessing, we construct $\Pi_w[i] \coloneqq \sum_{j \le i} w[j]$, for all $i \in [1,n]$.
    Then, for an occurrence $P=T[i\dd j]$, we can compute $\sum_{k=i}^j w[k] = \Pi_w[j]-\Pi_w[i-1]$ in $\cO(1)$ time.
\end{proof}

For simplicity of presentation, we have used the \emph{addition} operator for both the local and the global utility in the above definitions. We will generalize \Cref{the:usi} in \Cref{sec:construct}.

\subsection{Weighted Sequences}
\label{sec:weighted-sequences}

    In a \emph{weighted sequence} $T$ of length $n$ over an alphabet $\Sigma$, 
    in every position $i\in[1,n]$, every letter of the alphabet is 
    associated with a probability of occurrence such that the sum of probabilities at each position equals $1$.
    We denote by $p_i^{(T)}(c)$ the occurrence probability of letter $c$ at position $i$ of $T$.
    The classical pattern matching problem extends naturally to weighted sequences.
    Given a string $P$ called a pattern, a weighted sequence~$T$ called a text, 
    both over an alphabet~$\Sigma$, and a threshold probability $\frac1z$, the task is to find all
    positions~$i$ in~$T$ where $\text{Prob}(P,i):=\prod^{\setsize{P}-1}_{j=0}p_{i+j}^{(T)}(P[j]) \geq \frac1z$; $\text{Prob}(P,i)$ is called the occurrence probability of~$P$ at position $i$ of $T$.
    Barton et al.~\cite{DBLP:journals/iandc/BartonK0PR20} showed that we can construct an index of $\cO(nz)$ words of space
    supporting pattern matching queries in the optimal time $\cO(m+|\Occ^{z}_T(P)|)$, where $\Occ^{z}_T(P)$ is the set of occurrences of $P$ in $T$.
    To the best of our knowledge, there is no known index that can find the heaviest occurrence of $P$ in $T$ among the ones in $\Occ^{z}_T(P)$ or the average probability of the occurrences in $\Occ^{z}_T(P)$ in optimal time.   
    The two corresponding problems are formally defined as follows.

    \defDSproblem{Heaviest $z$-Occurrence (\HO)}
{A weighted sequence $T$ of length $n$ and a threshold $\frac1z$.}
{Given a pattern $P$ of length $m$, return $\max\{\text{Prob}(P,i) \mid i \in \Occ^{z}_T(P)\}$.}

\defDSproblem{Average $z$-Occurrence Probability (\AOP)}
{A weighted sequence $T$ of length $n$ and a threshold $\frac1z$.}
{Given a pattern $P$ of length $m$, return $\sum_{i\in \Occ^{z}_T(P)} \text{Prob}(P,i)/|\Occ^{z}_T(P)|$.}
    
    \begin{theorem}
        \label{the:weighted}
        There exists an $\cO(nz)$-space data structure that can answer $\HO$ or $\AOP$ queries in $\cO(m)$ time.
    \end{theorem}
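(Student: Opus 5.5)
The plan is to invoke \Cref{thm:ds-blackbox} with $w=\cO(1)$, taking as $D$ an augmented version of the index of Barton et al.~\cite{DBLP:journals/iandc/BartonK0PR20}. That index has $\cO(nz)$ space and reports $\Occ^{z}_T(P)$ in $\cO(m+|\Occ^{z}_T(P)|)$ time. The difficulty is that \Cref{thm:ds-blackbox} is stated for a text $T$ of length $n$ and the set $\Occ_T(P)$, whereas here the relevant objects are the $z$-occurrences in a weighted sequence. I would therefore not apply \Cref{thm:ds-blackbox} verbatim. Instead I would reprove it in this setting, using the combinatorial \Cref{lem:tree} directly on the tree underlying Barton et al.'s index.

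Recall how that index is built. It is derived from $\cO(z)$ standard strings, the $z$-estimation, each of length $n$, of total length $N=\cO(nz)$. Every $z$-occurrence of $P$ in $T$ corresponds to an occurrence of $P$ in one of these strings. Their generalized suffix tree, together with a deduplication step (e.g.\ document-listing style), reports each position $i\in\Occ^{z}_T(P)$ exactly once in $\cO(m+|\Occ^{z}_T(P)|)$ time.

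The first step is to apply \Cref{lem:tree} to the generalized suffix trie of the $z$-estimation strings, each terminated by a unique sentinel. This trie has $N=\cO(nz)$ leaves. Hence at most $\cO(nz)$ strings $U$ have more than $|U|$ occurrences in the $z$-estimation. Since $|\Occ^{z}_T(U)|$ is at most this number of occurrences, every $U$ with $|\Occ^{z}_T(U)|>|U|$ is among these $\cO(nz)$ frequent strings.

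For each frequent string I would make its locus explicit in the generalized suffix tree, as in the proof of \Cref{thm:ds-blackbox}, and store two values there: $\max\{\text{Prob}(U,i)\mid i\in\Occ^{z}_T(U)\}$ and the pair $\bigl(\sum_{i}\text{Prob}(U,i),\,|\Occ^{z}_T(U)|\bigr)$. This costs $\cO(nz)$ space in total, with perfect hashing (\Cref{lem:dict}) on the edges.

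To answer a query $P$, I spell it in the tree in $\cO(m)$ time. If $P$ has a stored answer, I return it; for \AOP the answer is the quotient of the stored sum and count. Otherwise $|\Occ^{z}_T(P)|\le m$, so I report all $z$-occurrences in $\cO(m)$ time and must evaluate $\text{Prob}(P,i)$ for each in $\cO(1)$ time.

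Evaluating $\text{Prob}(P,i)$ in constant time is the main obstacle, since $P$ is arbitrary and not a fixed string of $T$. I would handle it through the $z$-estimation. Each reported occurrence comes from a specific string $S_j$ of the $z$-estimation that spells $P$ at position $i$. I would precompute, for each $S_j$, prefix sums of $\log p_k^{(T)}(S_j[k])$, or equivalently prefix products handled with logarithms. Then $\text{Prob}(P,i)$ is a ratio of two stored values, computable in $\cO(1)$ time using $\cO(nz)$ extra space. Zero probabilities cannot arise on a valid occurrence, because every letter along a $z$-occurrence has positive probability.

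If Barton et al.'s construction merges the strings in a way that makes the originating $S_j$ unavailable at query time, I would fall back to storing this information at the suffix-tree leaves, keeping the space at $\cO(nz)$. Finally, combining the $\cO(m)$ values by $\max$, or by summing and dividing by the count, gives $\cO(m)$ query time and $\cO(nz)$ total space.
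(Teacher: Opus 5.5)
Your proposal is correct and is essentially the paper's proof. The paper builds the suffix tree of $Y=S_1\$\dots\$S_{\lfloor z\rfloor}\$$ over the $z$-estimation, evaluates each $\text{Prob}(P,i)$ in $\cO(1)$ time via prefix products as in \Cref{the:usi}, and then applies \Cref{thm:ds-blackbox} verbatim with $Y$ (of length $\cO(nz)$) as the text. This is exactly your ``reproved'' version, because the only hypothesis needed is query time $\cO(m+|\Occ_Y(P)|)$, and $|\Occ^{z}_T(P)|\le|\Occ_Y(P)|$. Your explicit handling of duplicates and of the gap between $\Occ_Y(P)$ and $\Occ^{z}_T(P)$ (which matters for \AOP) is, if anything, more careful than the paper, which glosses over this point.
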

    \begin{proof}
        Barton et al.~\cite{DBLP:journals/iandc/BartonK0PR20} showed the following: for every weighted sequence $T$ of length $n$ and every~$\frac1z$, we can construct a family $\mathcal{F}(T)=(S_j,w_j)_{j=1}^{\lfloor z \rfloor}$ of $\lfloor z \rfloor$ u-strings, each of length~$n$, that carries all the information about the strings occurring in $T$ with probability $p\geq \frac1z$. Formally, $P$ occurs at position $i$ of $T$ with occurrence probability $p\geq \frac1z$ if and only if
        there exists a string $S_j\in \mathcal{F}(T)$ such that $P=S_j[i\dd i+m-1]$, $\prod^{i+m-1}_{k=i}w_j[k]=p$, and $p\geq \frac1z$.
    
        We first construct the suffix tree of $Y=S_1\$\dots \$S_{\lfloor z \rfloor}\$$,
        where $\$$ is a unique letter that does not occur in $S_1,\ldots,S_{\lfloor z \rfloor}$. We also store the labels of outgoing edges at each node using \Cref{lem:dict}.
        This takes $\cO(nz)$ space.
        Using this, upon a query, we can compute $\Occ_Y(P)$ in $\cO(m + |\Occ_Y(P)|)$ time.
        Similarly to the proof of~\Cref{the:usi}, we can also, after preprocessing, access the probability of each of these occurrences in $\cO(1)$ time.
        From this, we directly get an $\cO(nz)$-space data structure to answer $\HO$ or $\AOP$ queries in $\cO(m + |\Occ_Y(P)|)$ time.
        Applying \Cref{thm:ds-blackbox} to this yields the claimed trade-off.
    \end{proof}

\subsection{More Applications}
\label{sec:other-apps}

Beyond the three headline applications described above, the combinatorial insight (\Cref{thm:string}) leads to efficient data structures for a wide range of summary statistics queries.
Many of them have a known efficient data structure, so our main contribution here is that \Cref{thm:ds-blackbox} and \Cref{cor:ds-stat} provide a \emph{simple} and \emph{unified} framework for this set of problems.

For instance, for $\Stat(T,i, j) \coloneqq 1$ and $\aggOp \coloneqq \sum$, we get the classical counting of occurrences of $P$ in~$T$; for $\Stat(T,i, j) \coloneqq \sum^j_{k=i} w[k]$ (for a weight array $w$ of length $n=\setsize{T}$) and $\aggOp \coloneqq \sum$, we solve the \USI problem.
This directly extends to combinations of these statistics \eg{the average weight of the occurrences of the pattern}.

Let us give a few more examples.
Assume that $w$ is a length-$n$ weight array given at preprocessing time and denote by $m$ the length of $P$.

\begin{example}[Heaviest Occurrence]
  Set $\Stat(T,i, j) \coloneqq \sum^j_{k=i} w[k]$ and $\aggOp \coloneqq \max$. \Cref{cor:ds-stat} yields an $\cO(n)$-space data structure answering such queries in $\cO(m)$ time. 
This is a natural follow-up to the \USI problem where the heaviest occurrence might then be a particularly important one for a given application.
Also note that the data structure can be straightforwardly extended to output the position of the heaviest occurrence instead of only the weight.
Other statistics that can be obtained in a similar fashion include the classical problems of finding the first (leftmost) or (rightmost) last occurrence of a given pattern.
\end{example}

\begin{example}[Higher Moments]
  Let $r \ge 0$. Set $\Stat(T,i, j) \coloneqq (\sum^j_{k=i} w[k])^r$ and $\aggOp \coloneqq \sum$. \Cref{cor:ds-stat} yields an $\cO(n)$-space data structure answering such queries in $\cO(m)$ time. 
Observe that for $r=0$, we recover counting of occurrences; for $r=1$, we recover \USI. For $r=2$, we get the sum of squares of occurrence weights.
Since the variance of a random variable $X$ is $\E[X^2] - \E[X]^2$, we can use the data structure for $r=1$ and $r=2$ to deduce the variance of the occurrence weights of a pattern. This is particularly well-motivated within the framework of strings with utilities~\cite{DBLP:conf/sdm/0001CCGGLPP24,DBLP:conf/icde/BernardiniCCGGLPP25}, where $w$ quantifies the profit, risk, or confidence score of each position in $T$. 
Higher moments $r\ge3$ serve as fundamental building blocks for more complex statistics~\cite{DBLP:journals/dam/Regnier00} and we thus note that they are supported by the framework.
\end{example}

\begin{example}[Pattern Matching with Properties]
In this problem, 
in addition to $T$ we are given a \emph{hereditary property}~$\Pi$, which is a family of integer intervals contained in $[1, n]$ (hereditary 
means that it is closed under subintervals). Our goal is to preprocess $T$ 
so that, for a pattern $P$, we can report all occurrences of $P$ in $T$ 
which, interpreted as intervals, belong to $\Pi$. The property $\Pi$ can be 
represented in $\cO(n)$ space using an array $L$ such that the longest interval 
in $\Pi$ starting at position $i$ is $[i, i+L[i]-1]$. The problem was studied in~\cite{DBLP:journals/tcs/AmirCIKZ08,DBLP:journals/iandc/BartonK0PR20,DBLP:journals/jea/Charalampopoulos20,DBLP:journals/ipl/IliopoulosR08,DBLP:journals/ipl/JuanLW09}; it is closely related to indexing weighted sequences. We obtain a general analogue to \Cref{cor:ds-stat} by wrapping any $\Stat$ function in a check to only consider occurrences that obey the property restriction \eg{\cite{DBLP:journals/iandc/BartonK0PR20} shows how to perform this check optimally}.
\end{example}

\begin{example}[Non-Overlapping Occurrences] Given a string~$S$ and two fragments of $S$, $S[i\dd i']$ and $S[j\dd j']$ with $i \leq j$, we say that the fragments are \emph{non-overlapping} if $j > i'$.
The set of non-overlapping occurrences of a string $P$ in a string $T$ is a \emph{largest} set of occurrences of $P$ in $T$ that are pairwise non-overlapping.
We denote this set by $\Occno_T(P)$.
There exists a well-known $\cO(n)$-space data structure with optimal $\cO(m+|\Occno_T(P)|)$ query time for the reporting version~\cite{DBLP:conf/isaac/CohenP09}. Thus, applying \Cref{thm:ds-blackbox} to the reporting data structure directly yields an $\cO(n)$-space data structure with $\cO(m)$-time queries for the counting version. 

The counting version can alternatively be solved via the minimal augmented suffix tree (MAST)~\cite{DBLP:journals/algorithmica/ApostolicoP96,DBLP:conf/icalp/BrodalLOP02}.
Note that the solution of~\cite{DBLP:conf/isaac/CohenP09} is not based on MAST, so together with \Cref{thm:ds-blackbox} this yields an alternative construction for the counting version.
\end{example}

\begin{example}[Covered Positions]\label{ex:covers}
  Given a pattern $P$, we want to count the positions in~$T$ that are in some occurrence of~$P$. We cannot deduce this from the number of occurrences, as we do not want to double-count positions that are in more than one occurrence. It is easy to obtain an $\cO(n)$-space data structure with $\cO(m+|\Occ_T(P)|)$-time queries using the suffix tree and the suffix array preprocessed using \Cref{lem:sort-range}: on a query $P$, we iterate over all \emph{sorted} occurrences and for two consecutive occurrences $i<j$, we add $\min(m,j-i)$ to the count. Applying \Cref{thm:ds-blackbox} yields an $\cO(n)$-space data structure with $\cO(m)$-time queries.  
  
  Note that for this task, an $\cO(n)$-space data structure with optimal query time is known based on the cover suffix tree~\cite{DBLP:journals/talg/KociumakaKRRW20,DBLP:conf/esa/Radoszewski23}. This data structure is more intricate (and has other uses, too), so we think this is a cute simplification for this task.
\end{example}

\begin{example}[Overlapping Occurrences] 
  It is also natural to ask for the maximum number of occurrences that \emph{do} overlap.
  Two ways to make this precise are to ask for the size of a largest set of occurrences such that either: (a) they all mutually overlap; or (b) they all transitively overlap \ie{two occurrences do not need to overlap if they are connected by some sequence of overlaps with other occurrences from the set}. For both of these flavors, we obtain $\cO(n)$-space data structures with optimal query times using \Cref{thm:ds-blackbox}. For $\cO(m+|\Occ_T(P)|)$ query time, simply find the sorted set of occurrences of $P$ as in \Cref{ex:covers}. Then perform a scanline algorithm, incrementing a counter every time an occurrence starts and decrementing it when one ends. The maximum value of this counter yields (a), and the maximum number of occurrences processed between two consecutive points where the counter is zero yields (b), both of which are simple enough to track in $\cO(\setsize{\Occ_T(P)})$ time.
\end{example}

\section{Generalized Combinatorial Lemma and Applications}\label{sec:general}

\begin{lemma}
\label{lem:tree-general}
    For any rooted tree $\mathcal{T}$ with $L$ leaves over a set $\mathcal{V}$ of nodes, and $\tau \in [1,L]$, we have
    \[
    |\{v\in \mathcal{V} \mid \tau \cdot \textsf{depth}(v) < |\mathcal{L}(v)|, \; \textsf{depth}(v) > 0\}|\leq L/\tau,
    \]
    where $\textsf{depth}(v)$ is the depth of node $v$ and $\mathcal{L}(v)$ is the set of leaf descendants of $v$.
\end{lemma}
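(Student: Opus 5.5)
The plan is to generalize the injection from the proof of \Cref{lem:tree}. There, each node $v$ in the set was charged a single leaf. Here each node $v$ in
\[
\mathcal{S}_\tau=\{v\in\mathcal{V}\mid \tau\cdot\textsf{depth}(v)<|\mathcal{L}(v)|,\ \textsf{depth}(v)>0\}
\]
will be charged a \emph{block of total size $\tau$} inside the leaf order, and I will show that the blocks are pairwise disjoint. Since $\tau$ need not be an integer, I will phrase the blocks as real intervals rather than sets of leaves.

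\textbf{Setup.} Order the leaves of $\mathcal{T}$ by in-order rank, and let the $i$th leaf occupy the unit interval $[i-1,i)$ of the segment $[0,L)$. For every node $v$, its leaf descendants are contiguous in this order, so they occupy an interval $[a_v,\,a_v+|\mathcal{L}(v)|)$. For $v\in\mathcal{S}_\tau$ with $d_v\coloneqq\textsf{depth}(v)\ge 1$, define
\[
I_v \coloneqq \bigl[a_v+\tau(d_v-1),\; a_v+\tau d_v\bigr).
\]
Two facts make this interval well placed. First, $d_v\ge 1$ gives $\tau(d_v-1)\ge 0$; this is exactly where the hypothesis $\textsf{depth}(v)>0$ is used. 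Second, $\tau d_v<|\mathcal{L}(v)|$ holds by the definition of $\mathcal{S}_\tau$. Together these give $I_v\subseteq[a_v,a_v+|\mathcal{L}(v)|)\subseteq[0,L)$, and $I_v$ has length exactly $\tau$.

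\textbf{Disjointness.} I mirror the two cases in the proof of \Cref{lem:tree}.
\begin{itemize}
\item If neither of $u,v$ is an ancestor of the other, their leaf sets are disjoint. Hence their leaf intervals are disjoint, and so are $I_u$ and $I_v$.
\item If $u$ is a strict ancestor of $v$, then $d_v\ge d_u+1$ and $a_v\ge a_u$. Therefore $I_v$ starts at
\[
a_v+\tau(d_v-1)\;\ge\; a_u+\tau d_u,
\]
which is the right endpoint of $I_u$. So $I_u\cap I_v=\emptyset$.
\end{itemize}

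\textbf{Conclusion.} The intervals $I_v$ for $v\in\mathcal{S}_\tau$ are pairwise disjoint subintervals of $[0,L)$, each of length $\tau$. Hence $\tau\cdot|\mathcal{S}_\tau|\le L$, which gives $|\mathcal{S}_\tau|\le L/\tau$.

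The only delicate point is the non-integrality of $\tau$. If the blocks were sets of leaves, a block of length $\tau$ might contain only $\lfloor\tau\rfloor$ leaves, and the bound would weaken. The real-interval (fractional) charging above avoids this loss and yields $L/\tau$ exactly, not $L/\lfloor\tau\rfloor$.
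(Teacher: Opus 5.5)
Your proof is correct and is essentially the paper's argument. The paper assigns to each $v$ the leaves $\ell_{v,i}$ with $i\in[(\textsf{depth}(v)-1)\tau+1,\ \textsf{depth}(v)\tau]$, which is exactly your interval $I_v$ read off in unit cells, and it uses the same two-case disjointness argument (disjoint subtrees, or a strict ancestor); your real-interval formulation adds a small improvement, since the paper's leaf-set version implicitly needs $\tau$ to be an integer, while yours gives $L/\tau$ for every real $\tau\ge 1$.
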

\begin{proof}
    Let $\mathcal{S} = \{v\in \mathcal{V} \mid \tau \cdot \textsf{depth}(v) < |\mathcal{L}(v)|, \; \textsf{depth}(v) > 0\}$. By definition, we have $|\mathcal{L}(v)| \ge \tau$ for every $v \in \mathcal{S}$. We will define a function $\phi$ that maps each node in $\mathcal{S}$ to a set of exactly $\tau$ of its leaf descendants such that $\phi(u) \cap \phi(v) = \emptyset$ for every distinct $u,v \in \mathcal{S}$. Since there are only $L$ leaves in total, this immediately leads to a bound of $|\mathcal{S}| \le L/\tau$.

    For each node $v$, let $\ell_{v,1},\ell_{v,2},\ldots,\ell_{v,|\mathcal{L}(v)|}$ be the leaves in $\mathcal{L}(v)$ ordered by their rank in an in-order traversal of $\mathcal{T}$. For each node $v \in \mathcal{S}$, we set
    \[
    \phi(v) := \{ \ell_{v, i} \mid i = (\textsf{depth}(v)-1) \cdot \tau + j, \text{ for } j \in [1,\tau]\}.\]  
    
    We consider the following two cases:

    \proofsubparagraph{\boldmath{Case 1: Neither $u$ nor $v$ is an ancestor of the other.}} Because the subtrees rooted in $u$ and~$v$ are disjoint, we have $\mathcal{L}(u) \cap \mathcal{L}(v) = \emptyset$. Since $\phi(u) \subseteq \mathcal{L}(u)$ and $\phi(v) \subseteq \mathcal{L}(v)$, it immediately follows that $\phi(u) \cap \phi(v) = \emptyset$.

    \proofsubparagraph{\boldmath{Case 2: $u$ is an ancestor of $v$ or vice versa.}} Assume without loss of generality that $u$ is an ancestor of~$v$. In this case, the leaves $\ell_{v,1},\ldots,\ell_{v,|\mathcal{L}(v)|}$ appear as some contiguous interval within $\ell_{u,1},\ldots,\ell_{u,|\mathcal{L}(u)|}$. In other words, there exists some $k \ge 0$ such that $\ell_{v,i} = \ell_{u,i+k}$ for all $i \in [1,|\mathcal{L}(v)|]$. Suppose that there exists some leaf $w \in \phi(u) \cap \phi(v)$. By the definition of $\phi$, this means $\ell_{v,(\textsf{depth}(v)-1) \cdot \tau + j_1} = \ell_{u,(\textsf{depth}(u)-1) \cdot \tau + j_2}$ for some $j_1,j_2 \in [1,\tau]$. From this, we get $k = ((\textsf{depth}(u)-1) \cdot \tau + j_2) - ((\textsf{depth}(v)-1) \cdot \tau + j_1) = \tau \cdot (\textsf{depth}(u) - \textsf{depth}(v)) + j_2 - j_1$. Because $\textsf{depth}(v) > \textsf{depth}(u)$ and $j_2 - j_1 < \tau$, this leads to $k < 0$, a contradiction. As a consequence, the leaf $w$ cannot exist.
    This means $\phi(u) \cap \phi(v) = \emptyset$.

    Because $\phi(u) \cap \phi(v) = \emptyset$ for all distinct $u,v \in \mathcal{S}$, we have that $|\bigcup_{v \in \mathcal{S}} \phi(v)| = \sum_{v \in \mathcal{S}} |\phi(v)| = |\mathcal{S}| \cdot \tau$. Moreover, since there are $L$ leaves in total, we have $|\bigcup_{v \in \mathcal{S}} \phi(v)| \le L$ which leads to the bound $|\mathcal{S}| \cdot \tau \le L$ and thus $|\mathcal{S}| \le L / \tau$.
\end{proof}

\stringcor*
\begin{proof}
    In the trie $\{T[1\dd n]\$,T[2\dd n]\$,\ldots,T[n]\$\}$, for a unique letter $\$$ that does not occur in $T$, each internal node~$v$ corresponds one-to-one to a substring $U$ of $T$, with $|U| = \textsf{depth}(v)$. Moreover, the leaves $\mathcal{L}(v)$ correspond to occurrences of $U$.
    As there are $n$ leaves in total, the statement follows directly by applying \Cref{lem:tree-general} to the trie.
\end{proof}

Analogously to the applications of \Cref{thm:string},
\Cref{cor:general} can be applied for designing
space-efficient data structures for counting queries.
We next showcase how we can obtain a \emph{compact} data structure for counting non-overlapping occurrences in near-optimal time, which was mentioned as a ``challenging'' open problem by Gibney et al.~\cite{DBLP:journals/tcs/GibneyMT25}. 

Ganguly et al.~\cite{DBLP:journals/algorithmica/GangulyST20} showed that all we need for reporting non-overlapping occurrences is a suffix tree (or any of its space-efficient counterparts) of text $T$.
The time complexity of their query algorithm is 
$\cO(\mathrm{search}(P)+t_{\textsf{SA}}|\Occno_T(P)|+\mathrm{sort}_n(|\Occno_T(P)|))$, where $\mathrm{search}(P)$
denotes the time for computing the suffix array range of pattern $P$, 
$t_{\textsf{SA}}$ denotes the time for accessing a given entry in the suffix array (or inverse suffix array), and $\mathrm{sort}_n(x)$
denotes the time for sorting a subset of $\{1,2,\ldots,n\}$ of size $\cO(x)$.

\begin{theorem}
    Given a text $T$ of length $n$ over an integer alphabet $[0,\sigma)$, we can construct
    a data structure of $\cO(n \log \sigma)$ bits
    so that, given a pattern $P$ of length $m$,
    it outputs $|\Occno_T(P)|$ in $\cO(m\log n\log_{\sigma}^\epsilon n)$ time for an arbitrarily small constant $\epsilon>0$.
\end{theorem}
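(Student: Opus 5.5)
We follow the recipe sketched after \Cref{cor:general}: combine a compressed text index with \Cref{cor:general} for a suitable $\tau$, storing precomputed counts only for the few ``very frequent'' substrings.

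\textbf{Base index.} We use a compressed suffix array/FM-index of $T$ occupying $\cO(n\log\sigma)$ bits. It should support three operations:
\begin{itemize}
\item $\mathrm{search}(P)$ in $\cO(m)$ time (or $\cO(m\log\log\sigma)$, which suffices);
\item access to $\textsf{SA}$ and $\textsf{ISA}$ entries in $t_{\textsf{SA}}=\cO(\log_\sigma^{\epsilon} n)$ time (Grossi--Vitter style sampling);
\item the operations of a compressed suffix tree, which the algorithm of Ganguly et al.~\cite{DBLP:journals/algorithmica/GangulyST20} needs.
\end{itemize}
Their algorithm then reports $\Occno_T(P)$ in $\cO(m+\log_\sigma^\epsilon n\cdot|\Occno_T(P)|+\mathrm{sort}_n(|\Occno_T(P)|))$ time. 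Sorting $x$ integers takes $\cO(x\log n)$ by comparison sort, so the reporting time is $\cO(m+|\Occno_T(P)|\log n\log_\sigma^\epsilon n)$.

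\textbf{Choice of $\tau$ and the frequent substrings.} Set $\tau=\lceil \log n\rceil$, or more generally $\tau=\Theta(\log_\sigma n\cdot\log n/\log\sigma)$ so that storage fits the budget. By \Cref{cor:general}, at most $n/\tau$ substrings $U$ satisfy $|U|\cdot\tau<|\Occ_T(U)|$. For each such $U$ we store $|\Occno_T(U)|$ in $\cO(\log n)$ bits, for $\cO(n)$ bits in total, which is within $\cO(n\log\sigma)$.

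\textbf{Locating stored answers.} We must identify such a $U$ from its suffix-array range, since we cannot afford a pointer-based suffix tree. A frequent $U$ is determined by its SA interval $[s,e]$ together with $|U|$. Many frequent substrings may share an interval along a unary path, but \Cref{cor:general} bounds the number of pairs (interval, length) anyway. We index the stored answers with a static dictionary (\Cref{lem:dict}) keyed by the pair $(s,|U|)$. Each stored entry costs $\cO(\log n)$ bits, so the total is $\cO((n/\tau)\log n)=\cO(n)$ bits with $\tau=\log n$. Ruzic's dictionary uses $\cO(n/\tau)$ words, i.e.\ $\cO(n)$ bits. This is consistent because the key determines the substring: $(s,|U|)$ fixes $U$ as the length-$|U|$ prefix of suffix $\textsf{SA}[s]$.

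\textbf{Query.} Given $P$, compute its SA range $[s,e]$ via backward search in $\cO(m\log_\sigma^\epsilon n)$ time, and look up $(s,m)$ in the dictionary.
\begin{itemize}
\item If the key is present, return the stored count.
\item Otherwise $|\Occ_T(P)|\le m\tau$, hence $|\Occno_T(P)|\le m\tau$. Run the Ganguly et al.\ reporting algorithm and count its output. This takes $\cO(m+m\tau\log_\sigma^\epsilon n+\mathrm{sort}_n(m\tau))$ time.
\end{itemize}

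\textbf{Main obstacle: balancing $\tau$.} With $\tau=\log n$ and comparison sorting, the reporting branch costs $\cO(m\log n\log\log n)$, which is too slow. We therefore take $\tau$ to be a constant, which makes both the stored data and the dictionary $\cO((n/\tau)\log n)=\cO(n\log n)$ bits. The bound $\cO(n\log\sigma)$ then fails when $\sigma$ is small. The fix is to take $\tau=\lceil\log_\sigma n\rceil$, so the stored data uses $\cO((n/\tau)\log n)=\cO(n\log\sigma)$ bits.

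In the unstored case $|\Occno_T(P)|\le m\log_\sigma n$. Reporting costs $\cO(m\log_\sigma n\cdot\log_\sigma^\epsilon n)$. Sorting $x=m\log_\sigma n$ values from $[1,n]$ via comparison sort costs $\cO(x\log x)$; alternatively a radix sort in $\cO(x\log n/\log x)$ can be used, and we take the better bound. Both terms fit within $\cO(m\log n\log_\sigma^\epsilon n)$, since $\log_\sigma n\le\log n$ and $\log x\le\cO(\log n)$ (for $m\le n$).

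The remaining care point is that the compressed-suffix-tree operations Ganguly et al.\ require are each available within $\cO(\log_\sigma^\epsilon n)$ time in the $\cO(n\log\sigma)$-bit setting. We would cite Grossi--Vitter/Sadakane-type CSA/CST results for this and check it carefully.
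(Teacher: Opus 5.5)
Your final construction is the paper's: $\tau=\lceil\log_\sigma n\rceil$, answers for the at most $n/\tau$ substrings with $\setsize{U}\cdot\tau<\setsize{\Occ_T(U)}$ stored in Ru\v{z}i\'c's dictionary keyed by (SA-range start, length), and otherwise the Ganguly et al.\ reporting algorithm on at most $m\tau$ occurrences. The space analysis is fine. The gap is the sorting term of the query time.

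You argue that $\mathrm{sort}_n(x)$, with $x=m\lceil\log_\sigma n\rceil$, fits in $\cO(m\log n\log_\sigma^\epsilon n)$ because $\log x=\cO(\log n)$. But comparison sort then costs $\cO(m\log_\sigma n\cdot\log n)$. This exceeds the target by a factor $\log_\sigma^{1-\epsilon}n$, and $\log_\sigma n\le\log n$ cannot absorb it, since the target carries only $\log_\sigma^{\epsilon}n$. Taking the better of comparison sort and radix sort does not rescue it. For $\sigma=2$ and $m\approx 2^{\sqrt{\log n}}$ we have $\log x\approx\sqrt{\log n}$. Both $\cO(x\log x)$ and $\cO(x\log n/\log x)$ are then $\Theta(x\sqrt{\log n})=\Theta(m\log^{3/2}n)$, whereas the claimed bound is $\cO(m\log^{1+\epsilon}n)$. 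So for every $\epsilon<1/2$ your argument does not prove the theorem. The same error is in your ``base index'' bound $\cO(m+\setsize{\Occno_T(P)}\log n\log_\sigma^\epsilon n)$: plugging in $\setsize{\Occno_T(P)}\le m\log_\sigma n$ overshoots in the same way.

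The missing ingredient is a fast integer sort. The paper uses Han's deterministic linear-space sorting in $\cO(x\log\log n)$ time, giving $\mathrm{sort}_n(m\log_\sigma n)=\cO(m\log_\sigma n\log\log n)$. This lies in $\cO(m\log n\log_\sigma^\epsilon n)$, since that containment is equivalent to $\log\log n=\cO\bigl((\log\sigma)^{1-\epsilon}\log^{\epsilon}n\bigr)$, which holds because $\log\sigma\ge 1$. With this replacement the rest of your proof goes through.

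As a side remark, you state the search cost inconsistently, first $\cO(m)$ and then $\cO(m\log_\sigma^\epsilon n)$ via backward search, without naming a compressed index that supports it. This is harmless, because any search cost within $\cO(m\log n\log_\sigma^\epsilon n)$ suffices. The paper uses Manber--Myers binary search over the compressed suffix array with the LCP array, at cost $\cO(m+\log n\log_\sigma^\epsilon n)$.
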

\begin{proof}
Let us set $\tau:=\lceil\log_\sigma n \rceil$. 
By \Cref{cor:general}, the number of substrings $U$ of $T$ that satisfy $\setsize{U} \cdot \tau < |\Occ_T(U)|$ is at most $\lfloor n/\tau \rfloor$.
We store the answer for each such substring using the deterministic static dictionary from~\cref{lem:dict}, keyed by the pattern's suffix array range start index and its length.
Since each key-value pair requires $\cO(\log n)$ bits, the dictionary takes $\cO(n \log \sigma)$ bits of space.
Then, for any pattern~$P$ of length $m$, either the answer is stored or
we have at most $m \lceil \log_\sigma n \rceil$ occurrences of $P$ in $T$.
Using the reporting data structure of Ganguly et al.~\cite{DBLP:journals/algorithmica/GangulyST20}, we achieve a query time of $\cO(\mathrm{search}(P)+t_{\textsf{SA}}m\log_\sigma n+\mathrm{sort}_n(m\log_{\sigma} n))$.
Using the compressed suffix array~\cite{DBLP:journals/siamcomp/GrossiV05,DBLP:journals/mst/Sadakane07}, we attain $t_{\textsf{SA}}=\cO(\log^{\epsilon}_{\sigma} n)$ using $\cO(n \log \sigma)$ extra bits of space.
Using binary search as in~\cite{DBLP:journals/siamcomp/ManberM93} over the suffix array~\cite{DBLP:journals/siamcomp/GrossiV05} with the help of the LCP array~\cite{DBLP:journals/mst/Sadakane07}, we get $\mathrm{search}(P)=\cO(m+\log n \log^{\epsilon}_{\sigma} n) \in \cO(m\log n\log_{\sigma}^\epsilon n)$.
Finally, using fast integer sorting~\cite{DBLP:conf/stoc/Han02}, we get $\mathrm{sort}_n(m\log_\sigma n)=\cO(m\log_\sigma n \log\log n)$.
The total space usage is $\cO(n \log \sigma)$ bits and the query time is $\cO(m\log n\log_{\sigma}^\epsilon n)$.
\end{proof}

\section{Another Generalization of the Combinatorial Lemma with Applications in the IPM Model}\label{sec:internal}

\inflatecor*
\begin{proof}
    Consider the string $S$ formed by concatenating $\tau$ copies of $T$ separated by a unique letter $\$$ that does not occur in $T$. In particular, consider the string:
\[
S := (T\$)^{\tau} = \underbrace{T\$T\$\dots\$T\$}_{\tau\text{ copies of } T\$}.
\]
The total length of $S$ is $|S| = \tau(n+1)$.

Let $U$ be a non-empty substring of $T$ that satisfies the condition in the statement, namely $|U| < \tau \cdot |\Occ_T(U)|$. Because the string $U$ does not contain $\$$, it cannot overlap the boundaries between the concatenated copies of $T$ in $S$. Therefore, every occurrence of $U$ in $S$ must be fully contained within exactly one of the $\tau$ individual copies of $T$. This yields:
\[|\Occ_S(U)| = \tau \cdot |\Occ_T(U)|.\]
Substituting this relation into our inequality, we conclude that $U$ must satisfy:
\[|U| < |\Occ_S(U)|.\]

Consider the suffix trie of $S$ (excluding the $\tau$ suffixes of $S$ starting with $\$$). This trie has exactly $n\tau$ leaves. By applying \Cref{lem:tree} to this trie, we conclude that the number of non-empty substrings $U$ satisfying the statement's condition is at most $n\tau$.
\end{proof}

We apply \Cref{cor:inflate}
to \USI in the IPM model. Namely, the pattern $P$ is now specified in $\cO(1)$ time and space using two integers $i$ and $j$ such that $P=T[i\dd j]$ and $m=\setsize{P}=j-i+1$.
We term the corresponding queries \emph{internal} \USI queries, and show the following result.

\begin{theorem}
For every string $T$ of length $n$ and every integer $\tau \in [1,n]$, there exists an $\cO(n\tau)$-space data structure for \USI that can answer internal \USI queries in time $\cO\bigl(1 + m/\tau\bigr)$.
\end{theorem}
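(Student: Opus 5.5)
The plan is to combine \Cref{cor:inflate} with a reporting procedure that runs in time proportional to $|\Occ_T(P)|$ and needs no locating step, since in the IPM model the pattern is already given as a fragment of $T$. By \Cref{cor:inflate}, at most $n\tau$ non-empty substrings $U$ of $T$ satisfy $|U| < \tau\cdot|\Occ_T(U)|$. We call them frequent and store their global utilities explicitly. Every other pattern $P$ has $|\Occ_T(P)| \le m/\tau$, and we answer it by enumerating its occurrences and summing their local utilities with prefix sums $\Pi_w$, exactly as in the proof of \Cref{the:usi}.

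First, the locating step. We build the suffix tree of $T$ with $\cO(n)$ space and make explicit every implicit node corresponding to a frequent substring; by \Cref{cor:inflate} this adds $\cO(n\tau)$ nodes. We then equip the tree with a weighted-ancestor structure that answers queries in $\cO(1)$ time. A query $(i,j)$ is answered by taking the leaf of the suffix $T[i\dd n]$ and computing its ancestor at string depth $m$, which gives the locus of $P$ together with its suffix array range $[a,b]$. A generic weighted ancestor query costs $\cO(\log\log n)$, which we cannot afford. We rely instead on the known $\cO(1)$-time, $\cO(n)$-space weighted ancestor queries on suffix trees (Gawrychowski--Lewenstein--Nicholson and subsequent work, or Belazzougui et al.). I expect this citation, and its compatibility with the extra explicit nodes, to be the main obstacle. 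A workaround for the extra nodes is to keep the original suffix tree for the weighted ancestor query and attach the answers in a separate table. The table is a dictionary from \Cref{lem:dict}, keyed by the pair (locus node of the original suffix tree, depth $m$), and holds $\cO(n\tau)$ entries.

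For the query, we look up the key $(v,m)$ in the dictionary, where $v$ is the locus. If it is present, we return the stored global utility in $\cO(1)$ time. Otherwise $b-a+1 = |\Occ_T(P)| \le m/\tau$, so we read $\mathrm{SA}[a\dd b]$ from a plain suffix array and return $\sum_{k=a}^{b}(\Pi_w[\mathrm{SA}[k]+m-1]-\Pi_w[\mathrm{SA}[k]-1])$. This takes $\cO(1+m/\tau)$ time in total.

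The total space is $\cO(n)$ for the suffix tree, suffix array, prefix sums and weighted-ancestor structure, plus $\cO(n\tau)$ for the dictionary. This gives the claimed $\cO(n\tau)$ space and $\cO(1+m/\tau)$ query time.
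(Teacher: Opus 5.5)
Your proposal is correct and follows essentially the same route as the paper: an $\cO(1)$-time weighted ancestor structure on the unmodified suffix tree (same citations), a dictionary from \Cref{lem:dict} holding the $\cO(n\tau)$ answers for substrings with $|U| < \tau\cdot|\Occ_T(U)|$, and otherwise prefix-sum evaluation over at most $m/\tau$ occurrences. The paper also never makes the extra nodes explicit and keys its dictionary by (suffix-array range start, length) rather than your (locus node, depth $m$), which is equivalent, so your ``workaround'' is precisely the paper's construction.
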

\begin{proof}
We construct the suffix tree and the suffix array of $T$. 
We also preprocess the suffix tree
for answering weighted ancestor queries in $\cO(1)$ time using $\cO(n)$ space~\cite{DBLP:conf/cpm/BelazzouguiKPR21,DBLP:conf/esa/GawrychowskiLN14}. A \emph{weighted ancestor query} (WAQ) locates the locus of a substring $T[p\dd q]$ in the suffix tree of~$T$: given the node $u$ of the tree corresponding to $T[p\dd n]$, the query locates the highest ancestor~$v$ of $u$, such that the string written on the root-to-$v$ path has length at least $q - p + 1$.

For every substring $U$ with $|U| < \tau \cdot |\Occ_T(U)|$, we also store the answer using the deterministic static dictionary from~\cref{lem:dict}, keyed by the pattern's suffix array range start index and its length. By \Cref{cor:inflate}, there are $\cO(n\tau)$ such patterns; thus these answers take $\cO(n\tau)$ space. Finally, we compute $\Pi_w[i] \coloneqq \sum_{j \le i} w[j]$, for all $i \in [1,n]$, taking $\cO(n)$ space.

Given a query $(i,j)$, with $m=j-i+1$, we find the locus of $P=T[i\dd i+m-1]$ in the suffix tree of $T$ in $\cO(1)$ time using a WAQ, and distinguish two cases:
\begin{enumerate}
    \item If $|\Occ_T(P)| > m/\tau$, the pattern's exact answer is explicitly stored in the dictionary. We return it in $\cO(1)$ additional time.
    \item If $|\Occ_T(P)| \leq m/\tau$, we traverse the subtree of its locus to locate all $|\Occ_T(P)|$ occurrences. For each occurrence, we compute the required local utility using prefix sums in $\cO(1)$ time. The additional time spent is $\cO(|\Occ_T(P)|) = \cO(m/\tau)$.
\end{enumerate}
In both cases, the extra work is bounded by $\cO(1+m/\tau)$. Thus, the total query time is $\cO(1 + m/\tau)$ using an index of $\cO( n\tau)$ space as claimed.
\end{proof}

\section{Efficient Construction Algorithm}\label{sec:construct}
 
In this section, we show how to efficiently construct the indexing data structure for a class of problems in the framework of \Cref{cor:ds-stat}. 
Let us fix a text $T$ of length $n$. The challenge is to compute statistics for every substring $U$ of~$T$ with $\setsize{U} < |\Occ_T(U)|$. 
We have at most~$n$ such strings~$U$ by \Cref{thm:string}.
Na\"ively querying each of these strings may take $\Theta(n^2)$ time, given that each may have $\Theta(n)$ occurrences.
We show how the associated statistics $\aggOp_{i \in \Occ_T(U)}\Stat(T,i, i+|U|-1)$ can be computed in total $\cOt(n \sqrt{n})$ time using $\cO(n)$ space by bounding the total number of occurrences of short substrings with many occurrences and exploiting the implied periodicity of long ones.

We decompose the problem based on the quantities $\setsize{U}$ and $|\Occ_T(U)|$, by considering:
    \begin{itemize}
    \item short substrings with many occurrences, satisfying $\setsize{U} \le 5\lceil \sqrt{n} \rceil$ and $\setsize{U} < |\Occ_T(U)|$;
    \item long substrings with many occurrences, satisfying $\setsize{U} > 5\lceil \sqrt{n} \rceil$ and $\setsize{U} < |\Occ_T(U)|$.
    \end{itemize}
    
We next describe the class of problems in the framework of \Cref{cor:ds-stat}.

\paragraph{The class of problems.} We follow the framework of \Cref{cor:ds-stat}, with the following additional properties required to make the construction work:
\begin{itemize}
    \item $\Stat(T,i, j)$ can be queried in $\cO(1)$ time after $\cO(n)$-time (and thus $\cO(n)$-space) preprocessing, for all $i$, $j$ (e.g., using prefix sums as in \Cref{the:usi} if the operator is addition, or using range minimum queries~\cite{DBLP:conf/latin/BenderF00} if the operator is the minimum);
    \item $\Stat(T,i, j) = \aggOp_{k=i}^j \Stat(T,k,k)$, for all $i$, $j$. (For simplicity of presentation, we use the same operator as in \Cref{cor:ds-stat}.)
\end{itemize}
Additionally, let us denote by $\otimes$ the repeated application of $\odot$ using a single operand; i.e., $k \otimes x = \underbrace{x \odot \dots \odot x}_k$, for some integer $k>0$. For some operators, such as addition or the minimum, this can be trivially evaluated in $\cO(1)$ time. In general, it can be evaluated in $\cO(\log k)$ time by repeatedly ``squaring'' smaller values. Thus, our bounds generally carry logarithmic factors, which can be shaved if the operator is addition or the minimum.

\paragraph{Preprocessing.} We construct the suffix tree of $T$ in $\cO(n\log n)$ time~\cite{DBLP:conf/focs/Weiner73} using $\cO(n)$ space. Dictionaries storing the labels of outgoing edges to support efficient navigation can be constructed within the same complexities using \Cref{lem:dict}.

The short substrings are handled using \Cref{lem:short}.

\begin{lemma}\label{lem:short}
  Given $\Stat$, the suffix tree of $T$, and an integer $\alpha>0$, we can compute 
  the associated statistic of each substring~$U$ of $T$, with $\setsize{U} \le \alpha$ and $\setsize{U} < |\Occ_T(U)|$, in $\cO(n\alpha)$ total time using $\cO(n)$ space.
\end{lemma}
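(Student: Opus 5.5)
The plan is to charge each relevant occurrence to a text position and to note that every position starts at most $\alpha$ substrings of length at most $\alpha$. Concretely, for each starting position $i\in[1,n]$ and each length $\ell\in[1,\alpha]$ with $i+\ell-1\le n$, the fragment $T[i\dd i+\ell-1]$ is an occurrence of exactly one substring $U$ with $|U|=\ell\le\alpha$. So the total number of occurrences of all substrings of length at most $\alpha$ is at most $n\alpha$, and in particular $\sum_{U:\,|U|\le\alpha,\ |U|<|\Occ_T(U)|} |\Occ_T(U)| \le n\alpha$. It therefore suffices to compute each required statistic in time proportional to $|\Occ_T(U)|$ plus $\cO(1)$, using only $\cO(n)$ working space.

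The main step is a traversal of the suffix tree. I would perform a DFS and, for every explicit node $v$ and every implicit point on the edge entering $v$, treat the string-depths $d$ in the range $(\textsf{strdepth}(\mathrm{parent}(v)),\min(\textsf{strdepth}(v),\alpha)]$. For each such depth, all the corresponding substrings share the same set of occurrences, namely the suffix-array interval of the leaves below $v$; its size $c_v$ is available from a bottom-up count. For each $d$ in that range with $d<c_v$, we iterate over the leaves of $v$ (a contiguous range of the suffix array, which can be stored in $\cO(n)$ space) and compute $\aggOp_{i}\Stat(T,i,i+d-1)$ in $\cO(c_v)$ time. Each qualifying pair (node/depth, i.e.\ substring $U$) costs $\cO(|\Occ_T(U)|)$, so by the charging argument the total time is $\cO(n\alpha)$. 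The DFS itself costs $\cO(n)$, and we skip the entire edge once the depth exceeds $\alpha$ or once $d\ge c_v$ (larger $d$ on the same edge only gets worse, since $c_v$ is fixed). The number of depths scanned is bounded by the number of qualifying substrings plus $\cO(1)$ per node.

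On space, the results are kept implicitly per edge: for each node we record the qualifying depth range, and we emit the statistics as (node, depth, value) triples. By \Cref{thm:string} there are at most $n$ such triples, so the output fits in $\cO(n)$ space, and the working memory (suffix array, counts, $\Stat$ structures) is $\cO(n)$ as well.

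The subtle point I expect to be the main obstacle is justifying the charging cleanly. Distinct qualifying $U$ must be charged to disjoint sets of (position, length) pairs, which holds because the pair $(i,|U|)$ determines $U$. One might worry about redundant recomputation along an edge, with each depth rescanning the same leaves, but this is exactly what the bound permits: each depth is a different $U$ with its own occurrence set, counted separately in $n\alpha$. If the aggregation involved $\otimes$ at $\cO(\log)$ cost, it is not needed here, since we aggregate directly.
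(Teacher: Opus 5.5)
Your proposal is correct and follows essentially the same route as the paper. Both bound the total number of occurrences of substrings of length at most $\alpha$ by $n\alpha$ via (position, length) pairs, then enumerate the leaf descendants of each qualifying explicit or implicit suffix-tree node, computing each $\Stat$ value in $\cO(1)$ time, and use \Cref{thm:string} to bound the number of stored values by $n$.
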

\begin{proof} We start with a straightforward claim.
\begin{claim}\label{claim:short}  
    The total number of occurrences of substrings $U$ of $T$, with 
    $\setsize{U} \le \alpha$, is $\cO(n\alpha)$.
\end{claim}
\begin{claimproof}
    The number of possible starting positions for a substring of length $m$ in $T$ is $n - m + 1$.
    \[
    \sum_{m=1}^{\alpha} (n - m + 1) = \sum_{m=1}^{\alpha} (n+1) - \sum_{m=1}^{\alpha} m < (n+1)\alpha = \cO(n\alpha). \qedhere
    \]
\end{claimproof}

For every node $v$ in a suffix tree, we denote by
$\str(v)$ the concatenation of the edge labels from the root to $v$.
Using a DFS traversal of the suffix tree of $T$, we store $|\Occ_T(U)|$, for every explicit node $v$, with $U \coloneqq \str(v)$.
This takes $\cO(n)$ time.
Then, for every explicit or implicit node $v$ of the suffix tree,
such that $\setsize{U} \le \alpha$ and $\setsize{U} < |\Occ_T(U)|$,
with $U \coloneqq \str(v)$, we compute the statistic $\aggOp_{i \in \Occ_T(U)} \Stat(T,i, i + |U| - 1)$ as follows. For every occurrence $U=T[i\dd j]$ in $T$, we compute its value in $\cO(1)$ time by the assumption of the framework.
(If $v$ is implicit, we have $\Occ_T(U) = \Occ_T(U')$, with $U' \coloneqq \str(v')$ and $v'$ the closest explicit descendant of $v$.) The construction time is bounded by $\cO(n\alpha)$ by \Cref{claim:short}. By \Cref{thm:string}, we consider at most $n$ nodes, and so we store $\cO(n)$ values.  
\end{proof}

For long substrings, we will exploit the underlying periodic structure.

\begin{definition}[Period]
    The \emph{period} of a string $S=S[1]S[2]\ldots S[|S|]$, denoted by $\per(S)$, is the smallest positive integer $p$ such that $S[i] = S[i + p]$ for all $i \in [1,|S|-p]$.
\end{definition}

A string $S$ is called \emph{periodic} if $\per(S) \le |S| / 2$.
\Cref{lem:long-frequent-periodic} shows that long substrings with many occurrences are in fact highly-periodic.
As a consequence, given a long frequent substring, it suffices to query its $\Stat$  value only within similar periodic fragments of the text.

\begin{lemma}
\label{lem:long-frequent-periodic}
    Let $\beta\in\mathbb{Z}_{>0}$. For any string $U$ with $\setsize{U} \cdot |\Occ_T(U)| > \beta\cdot n$,
    we have $\per(U) \le \frac{1}{\beta} \setsize{U}$.
\end{lemma}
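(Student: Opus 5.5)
The plan is a pigeonhole argument on the occurrences of $U$ in $T$, followed by the standard fact that two nearby occurrences of a string force a small period. Write $k \coloneqq |\Occ_T(U)|$ and $m \coloneqq |U|$. Every occurrence starts in $[1, n-m+1]$, so the $k$ starting positions $i_1 < i_2 < \dots < i_k$ lie in an interval of length less than $n$. The $k-1$ consecutive gaps $i_{t+1}-i_t$ are positive and sum to $i_k - i_1 \le n - m$.

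First, I will find a small gap. If $k \ge 2$, the minimum gap is
\[
d \le \frac{n-m}{k-1}.
\]
The hypothesis $mk > \beta n$ gives $n < mk/\beta$, so $d < \frac{mk/\beta - m}{k-1}$. We need $d \le m/\beta$, that is, $(n-m)\beta \le m(k-1)$, which is $n\beta \le mk - m(1-\beta)$. For $\beta \ge 1$ this follows from $n\beta < mk$, since $m(1-\beta) \le 0$. (Since $\beta\in\mathbb{Z}_{>0}$, this covers every case.) So $d \le m/\beta$. We also need $k \ge 2$. Since $mk > \beta n \ge n \ge m$, we get $k > 1$.

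Second, I will turn the gap into a period. Two occurrences of $U$ starting at $i$ and $i+d$ with $d < m$ (note $d \le m/\beta \le m$; the case $d = m$ is possible only if $\beta = 1$) overlap. Then $U[j] = T[i+d+j-1] = U[j+d]$ for all $j \in [1, m-d]$, so $d$ is a period of $U$ and $\per(U) \le d \le m/\beta$. If $d = m$ with $\beta = 1$, the claim $\per(U) \le m$ holds trivially.

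The main obstacle is only the inequality bookkeeping in the first step, namely checking that $(n-m)/(k-1) \le m/\beta$ follows from the strict hypothesis, with care at the boundary $\beta = 1$. I would simply verify the rearranged inequality $\beta(n-m) \le m(k-1)$ directly from $\beta n < mk$ and $\beta \ge 1$: indeed $\beta(n-m) < mk - \beta m \le mk - m$.
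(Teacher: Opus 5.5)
Your proof is correct and follows essentially the paper's route: a pigeonhole/averaging step yields two occurrences of $U$ at distance at most $|U|/\beta$, and their overlap gives the period bound. The only difference is that you average all $k-1$ consecutive gaps over $[1,n-m+1]$, whereas the paper finds $\beta+1$ occurrences inside one length-$m$ window; your version makes the counting fully explicit and even gives $d<m/\beta$ strictly, so the $d=m$ case you handle never occurs.
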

\begin{proof}
Let $k \coloneqq |\Occ_T(U)|$ and $m \coloneqq \setsize{U}$. We have $k \cdot m > \beta\cdot n$. By the hypothesis ($k > \beta\cdot n/m$) and the pigeonhole principle, there exists an index $i$ such that $|\Occ_T(U)\cap [i,i+m-1]|\geq \beta+1$. Thus, there exist two occurrences $j_1<j_2$ of $U$ in $T$, with $j_1,j_2\in[i,i+m-1]$ and $d = j_2-j_1 \leq \frac{m}{\beta}$.
These two occurrences of $U$ at distance $d$ have an overlap of length $m-d$. Thus, $\per(U) \leq d \leq \frac{m}{\beta} = \frac{1}{\beta} \setsize{U}$.
\end{proof}

We make use of the following standard definitions.

\begin{definition}[Lyndon root]
    Given a periodic string $R$ with period $\per(R)$, the \emph{Lyndon root} of $R$ is the lexicographically smallest rotation of $R[1\dd \per(R)]$.
\end{definition}

\begin{definition}[Lyndon representation]
    Given a periodic string $R$ with Lyndon root $L$, the \emph{Lyndon representation} of $R$ is the concatenation $L_1 L^k L_2$, where $L_1$ is a proper suffix of $L$, $L_2$ is a proper prefix of $L$, and $k$ is an integer referred to as the \emph{exponent}.
    Further, we refer to $L^k$ as the \emph{body} of $R$, and $L_1$ and~$L_2$ as the front and back \emph{extensions} of $R$, respectively.
\end{definition}

\begin{definition}[Run]
    A fragment $R=S[i\dd j]$ of a string $S$ is called a \emph{run} of $S$ if $\per(R) \le |R|/2$ (i.e., $R$ is a periodic string) and extending $R$ either to the left or to the right (if possible) would result in an increase of its period, that is, $S[i-1] \ne S[i-1+\per(R)]$ (or $i = 1$) and $S[j+1] \ne S[j+1-\per(R)]$ (or $j=|S|$). 
\end{definition}

We denote by $\mathcal{R}_L(T)$ the set of all runs in a string $T$ that have Lyndon root $L$ and length at least $5\lceil\sqrt{n}\rceil$. By \Cref{lem:long-frequent-periodic}, every occurrence of a long frequent substring $U$ must be a fragment of a run from $\mathcal{R}_L(T)$, where $L$ is the Lyndon root of $U$.

\begin{figure}[htb]
    \centering
    \begin{tikzpicture}[x=25pt, y=-10pt]
        \def\k{12};
        \def\kp{4};
        \def\inners{7}; % \k - \kp - 1
        \def\mumax{4};
        \def\kmmumax{7}; % \k - \mumax - 1

        \def\cOuter{red!20};
        \def\cInner{yellow!20};
        \def\cInnerExt{blue!20};

        \def\yMui{-\inners - 1};
        \def\yI{0.5};
        
        \DrawRun{0}{0}{\k}{white};
        \foreach \y in {1,...,\inners} {
            \DrawRunExt{\y}{-1 - 0.75 * \y}{\kp}{\cInner}{\cInnerExt};
        }
        \DrawRun{0}{-1}{\kp}{\cOuter};
        \DrawRun{\inners + 1}{-1.75 - 0.75 * \inners}{\kp}{\cOuter};

        \foreach \i in {1,...,\k} {
            \ifnum \i>\mumax
                \ifnum \i<\kmmumax
                    \node at (\i + 0.5, \yMui) {$\mumax$};
                \fi
            \fi
        }
        \foreach \i in {1,...,\mumax} {
            \node at (\i + 0.5, \yMui) {$\i$};
            \node at (\k - \i - 0.5, \yMui) {$\i$};
        }

        \foreach \i in {0,...,\k} {
            \ifnum \i<\k
                \node [anchor=north] at (\i + 0.5, \yI) {$\i$};
            \fi
        }
        
        \draw [decorate,decoration={brace,amplitude=5pt,mirror,raise=4ex}]  (0, -0.75 * \inners - 1.75) -- (0, -0.5) node[midway, anchor=east, xshift=-2.5em, align=center]{$k - k' + 1$\\alignments\\of $P$};

        \node[anchor=east] at (-1, 0.25) {$R:$};
        \node[anchor=east] at (-1, \yMui + 0.25) {$\mu_i:$};

        \draw[dotted] (-0.5, -0.75) -- (-0.5, 0.25);
        \draw[dotted] (\k + 0.5, -0.75 * \inners - 1.25) -- (\k + 0.5, 0.25);
    \end{tikzpicture}
    \caption{Occurrences of a pattern $P$ with a run $R$, where $k = 12$ is the exponent of $R$ and $k' = 4$ is the exponent of~$P$. The first occurrence and the last occurrence, both colored red, exist conditionally depending on the extensions of the two runs; the inner occurrences always exist. The bodies of the inner occurrences (yellow) are computed separately from the extensions (blue). In this example, we have $\mu_{\star} = 4$.}
\label{fig:long-preprocess}
\end{figure}

\begin{lemma}\label{lem:runs}
    We are given a periodic fragment $R$ of $T$ by its Lyndon representation. 
    Let $L$ and $k$ denote the Lyndon root and exponent of $R$, respectively.
    We can preprocess $R$ in $\cO(|R| \log k)$ time using $\cO(|R|)$ space so that, given any periodic pattern $P$ with Lyndon root $L$ by its Lyndon representation, we can return $P$'s statistic within $R$ in $\cO(\log k)$ time.
\end{lemma}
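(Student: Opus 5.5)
The plan is to describe every occurrence of $P$ inside $R$ by its alignment with the period of $L$, and then separate the part of the statistic that repeats across occurrences from the part that depends on the occurrence.

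\textbf{Setup.} Write $R = L_1 L^k L_2$ and $P = P_1 L^{k'} P_2$, with $p=|L|$. Since $P$ and $R$ share the Lyndon root $L$, and $L$ is primitive (so it occurs in $L^\infty$ only at multiples of $p$), every occurrence of $P$ in $R$ is determined by the position at which its body $L^{k'}$ starts. That position must be the start of one of the copies of $L$ in the body of $R$. Index these copies $0,\dots,k-1$; an alignment starting at copy $s$ uses copies $s,\dots,s+k'-1$. It is a valid occurrence if and only if the front extension $P_1$ fits to the left and $P_2$ fits to the right. For the inner alignments $1\le s\le k-k'-1$ this always holds, because $P_1$ (a proper suffix of $L$) lies inside copy $s-1$ and $P_2$ lies inside copy $s+k'$. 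Only the two boundary alignments $s=0$ and $s=k-k'$ need a test: check $|P_1|\le|L_1|$, respectively $|P_2|\le|L_2|$ (see \Cref{fig:long-preprocess}).

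\textbf{Decomposition of the statistic.} By the framework assumption $\Stat(T,i,j)=\aggOp_{t=i}^{j}\Stat(T,t,t)$, and $\aggOp$ is associative and commutative. Hence the aggregate over the inner occurrences splits into two parts.
\begin{itemize}
\item The bodies contribute $\aggOp_{s=1}^{k-k'-1}\aggOp_{c=s}^{s+k'-1} B_c$, where $B_c$ is the $\Stat$ of copy $c$. Regrouped by copy, this equals $\aggOp_c \mu_c\otimes B_c$, where $\mu_c=\min(c,\,k-1-c,\,k',\,k-k'-1)$ is a trapezoidal multiplicity (the $\mu_i$ row of the figure).
\item The extensions contribute $\aggOp_s \Stat(\text{suffix of copy } s-1 \text{ of length }|P_1|)$, together with the analogous aggregate of prefixes of length $|P_2|$ over copies $k'+1,\dots,k-1$.
\end{itemize}
The body aggregate is a trapezoid, so it decomposes into prefix and suffix aggregates. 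I would precompute, for each $c$, $F_c=\aggOp_{c'\le c} B_{c'}$ and the weighted prefixes $G_c=\aggOp_{c'\le c} c'\otimes B_{c'}$, together with the symmetric suffix versions. The two linear slopes then come from $G$-type arrays and the flat middle part is $\mu_\star\otimes(F_b-F_a)$. This step assumes $\aggOp$ has an inverse or that range queries are available. For extensions, for each offset $r\in[0,p)$ I would store prefix aggregates over copies of $\Stat$ of the length-$r$ suffix (and prefix) of each copy. That costs $k\cdot p=\cO(|R|)$ entries, and each entry is computed incrementally in $\cO(1)$ from the position-wise $\Stat$ values, or in $\cO(\log k)$ if $\otimes$ is involved. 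A query then takes a constant number of range aggregates, a constant number of $\otimes$ evaluations ($\cO(\log k)$ each), and the two conditional boundary occurrences, each computed directly in $\cO(1)$ via $\Stat(T,i,j)$.

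\textbf{Main obstacle.} The difficulty is answering range aggregates of a general associative, commutative $\aggOp$ without an inverse, such as $\max$. I would first try to handle this with a sparse-table-like or segment-tree structure over the $k$ copies for the arrays $B$ and the per-offset extension arrays. For the per-offset arrays a segment tree has $\cO(k)$ size per offset, so the total space stays $\cO(kp)=\cO(|R|)$ and queries take $\cO(\log k)$. That fits the claimed bounds. The weighted slopes $\aggOp_c c\otimes B_c$ over a range are the delicate part. I would handle them by storing, in each segment-tree node, both the plain aggregate and the aggregate weighted by the local index. A shifted weight is then obtained by combining the weighted aggregate with $\delta\otimes(\text{plain aggregate})$, which requires distributivity of $\otimes$ over $\aggOp$; this holds since $k\otimes(x\odot y)=(k\otimes x)\odot(k\otimes y)$ by commutativity. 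Merging nodes this way keeps each query at $\cO(\log k)$ operations, though possibly with one extra logarithmic factor from the $\otimes$ evaluations, which would need care to avoid.
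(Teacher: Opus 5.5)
Your decomposition is the paper's. You use the same synchronization argument from the primitivity of $L$ and the same split into inner bodies, inner extensions, and at most two conditional boundary occurrences. You also have the same trapezoidal multiplicities $\mu_c=\min(c,k-1-c,k',k-k'-1)$ and the same $k\cdot p$ per-offset extension tables. The only real difference is how you obtain range aggregates. You write the flat part as $\mu_\star\otimes(F_b-F_a)$, which needs an inverse. That is unavailable for $\min$ or $\max$, which the framework explicitly includes. You then fall back on segment trees and worry that weighted slopes over arbitrary ranges might cost an extra logarithmic factor.

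The paper avoids all of this by noticing that every range that arises is anchored. Let $y=\min\{k',k-k'-1\}$. The left slope is $\aggOp_{c=1}^{y-1} c\otimes B_c$, a prefix with fixed weights. The right slope is $\aggOp_{c=k-y}^{k-2}(k-1-c)\otimes B_c$, a suffix with fixed weights. The flat part runs over $[y,k-1-y]$, which is centered, so there are only $\cO(k)$ such ranges. They are nested and can be tabulated outward from the middle without any inverse. Similarly, front extensions of inner occurrences range over the prefix of copies $0,\dots,k-k'-2$, and back extensions over the suffix of copies $k'+1,\dots,k-1$. So per-offset prefix tables (for suffixes of copies) and suffix tables (for prefixes of copies) suffice. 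Note that without an inverse you need suffix aggregates, not prefix aggregates, for the back extensions. Every aggregate then becomes an $\cO(1)$ lookup. The only logarithmic cost is the single $\mu_\star\otimes(\cdot)$, which is even $\cO(1)$ for $+$ and $\min$.

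Your segment-tree fallback is still sound. Only the flat part needs a genuine range query, and your own $G$-arrays already give the slopes as prefix/suffix lookups. So the query stays $\cO(\log k)$, and the shifted-weight construction is unnecessary.

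One minor point: when $k=k'$ the two boundary alignments coincide, and that single occurrence requires both $|P_1|\le|L_1|$ and $|P_2|\le|L_2|$.
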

\begin{proof}
    Let $R = L_1 L^k L_2$ and $P = L'_1 L^{k'} L'_2$ be the Lyndon representations of $R$ and $P$.
    We hence assume that $k \ge k'$, since otherwise $P$ does not have any occurrences within $R$.

    Every occurrence of $P$ in $R$ starts at a position aligning the occurrences of~$L$ in $P$ with occurrences of $L$ in $R$.
    Note that $L$ is a primitive string (that is, if $L=U^k$ for a string $U$ and an integer $k$, then $U=L$ and $k=1$) and hence it has exactly two occurrences in string $LL$~\cite{DBLP:books/daglib/0020103}.
    An illustration is provided in \Cref{fig:long-preprocess}.

    It then follows that $P$ occurs in~$R$ at least $k - k' - 1$ times and at most $k - k' + 1$ times.
    Note that we can align the first occurrence of $L$ in $P$ with the first (resp.~the $(k-k'+1)$th) occurrence of $L$ in $R$ to get a valid occurrence only if $|L'_1| \le |L_1|$ (resp.~$|L'_2| \le |L_2|$).
    The other occurrences, obtained by aligning the first occurrence of $L$ in~$P$ with the $j$th occurrence of $L$ in $R$ for $j \in [2,k-k']$, which we call \emph{inner occurrences}, exist unconditionally. We compute the statistic of $P$ in $R$ in three steps:
    \begin{enumerate}
        \item the statistic of the bodies (i.e., $L^{k'}$) of $P$'s inner occurrences;
        \item the statistic of the extensions (i.e., $L'_1$ and $L'_2$) of $P$'s inner occurrences;
        \item the statistic of the at most two occurrences of $P$ that are not inner, if they exist.
    \end{enumerate}

    \proofsubparagraph{Inner occurrences body.}
    We assume that $k \geq k'+2$, as otherwise there are no inner occurrences.
    We denote by $X_0,\ldots,X_{k-1}$ the $\Stat$ values of the $k$ occurrences of $L$ in $R$.
    Moreover, we denote by $U_1,\ldots,U_{k-k'-1}$ the $\Stat$ values of the bodies of $P$'s inner occurrences, where $U_i = X_i \odot \dots \odot X_{i+k' - 1}$.
    Combining the values of the inner occurrences, we get:
    \[U_1 \odot \cdots \odot U_{k-k'-1} = (X_1 \odot \cdots \odot X_{k'}) \odot \cdots \odot (X_{k-k'-1} \odot \cdots \odot X_{k-2}) = \aggOp_{i=1}^{k-k'-1}\aggOp_{j=0}^{k'-1}X_{i+j}.\]

    We denote by $\mu_i$ the multiplicity of $X_i$ in the above aggregation.
    As illustrated in \Cref{fig:long-preprocess}, $\mu_i$ depends on $k$, $k'$, and how close the $i$th occurrence of $L$ in $R$ is to an endpoint of the run.
    In particular, we have $\mu_1 = \mu_{k-2} = 1$.
    Then, as we move toward the middle of $R$ (from either direction), the multiplicity might increase, by at most one per occurrence of~$L$, but as each occurrence of $L$ in $R$ belongs to at most $k'$ inner occurrences of $P$ and we have a total of $k-k'-1$ inner occurrences, we have $\mu_i = \min \{k-k'-1, k', i , k-i-1 \}$.
    Let $\mu_\star := \mu_{\lfloor{(k-1)/2}\rfloor}$,
    $y := \arg\min_j \{ j \in [1,k-2] \mid \mu_j = \mu_\star \}$
    and
    $z := \arg\max_j \{ j \in [1,k-2] \mid \mu_j = \mu_\star \}$.
    Note that $y = \min \{k-k'-1, k'\}$, as for $i \geq y$, $\mu_i = \mu_\star$, while for $i<y$, $\mu_i =i$. By symmetry, $z = k - 1 - \min \{k-k'-1, k'\}$. Using the above formula for the multiplicities, we obtain:
    \[
        \aggOp_{i=1}^{k-2} (\mu_i \otimes X_i) = \aggOp_{i=1}^{y-1} (i \otimes X_i) + (\mu_{\star} \otimes \aggOp_{i=y}^{z} X_i) + \aggOp_{i=z+1}^{k-2}((k-i-1) \otimes X_i). 
    \]

    The values of the first and the last aggregations can be precomputed for all possible values of $y$ and~$z$, respectively, in $\cO(k \log k)$ time and stored in $\cO(k)$ space.
    For the middle aggregation, we can also precompute and store all $\cO(k)$ relevant values since $z = k - 1 - y$. 
    We can then evaluate $\mu_{\star} \otimes\aggOp_{i=y}^{z} X_i$ at query time in $\cO(\log k)$ time obtaining the desired value.
    Thus, the values can be computed in $\cO(\log k)$ time after $\cO(|R| \log k)$-time preprocessing using $\cO(|R|)$ space.
    
    \proofsubparagraph{Extensions of inner occurrences.}
    We describe how to compute the values of the front extensions (i.e., $L'_1$) of inner occurrences; the back extensions ($L'_2$) can be handled symmetrically. There are exactly $k - k' - 1$ inner occurrences, whose front extensions are the length-$|L'_1|$ suffixes of the first $k - k' - 1$ occurrences of $L$ in $R$. We denote by $X_i^F$ the value of the length-$|L'_1|$ suffix of the $i$th occurrence of $L$. We have
    \[\aggOp_{i=0}^{k-k'-2}X_i^F = \aggOp_{i=1}^{k-k'-1} \Stat(R,s + pi -|L'_1|, s + pi - 1),
    \]
    where $p = \per(R) = |L|$ and $s$ is the starting position of the first occurrence of $L$ in $R$ (meaning that $s + pi$ is the starting position of the $i$th occurrence of $L$).
    We precompute the values for all relevant $k'$ and $|L'_1|$
    using a recurrence on $k'$.
    There are $k \cdot p$ values to compute in total, so this pre-processing step takes $\cO(kp) = \cO(|R|)$ time.

    \proofsubparagraph{Non-inner occurrences.} The at most two non-inner occurrences of $P$ can either align $P$'s first occurrence of $L$ with $R$'s first occurrence of $L$ or align $P$'s last occurrence of $L$ with $R$'s last occurrence of $L$.
    This entails aligning the front (resp.~back) extensions of the two runs, meaning that they only exist conditionally depending on the lengths of the extensions and the exponents $k$ and $k'$:
    \begin{itemize}
        \item if $k = k'$, then we only have a single non-inner occurrence, which exists if and only if $|L'_1| \le |L_1|$ and $|L'_2| \le |L_2|$;
        \item if $k > k'$, a non-inner occurrence aligning the front extensions exists if and only if $|L'_1| \le |L_1|$ and a non-inner occurrence aligning the back extensions exists if and only if $|L'_2| \le |L_2|$.
    \end{itemize}

    These conditions can be tested in constant time. If the occurrences exist, their values can be retrieved in constant time as well from $\Stat$.
    
    \proofsubparagraph{Wrap-up.} After performing the described precomputations in $\cO(\setsize{R} \log k)$ time using $\cO(\setsize{R})$ space, the statistic of any periodic pattern $P$ can be queried in $\cO(\log k)$ time.
\end{proof}

We make use of the following variant of runs.

\begin{definition}[$\tau$-Run]
A \emph{$\tau$-run} is a run of length at least $3\tau - 1$ with period at most $\frac{1}{3}\tau$.
\end{definition}
\begin{lemma}[\cite{DBLP:conf/esa/Charalampopoulos21}]\label{lem:tau-runs}
For any $\tau\in \mathbb{Z}_{>0}$,
the number of $\tau$-runs in $T\in\Sigma^n$ 
is $\cO(n/\tau)$.    
\end{lemma}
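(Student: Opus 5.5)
The statement to prove is \Cref{lem:tau-runs}: a string $T\in\Sigma^n$ has $\cO(n/\tau)$ $\tau$-runs. The plan is a charging argument. Each $\tau$-run receives disjoint, or boundedly overlapping, chunks of $\Omega(\tau)$ positions of $T$; summing then gives the bound. Every $\tau$-run $R=T[i\dd j]$ has $|R|\ge 3\tau-1$ and $p=\per(R)\le \tau/3$. The key fact is the standard overlap lemma for runs: two distinct runs with periods $p_1,p_2$ overlap in fewer than $p_1+p_2$ positions. If the overlap had length at least $p_1+p_2$ (at least $p_1+p_2-\gcd(p_1,p_2)$ suffices), the Fine--Wilf periodicity lemma would make the overlap $\gcd(p_1,p_2)$-periodic. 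Primitivity of the Lyndon roots then forces $p_1=p_2$, and maximality (the run definition) forces the two runs to coincide.

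With this fact, two distinct $\tau$-runs overlap in fewer than $2\tau/3$ positions. For each $\tau$-run $R=T[i\dd j]$, associate its middle portion $M(R)=T[i+\lceil\tau/3\rceil\dd j-\lceil\tau/3\rceil]$, which has length at least $3\tau-1-2\lceil\tau/3\rceil=\Omega(\tau)$.

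The main obstacle is that overlaps below $2\tau/3$ do not by themselves force the middle portions to be disjoint: a run of length $3\tau$ can still meet two neighbouring runs at both ends. I would handle this by charging differently. For each $\tau$-run I would select the single position $i+\tau$, and I claim that any window $T[x\dd x+\tau/3]$ contains $\cO(1)$ selected positions. To see this, suppose two $\tau$-runs $R_1,R_2$ with starts $i_1\le i_2$ have selected positions within $\tau/3$ of each other, so $i_2-i_1\le\tau/3$. Both runs extend to at least $i_1+3\tau-2$, so they share at least $2\tau$ positions, which exceeds $p_1+p_2\le 2\tau/3$. By the overlap fact they are therefore the same run. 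Hence distinct $\tau$-runs have starting positions more than $\tau/3$ apart, and the number of $\tau$-runs is at most $n/(\tau/3)+1=\cO(n/\tau)$.

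This reduces the whole proof to the overlap lemma. That lemma is where care is needed: Fine--Wilf must be applied to the common fragment, and one must argue that equal periods with a long common fragment imply equal maximal extensions. Only the weaker threshold $p_1+p_2$ is needed, and $2\tau\gg 2\tau/3$ gives ample slack. For $\tau\le 3$ the bound holds trivially, since there are at most $\cO(n)$ runs in total (or simply since $\tau=\cO(1)$).
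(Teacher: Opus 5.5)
The paper does not prove this lemma; it imports it from \cite{DBLP:conf/esa/Charalampopoulos21}, so there is no in-paper argument to compare against. Your proof is correct and self-contained, and it follows the standard route. First comes the overlap lemma for runs: Fine--Wilf applied to the common fragment, then primitivity of the length-$p$ root to force $p_1=p_2$, then maximality to force equality. A packing count finishes it. The count is sound. If two $\tau$-runs start at $i_1\le i_2$ with $i_2-i_1\le\tau/3$, they share at least $3\tau-1-\tau/3=8\tau/3-1>2\tau/3\ge p_1+p_2$ positions, so they coincide. Hence distinct $\tau$-runs have starting positions more than $\tau/3$ apart, which gives at most $3n/\tau+1$ of them. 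Since $\tau$-runs exist only when $n\ge 3\tau-1$, this is $\cO(n/\tau)$.

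A few cosmetic points.
\begin{itemize}
\item The selected position $i+\tau$ plays no role. What you actually prove is that the starting positions themselves are spread out, so state it that way.
\item Your intermediate claim of at least $2\tau$ shared positions needs $\tau\ge 3/2$. You only need more than $2\tau/3$ shared positions. In any case, $\tau$-runs exist only for $\tau\ge 3$, since every period is at least $1$. This also makes your closing remark about $\tau\le 3$ unnecessary.
\item The ``obstacle'' you raise for the middle portions is not real. No $\tau$-run can be contained in another, because its length $3\tau-1$ exceeds $2\tau/3$. So two distinct $\tau$-runs meet in a suffix/prefix of fewer than $2\tau/3\le 2\lceil\tau/3\rceil$ positions. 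Trimming $\lceil\tau/3\rceil$ from each end therefore does make the middle portions pairwise disjoint, each of length $\Omega(\tau)$. Equivalently, every position of $T$ lies in at most two $\tau$-runs. Either counting also finishes the proof.
\end{itemize}
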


The following result can be used to bound the total length of all highly-periodic runs.

\begin{lemma}[\cite{DBLP:conf/esa/Charalampopoulos20a}]\label{lem:tau-runs-length}
For any $T\in\Sigma^n$, 
$\rho(T)\coloneqq\sum_{R\in\mathcal{R}(T)} (|R|-2\cdot\per(R)+1)=\cO(n\log n)$.
\end{lemma}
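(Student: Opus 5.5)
We will reinterpret $\rho(T)$ as a count of square occurrences and then bound that count position by position.

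\textbf{Step 1: $\rho(T)$ counts primitively rooted squares.} Fix a run $R=T[a\dd b]$ with period $p=\per(R)$. For every $i\in[a,b-2p+1]$, the fragment $T[i\dd i+2p-1]$ lies inside $R$ and has period $p$, so it is a square $uu$ with $|u|=p$. Its root $u$ is primitive. Indeed, if $u=v^j$ with $j\ge 2$, then $|v|<p$. Every length-$p$ fragment of $R$ is a rotation of $u$, hence also a $j$th power of a rotation of $v$. The fragment $T[i\dd i+2p-1]$ has length $2p\ge |v|+p$, and $R$ has period $p$, so $R$ would also have period $|v|$. This contradicts the minimality of $p$.

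So $R$ contributes exactly $|R|-2p+1$ pairs $(i,p)$ such that $T[i\dd i+2p-1]$ is a primitively rooted square occurrence.

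Conversely, let $T[i\dd i+2p-1]$ be an occurrence of a square with primitive root of length $p$. Extend this fragment maximally to the left and right while preserving period $p$. The result is a fragment of length at least $2p$ with period $p$. Its smallest period is exactly $p$: a smaller period $q$ with $q+p\le 2p$ would, by the Fine--Wilf periodicity lemma, make $p$ a multiple of $\gcd(p,q)<p$, contradicting primitivity. Hence the extension is a run, and the pair $(i,p)$ is assigned to this unique run. Therefore
\[
\rho(T)=\bigl|\{(i,p) : T[i\dd i+2p-1] \text{ is a square with primitive root}\}\bigr|.
\]

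\textbf{Step 2: $\cO(\log n)$ primitively rooted squares start at each position.} Fix a position $i$. Let the primitively rooted squares starting at $i$ be $u_1^2,u_2^2,\ldots$ with $|u_1|<|u_2|<\cdots$. We invoke the Three Squares Lemma of Crochemore and Rytter: if $u^2,v^2,w^2$ are prefixes of a string, $u$ is primitive and $|u|<|v|<|w|$, then $|u|+|v|\le |w|$. Applying it to consecutive triples gives
\[
|u_{j+2}|\ge |u_j|+|u_{j+1}|,
\]
so the root lengths grow at least like Fibonacci numbers. Since $|u_j|\le n/2$, there are at most $\cO(\log n)$ such squares starting at position $i$. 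Summing over the $n$ positions yields $\rho(T)=\cO(n\log n)$.

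\textbf{Main obstacle.} The key step is the Three Squares Lemma itself. I would take it from the combinatorics-on-words literature rather than reprove it. If a self-contained argument were needed, I would work with $u$, $v$, $w$ as prefixes of a common string. Assume $|w|<|u|+|v|$. Then $v^2$ is a prefix of $w^2$ of length more than $|w|+|v|-\gcd$-type bounds, so Fine--Wilf applied to the overlapping periods $|v|$ and $|w|$ on a sufficiently long prefix forces a common smaller period. Pushed down to $u$, that common period contradicts the primitivity of $u$. The case analysis on how $|u|$ compares with $|w|-|v|$ is the delicate part of this argument.
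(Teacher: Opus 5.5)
Your argument is correct. The paper gives no proof of this lemma; it only cites the ESA 2020 paper on repetitions in 2D-strings. What you give is the standard proof of this bound.

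The contributions $|R|-2\per(R)+1$ map injectively, with no double counting across runs, into the set of occurrences of primitively rooted squares. Only this direction is needed for the upper bound; your converse shows the two quantities are in fact equal. The Crochemore--Rytter Three Squares Lemma then forces Fibonacci growth of the root lengths of primitively rooted squares starting at a common position, which gives $\cO(\log n)$ per position. Alternatively, you could cite Crochemore's classical $\cO(n\log n)$ bound on occurrences of primitively rooted squares directly.

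The only unproved ingredient is the Three Squares Lemma. The sketch in your ``Main obstacle'' paragraph would not establish it as written. Invoking it from the literature is nonetheless at the same level of rigor as the paper, which cites the whole statement.
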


We are now ready to describe the algorithm for long substrings.

\begin{lemma}\label{lem:periodic}
   Given $\Stat$, the suffix tree of $T$, and an integer $\alpha\ge 5\lceil \sqrt{n} \rceil$,
  we can compute the statistic of each substring~$U$ of~$T$ 
  with $\setsize{U} > \alpha$ and $\setsize{U} < |\Occ_T(U)|$,
  in $\cOt(n^2 / \alpha)$
  total time using $\cO(n)$ space.
\end{lemma}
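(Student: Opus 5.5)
We plan to reduce every long frequent substring to the periodic machinery of \Cref{lem:runs}. Let $U$ satisfy $|U|>\alpha$ and $|U|<|\Occ_T(U)|$. Then $|U|\cdot|\Occ_T(U)|>|U|^2>\alpha^2\ge 25n$, so \Cref{lem:long-frequent-periodic} with $\beta=25$ gives $\per(U)\le |U|/25$. In particular $U$ is periodic with some Lyndon root $L$. Every occurrence of $U$ extends uniquely to a run of $T$ with period $|L|$ and length at least $|U|>\alpha\ge 5\lceil\sqrt n\rceil$, hence to a run in $\mathcal{R}_L(T)$. Occurrences of $U$ inside distinct runs of the same period are distinct, and every occurrence lies in exactly one such run. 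Therefore the statistic of $U$ is the $\aggOp$-aggregation, over $R\in\mathcal{R}_L(T)$, of the statistic of $U$ within $R$. This is exactly what \Cref{lem:runs} returns.

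\textbf{Step 1: enumerate the runs.} We compute all runs of $T$ in $\cO(n)$ time. We keep only those of length at least $5\lceil\sqrt n\rceil$ whose period is at most a $1/25$ fraction of their length. With $\tau=\lceil\sqrt n\rceil$, these are $\tau$-runs, so by \Cref{lem:tau-runs} there are $\cO(n/\sqrt n)=\cO(\sqrt n)$ of them. We group them by Lyndon root, computing each root in linear time, e.g.\ by minimal rotation. We then preprocess each run with \Cref{lem:runs}, in $\cO(|R|\log n)$ time and $\cO(|R|)$ space. The total length of these runs is $\cO(n)$ up to a factor bounded via \Cref{lem:tau-runs-length}. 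So preprocessing costs $\cOt(n)$ time, and the stored tables fit in $\cOt(n)$ space. To stay within $\cO(n)$ space, we instead process one Lyndon-root class at a time and discard its tables afterwards.

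\textbf{Step 2: enumerate the candidate substrings.} We traverse the suffix tree. For each explicit or implicit node $v$ with $|\str(v)|>\alpha$ and $|\str(v)|<|\Occ_T(\str(v))|$, there are at most $n$ of these by \Cref{thm:string}. For each one we obtain the Lyndon representation of $U=\str(v)$: the period can be read off from an occurrence and the run containing it, and the root rotation from the run's precomputed Lyndon root and offset. This takes $\cO(1)$ time per node, using the leftmost occurrence stored at the closest explicit descendant, and an index from positions to the containing run.

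\textbf{Step 3: answer each candidate.} For each candidate $U$ we query \Cref{lem:runs} on every run of $\mathcal{R}_L(T)$ and aggregate the results. Each query costs $\cO(\log n)$ and $|\mathcal{R}_L(T)|=\cO(\sqrt n)$, so the total is $\cO(n\cdot (n/\alpha)\log n)=\cOt(n^2/\alpha)$. More precisely, I would bound the number of runs per class by $n/\alpha$: runs with the same Lyndon root overlap by less than $|L|$, so at most $n/\alpha$ runs of length $>\alpha$ fit.

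The main obstacle is the space bound together with the grouping. We must organize the computation so that only the tables of one Lyndon-root class exist at a time. This requires bucketing the candidate nodes by Lyndon root before processing, via an $\cO(n)$-size list per class, and verifying that Step 2's derivation of each candidate's root and offset works from the run containing its leftmost occurrence. I would settle this first, and then recheck that the runs selected in Step 1 indeed contain all occurrences, i.e.\ that the period bound $\per(U)\le|U|/25$ forces the containing run to be among the kept ones.
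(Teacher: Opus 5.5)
Your outline has the same structure as the paper's proof: periodicity via \Cref{lem:long-frequent-periodic} with $\beta=25$, at most $n$ candidates read off the suffix tree, grouping by Lyndon root, one invocation of \Cref{lem:runs} per run, and a query cost of $n\cdot\cO(n/\alpha)\cdot\cO(\log n)$. Two steps do not hold as written.

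\textbf{Step 1 (fixable).} The runs you keep are not $\lceil\sqrt n\rceil$-runs. A $\tau$-run must have period at most $\tau/3$, but your runs may have period up to $|R|/25$, which can be $\Theta(n)$. The paper gets a global $\cO(n/\alpha)$ count by covering such runs with $\tau_i$-runs for $\tau_i=2^i\lfloor\alpha/3\rfloor$ and summing a geometric series. Step 3 does not need that count. Your per-class argument is correct: two runs with the same Lyndon root overlap in fewer than $|L|\le|R|/25$ positions, so consecutive starts differ by more than $\tfrac{24}{25}\alpha$, giving $\cO(n/\alpha)$ runs of length $>\alpha$ per class. This is a more elementary substitute for \Cref{lem:tau-runs}. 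However, you must then actually discard runs of length at most $\alpha$, since they cannot contain any candidate. If you query every run of $\mathcal{R}_L(T)$, whose members only have length at least $5\lceil\sqrt n\rceil$, you pay up to $\Theta(\sqrt n)$ queries per candidate. That exceeds $\cOt(n^2/\alpha)$ once $\alpha\gg\sqrt n$.

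\textbf{Step 2 (the genuine gap).} This is the step you flag yourself, and the mechanism you propose does not work. A position, and even a whole long fragment, can lie in several kept runs with different periods. For example, inside $(a^{100}b)^{25}$ the fragment $a^{100}$ lies both in a run of period $1$ and in the run of period $101$. A longer candidate starting at the same position may belong only to the latter. So an index from positions to ``the'' containing run cannot determine $\per(U)$. The right run is the unique one of period at most $|U|/2$ containing the entire occurrence $[i,i+|U|-1]$, which depends on $|U|$. Storing all position--run incidences can take $\Theta(n\log n)$ space in the worst case (nested periodic structure), breaking the $\cO(n)$ bound.

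The paper closes this gap as follows:
\begin{itemize}
\item A 2-period query (from the internal pattern matching toolbox) returns $\per(U)$ in $\cO(1)$ time after $\cO(n)$ preprocessing.
\item A minimal-rotation query then returns the Lyndon root.
\item Candidates and runs are grouped by the suffix-tree loci of their Lyndon roots, computed with weighted ancestor queries. This also settles how you test equality of roots in Step 1, which you left unspecified.
\end{itemize}
Alternatively, given the $\cOt(n^2/\alpha)$ time budget, you could scan, for each candidate, all runs of length $>\alpha$ for the one of period at most $|U|/2$ containing its leftmost occurrence. That, however, needs the global $\cO(n/\alpha)$ bound on such runs, i.e., the paper's dyadic use of \Cref{lem:tau-runs}, not merely your per-class bound.
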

\begin{proof}
    For any $U$ with $|\Occ_T(U)| > \setsize{U} > \alpha$, 
    we have $\setsize{U} \cdot |\Occ_T(U)| > \alpha^2$.
    By \Cref{lem:long-frequent-periodic} and the fact that $\alpha \ge 5\lceil \sqrt{n} \rceil$, we know that $\per(U) \le \frac{1}{25}\setsize{U}$.
    Let us consider the set $\mathcal{R}$ of 
    runs~$R$ with $\per(R) \leq \frac{1}{25}\setsize{R}$ and $|R|\geq \alpha$.
    For each run $R \in \mathcal{R}$, there exists some non-negative integer $i$ such that $|R| \geq 3\cdot 2^i \lfloor \alpha /3 \rfloor$ and $\per(R)\leq 2^i \lfloor \alpha /3 \rfloor/3$.
    By \Cref{lem:tau-runs}, the total number of such runs is $\cO(n/\alpha \cdot \sum_{i=0}^\infty 1/2^i) = \cO(n/\alpha)$.
    Furthermore, by \Cref{lem:tau-runs-length}, 
    the total length of all runs $R \in \mathcal{R}$ is in $\cO(n \log n)$.
    We henceforth consider only runs in $\mathcal{R}$.
    We preprocess the suffix tree of $T$ in $\cO(n)$ time for answering WAQs in $\cO(1)$ time~\cite{DBLP:conf/cpm/BelazzouguiKPR21}.

    We compute all runs in $\mathcal{R}$ in $\cO(n)$ time~\cite{DBLP:conf/icalp/Ellert021}.
    This algorithm outputs a canonical representation
    of every run that includes a constant-size representation of its Lyndon root~$L$.
    Using a DFS traversal of the suffix tree of $T$, we compute 
    $|\Occ_T(U)|$, for every explicit node~$v$
    of the suffix tree with $U \coloneqq \str(v)$.
    This takes $\cO(n)$ time.
    Then, during another traversal, we find each explicit or implicit node~$v$ of the suffix tree with path-label $U$ such that $\setsize{U} > \alpha$ and $\setsize{U} < |\Occ_T(U)|$, and insert $U$ to an initially empty set $\mathcal{U}$; we have at most $n$ such nodes by \Cref{thm:string}.
    Note that we can compute the Lyndon representation of a periodic substring~$U$ of $T$ in $\cO(1)$ time after an $\cO(n)$-time preprocessing of $T$ using a so-called 2-period query~\cite{DBLP:journals/siamcomp/KociumakaRRW24} that returns the period of $U$, followed by a minimum rotation query~\cite{DBLP:conf/cpm/Kociumaka16} that returns the lexicographically smallest rotation of $U[1\dd \per(U)]$.

    We group all elements of the multiset $\mathcal{R} \cup \mathcal{U}$ in $\cO(n)$ time using WAQs to compute the loci of their Lyndon roots in the suffix tree (we process the Lyndon roots in the order of increasing length after an $\cO(n)$-time sorting); then two elements are in the same group if their Lyndon roots have the same locus (and the same length).

    We preprocess each $R\in \mathcal{R}$ according to \Cref{lem:runs} and then query the obtained data structure with every $U \in \mathcal{U}$ that has the same Lyndon root as $R$.
    Over all invocations of \Cref{lem:runs}, we have: (i) $\cOt(n)$ total construction time due to the total length of the runs; (ii) $\cO(n)$ space if we process every run separately; and (iii) $\cOt(n^2 / \alpha)$ total query time since $|\mathcal{U}|\leq n$ and $|\mathcal{R}| = \cO(n/\alpha)$.
\end{proof}

\begin{theorem}\label{the:con}
     There exists an $\cO(n)$-space data structure for problems in the framework of \Cref{cor:ds-stat} with $\cO(m)$ query time, given that:
     \begin{itemize}
         \item $\Stat(T,i, j)$ can be queried in $\cO(1)$ time, for all $i$, $j$, after $\cO(n)$-time preprocessing; 
         \item $\Stat(T,i, j) = \aggOp_{k=i}^j \Stat(T,k,k)$, for all $i$, $j$.
     \end{itemize}     
     We can construct this data structure in $\cOt(n\sqrt{n})$ time using $\cO(n)$ space.
\end{theorem}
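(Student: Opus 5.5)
The plan is to assemble the structure exactly as in the proof of \Cref{thm:ds-blackbox} via \Cref{cor:ds-stat}, and to show that the only costly step, precomputing the statistics of the frequent substrings, fits in $\cOt(n\sqrt{n})$ time by balancing \Cref{lem:short} against \Cref{lem:periodic}.

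\textbf{The structure and queries.} The data structure consists of:
\begin{itemize}
\item the suffix tree of $T$, with the edge dictionaries of \Cref{lem:dict};
\item the $\cO(n)$-time preprocessing for $\Stat$, which the hypothesis guarantees gives $\cO(1)$-time evaluation;
\item for every (explicit or implicit) node $v$ with $U=\str(v)$ and $\setsize{U}<|\Occ_T(U)|$, the value $\aggOp_{i\in\Occ_T(U)}\Stat(T,i,i+\setsize{U}-1)$ stored at $v$, with $v$ made explicit if it was implicit.
\end{itemize}
By \Cref{thm:string} there are at most $n$ stored nodes, so the space is $\cO(n)$. The query algorithm is that of \Cref{cor:ds-stat}. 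Spell $P$ in $\cO(m)$ time. If its locus stores a value, return it. Otherwise $|\Occ_T(P)|\le m$, so enumerate the occurrences from the subtree and aggregate their $\Stat$ values in $\cO(m)$ time.

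\textbf{Construction cost.} Building the suffix tree and dictionaries takes $\cO(n\log n)$ time and $\cO(n)$ space. The $\Stat$ preprocessing takes $\cO(n)$. Inserting the at most $n$ new explicit nodes takes $\cO(n)$: locate them during a DFS that computes string depths and leaf counts, then split the edges. What remains is computing the stored values, and here I split the frequent substrings by length with threshold $\alpha=5\lceil\sqrt{n}\rceil$.
\begin{itemize}
\item For $\setsize{U}\le\alpha$, \Cref{lem:short} gives all values in $\cO(n\alpha)=\cO(n\sqrt{n})$ time and $\cO(n)$ space.
\item For $\setsize{U}>\alpha$, \Cref{lem:periodic} with the same $\alpha$ gives all values in $\cOt(n^2/\alpha)=\cOt(n\sqrt{n})$ time and $\cO(n)$ space.
\end{itemize}
The $\cOt$ hides only the $\log k$ factors from evaluating $\otimes$ and the $\cO(n\log n)$ total run length from \Cref{lem:tau-runs-length}.

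\textbf{Main obstacle.} The delicate point is the long case, specifically aggregating the per-run answers from \Cref{lem:runs} into the global statistic of $U$. Since $\aggOp$ is associative and commutative, this works if and only if every occurrence of $U$ is counted in exactly one run of $\mathcal{R}$.
\begin{itemize}
\item \emph{Every occurrence lies in a run of $\mathcal{R}$.} By \Cref{lem:long-frequent-periodic}, $\per(U)\le\setsize{U}/25$. So each occurrence of $U$ extends to a unique run with the same period, and hence the same Lyndon root. That run has length at least $\setsize{U}>\alpha$ and period at most $\setsize{U}/25\le|R|/25$, so it belongs to $\mathcal{R}$.
\item \emph{No occurrence is counted twice.} Two distinct runs with the same Lyndon root overlap in fewer than $\per$ positions, so they cannot share an occurrence of $U$, whose length is at least $2\per(U)$.
\end{itemize}
Consequently, aggregating with $\aggOp$ the answers of \Cref{lem:runs} over all runs in the group of $U$'s Lyndon root yields exactly the required statistic. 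This is the step I would verify most carefully, together with the bookkeeping that each run's structure is built and discarded in turn so that space stays $\cO(n)$.
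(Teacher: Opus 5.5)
Your proposal is correct and follows the paper's proof. The structure is that of \Cref{cor:ds-stat} (via \Cref{thm:ds-blackbox}), and the construction splits the frequent substrings at $\alpha = 5\lceil\sqrt{n}\rceil$, with short ones handled by \Cref{lem:short} and long ones by \Cref{lem:periodic}. You also argue that every occurrence of a long frequent $U$ lies in exactly one run of $\mathcal{R}$, using $\per(U)\le\setsize{U}/25$ and the fact that distinct runs with the same period overlap in fewer than $\per$ positions; this is correct, and the paper leaves it implicit inside \Cref{lem:periodic}.
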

\begin{proof}  
    For the first part, we simply follow the proof of \Cref{the:usi}, plugging in the more general operators.
    For the second part, preprocessing (i.e., constructing  the suffix tree of $T$) takes $\cO(n\log n)$ time using $\cO(n)$ space.
    By setting $\alpha\coloneqq5\lceil \sqrt{n} \rceil$,
    \Cref{lem:short} takes $\cO(n\sqrt{n})$ time and $\cO(n)$ space, and
    \Cref{lem:periodic} takes $\cOt(n\sqrt{n})$ time and $\cO(n)$ space.
\end{proof}

One can now easily verify that by applying \Cref{the:con} we can construct the data structure underlying \Cref{the:usi} within the claimed bounds.

\bibliographystyle{alphaurl}% the mandatory bibstyle
\bibliography{references}

@inproceedings{DBLP:conf/icde/BernardiniCCGGLPP25,
  author = {Giulia Bernardini and Huiping Chen and Alessio Conte and Roberto Grossi and Veronica Guerrini and Grigorios Loukides and Nadia Pisanti and Solon P. Pissis},
  title = {Indexing Strings with Utilities},
  booktitle = {41st {IEEE} International Conference on Data Engineering, {ICDE} 2025},
  pages = {2782--2795},
  publisher = {{IEEE}},
  year = {2025},
  doi = {10.1109/ICDE65448.2025.00209},
  timestamp = {Wed, 29 Oct 2025 00:00:00 +0100},
  biburl = {https://dblp.org/rec/conf/icde/BernardiniCCGGLPP25.bib},
  bibsource = {dblp computer science bibliography, https://dblp.org},
  _bib2doi_selected = {dblp:/rec/conf/icde/BernardiniCCGGLPP25.bib},
  _bib2doi_confirmed = {true},
}

@inproceedings{DBLP:conf/latin/BenderF00,
  author = {Michael A. Bender and Martin Farach{-}Colton},
  title = {The {LCA} Problem Revisited},
  booktitle = {{LATIN} 2000: Theoretical Informatics, 4th Latin American Symposium},
  series = {Lecture Notes in Computer Science},
  volume = {1776},
  pages = {88--94},
  publisher = {Springer},
  year = {2000},
  url = {https://doi.org/10.1007/10719839\_9},
  doi = {10.1007/10719839\_9},
  timestamp = {Fri, 09 Apr 2021 01:00:00 +0200},
  biburl = {https://dblp.org/rec/conf/latin/BenderF00.bib},
  bibsource = {dblp computer science bibliography, https://dblp.org},
  _bib2doi_selected = {dblp:/rec/conf/latin/BenderF00.bib},
  _bib2doi_confirmed = {true},
}

@inproceedings{DBLP:conf/stoc/Han02,
  author = {Yijie Han},
  title = {Deterministic sorting in ${O}(n\log \log n)$ time and linear space},
  booktitle = {Proceedings on 34th Annual {ACM} Symposium on Theory of Computing,},
  pages = {602--608},
  publisher = {{ACM}},
  year = {2002},
  url = {https://doi.org/10.1145/509907.509993},
  doi = {10.1145/509907.509993},
  timestamp = {Tue, 06 Nov 2018 00:00:00 +0100},
  biburl = {https://dblp.org/rec/conf/stoc/Han02.bib},
  bibsource = {dblp computer science bibliography, https://dblp.org},
  _bib2doi_selected = {dblp:/rec/conf/stoc/Han02.bib},
  _bib2doi_confirmed = {true},
}

@article{DBLP:journals/mst/Sadakane07,
  author = {Kunihiko Sadakane},
  title = {Compressed Suffix Trees with Full Functionality},
  journal = {Theory Comput. Syst.},
  volume = {41},
  number = {4},
  pages = {589--607},
  year = {2007},
  doi = {10.1007/S00224-006-1198-X},
  timestamp = {Wed, 14 Nov 2018 00:00:00 +0100},
  biburl = {https://dblp.org/rec/journals/mst/Sadakane07.bib},
  bibsource = {dblp computer science bibliography, https://dblp.org},
  _bib2doi_selected = {dblp:/rec/journals/mst/Sadakane07.bib},
  _bib2doi_confirmed = {true},
}

@inproceedings{DBLP:conf/icde/Gabory0LPZ24,
  author = {Est{\'{e}}ban Gabory and Chang Liu and Grigorios Loukides and Solon P. Pissis and Wiktor Zuba},
  title = {Space-Efficient Indexes for Uncertain Strings},
  booktitle = {40th {IEEE} International Conference on Data Engineering, {ICDE} 2024},
  pages = {4828--4842},
  publisher = {{IEEE}},
  year = {2024},
  doi = {10.1109/ICDE60146.2024.00367},
  timestamp = {Sun, 06 Oct 2024 01:00:00 +0200},
  biburl = {https://dblp.org/rec/conf/icde/Gabory0LPZ24.bib},
  bibsource = {dblp computer science bibliography, https://dblp.org},
  _bib2doi_selected = {dblp:/rec/conf/icde/Gabory0LPZ24.bib},
  _bib2doi_confirmed = {true},
}

@article{DBLP:journals/jea/Charalampopoulos20,
  author = {Panagiotis Charalampopoulos and Costas S. Iliopoulos and Chang Liu and Solon P. Pissis},
  title = {Property Suffix Array with Applications in Indexing Weighted Sequences},
  journal = {{ACM} J. Exp. Algorithmics},
  volume = {25},
  pages = {1--16},
  year = {2020},
  url = {https://doi.org/10.1145/3385898},
  doi = {10.1145/3385898},
  timestamp = {Sat, 08 Jan 2022 00:00:00 +0100},
  biburl = {https://dblp.org/rec/journals/jea/Charalampopoulos20.bib},
  bibsource = {dblp computer science bibliography, https://dblp.org},
  _bib2doi_selected = {dblp:/rec/journals/jea/Charalampopoulos20.bib},
  _bib2doi_confirmed = {true},
}

@inproceedings{DBLP:conf/esa/GawrychowskiLN14,
  author = {Pawel Gawrychowski and Moshe Lewenstein and Patrick K. Nicholson},
  title = {Weighted Ancestors in Suffix Trees},
  booktitle = {Algorithms - {ESA} 2014 - 22th Annual European Symposium},
  series = {Lecture Notes in Computer Science},
  volume = {8737},
  pages = {455--466},
  publisher = {Springer},
  year = {2014},
  doi = {10.1007/978-3-662-44777-2\_38},
  timestamp = {Fri, 26 May 2017 01:00:00 +0200},
  biburl = {https://dblp.org/rec/conf/esa/GawrychowskiLN14.bib},
  bibsource = {dblp computer science bibliography, https://dblp.org},
  _bib2doi_selected = {dblp:/rec/conf/esa/GawrychowskiLN14.bib},
  _bib2doi_confirmed = {true},
}

@article{DBLP:journals/tcs/GibneyMT25,
  author = {Daniel Gibney and Paul Macnichol and Sharma V. Thankachan},
  title = {Non-overlapping indexing in {BWT}-runs bounded space},
  journal = {Theor. Comput. Sci.},
  volume = {1056},
  pages = {115512},
  year = {2025},
  doi = {10.1016/J.TCS.2025.115512},
  timestamp = {Tue, 14 Oct 2025 01:00:00 +0200},
  biburl = {https://dblp.org/rec/journals/tcs/GibneyMT25.bib},
  bibsource = {dblp computer science bibliography, https://dblp.org},
  _bib2doi_selected = {dblp:/rec/journals/tcs/GibneyMT25.bib},
  _bib2doi_confirmed = {true},
}

@article{DBLP:journals/algorithmica/GangulyST20,
  author = {Arnab Ganguly and Rahul Shah and Sharma V. Thankachan},
  title = {Succinct Non-overlapping Indexing},
  journal = {Algorithmica},
  volume = {82},
  number = {1},
  pages = {107--117},
  year = {2020},
  doi = {10.1007/S00453-019-00605-5},
  timestamp = {Wed, 28 Feb 2024 00:00:00 +0100},
  biburl = {https://dblp.org/rec/journals/algorithmica/GangulyST20.bib},
  bibsource = {dblp computer science bibliography, https://dblp.org},
  _bib2doi_selected = {dblp:/rec/journals/algorithmica/GangulyST20.bib},
  _bib2doi_confirmed = {true},
}

@inproceedings{DBLP:conf/isaac/CohenP09,
  author = {Hagai Cohen and Ely Porat},
  title = {Range Non-overlapping Indexing},
  booktitle = {Algorithms and Computation, 20th International Symposium, {ISAAC} 2009},
  series = {Lecture Notes in Computer Science},
  volume = {5878},
  pages = {1044--1053},
  publisher = {Springer},
  year = {2009},
  doi = {10.1007/978-3-642-10631-6\_105},
  timestamp = {Fri, 19 May 2017 01:00:00 +0200},
  biburl = {https://dblp.org/rec/conf/isaac/CohenP09.bib},
  bibsource = {dblp computer science bibliography, https://dblp.org},
  _bib2doi_selected = {dblp:/rec/conf/isaac/CohenP09.bib},
  _bib2doi_confirmed = {true},
}

@article{DBLP:journals/iandc/BartonK0PR20,
  author = {Carl Barton and Tomasz Kociumaka and Chang Liu and Solon P. Pissis and Jakub Radoszewski},
  title = {Indexing weighted sequences: Neat and efficient},
  journal = {Inf. Comput.},
  volume = {270},
  year = {2020},
  doi = {10.1016/J.IC.2019.104462},
  timestamp = {Sun, 06 Oct 2024 01:00:00 +0200},
  biburl = {https://dblp.org/rec/journals/iandc/BartonK0PR20.bib},
  bibsource = {dblp computer science bibliography, https://dblp.org},
  _bib2doi_selected = {dblp:/rec/journals/iandc/BartonK0PR20.bib},
  _bib2doi_confirmed = {true},
}

@inproceedings{DBLP:conf/cpm/BelazzouguiKPR21,
  author = {Djamal Belazzougui and Dmitry Kosolobov and Simon J. Puglisi and Rajeev Raman},
  title = {Weighted Ancestors in Suffix Trees Revisited},
  booktitle = {32nd Annual Symposium on Combinatorial Pattern Matching, {CPM} 2021},
  series = {LIPIcs},
  volume = {191},
  pages = {8:1--8:15},
  publisher = {Schloss Dagstuhl - Leibniz-Zentrum f{\"{u}}r Informatik},
  year = {2021},
  doi = {10.4230/LIPICS.CPM.2021.8},
  timestamp = {Mon, 03 Mar 2025 00:00:00 +0100},
  biburl = {https://dblp.org/rec/conf/cpm/BelazzouguiKPR21.bib},
  bibsource = {dblp computer science bibliography, https://dblp.org},
  _bib2doi_selected = {dblp:/rec/conf/cpm/BelazzouguiKPR21.bib},
  _bib2doi_confirmed = {true},
}

@inproceedings{DBLP:conf/icalp/Ellert021,
  author = {Jonas Ellert and Johannes Fischer},
  title = {Linear Time Runs Over General Ordered Alphabets},
  booktitle = {48th International Colloquium on Automata, Languages, and Programming, {ICALP} 2021},
  series = {LIPIcs},
  volume = {198},
  pages = {63:1--63:16},
  publisher = {Schloss Dagstuhl - Leibniz-Zentrum f{\"{u}}r Informatik},
  year = {2021},
  doi = {10.4230/LIPICS.ICALP.2021.63},
  timestamp = {Tue, 06 Jul 2021 01:00:00 +0200},
  biburl = {https://dblp.org/rec/conf/icalp/Ellert021.bib},
  bibsource = {dblp computer science bibliography, https://dblp.org},
  _bib2doi_selected = {dblp:/rec/conf/icalp/Ellert021.bib},
  _bib2doi_confirmed = {true},
}

@inproceedings{DBLP:conf/esa/Charalampopoulos20a,
  author = {Panagiotis Charalampopoulos and Jakub Radoszewski and Wojciech Rytter and Tomasz Walen and Wiktor Zuba},
  title = {The Number of Repetitions in {2D}-Strings},
  booktitle = {28th Annual European Symposium on Algorithms, {ESA} 2020},
  series = {LIPIcs},
  volume = {173},
  pages = {32:1--32:18},
  publisher = {Schloss Dagstuhl - Leibniz-Zentrum f{\"{u}}r Informatik},
  year = {2020},
  doi = {10.4230/LIPICS.ESA.2020.32},
  timestamp = {Fri, 21 Nov 2025 23:44:11 +0100},
  biburl = {https://dblp.org/rec/conf/esa/Charalampopoulos20a.bib},
  bibsource = {dblp computer science bibliography, https://dblp.org},
  _bib2doi_selected = {dblp:/rec/conf/esa/Charalampopoulos20a.bib},
  _bib2doi_confirmed = {true},
}

@inproceedings{DBLP:conf/esa/Charalampopoulos21,
  author = {Panagiotis Charalampopoulos and Tomasz Kociumaka and Solon P. Pissis and Jakub Radoszewski},
  title = {Faster Algorithms for Longest Common Substring},
  booktitle = {29th Annual European Symposium on Algorithms, {ESA} 2021},
  series = {LIPIcs},
  volume = {204},
  pages = {30:1--30:17},
  publisher = {Schloss Dagstuhl - Leibniz-Zentrum f{\"{u}}r Informatik},
  year = {2021},
  doi = {10.4230/LIPICS.ESA.2021.30},
  timestamp = {Mon, 26 Jun 2023 01:00:00 +0200},
  biburl = {https://dblp.org/rec/conf/esa/Charalampopoulos21.bib},
  bibsource = {dblp computer science bibliography, https://dblp.org},
  _bib2doi_selected = {dblp:/rec/conf/esa/Charalampopoulos21.bib},
  _bib2doi_confirmed = {true},
}

@inproceedings{DBLP:conf/cpm/Kociumaka16,
  author = {Tomasz Kociumaka},
  title = {Minimal Suffix and Rotation of a Substring in Optimal Time},
  booktitle = {27th Annual Symposium on Combinatorial Pattern Matching, {CPM} 2016},
  series = {LIPIcs},
  pages = {28:1--28:12},
  publisher = {Schloss Dagstuhl - Leibniz-Zentrum f{\"{u}}r Informatik},
  year = {2016},
  doi = {10.4230/LIPICS.CPM.2016.28},
  timestamp = {Thu, 23 Aug 2018 01:00:00 +0200},
  biburl = {https://dblp.org/rec/conf/cpm/Kociumaka16.bib},
  bibsource = {dblp computer science bibliography, https://dblp.org},
  _bib2doi_selected = {dblp:/rec/conf/cpm/Kociumaka16.bib},
  _bib2doi_confirmed = {true},
}

@article{DBLP:journals/siamcomp/KociumakaRRW24,
  author = {Tomasz Kociumaka and Jakub Radoszewski and Wojciech Rytter and Tomasz Walen},
  title = {Internal Pattern Matching Queries in a Text and Applications},
  journal = {{SIAM} J. Comput.},
  volume = {53},
  number = {5},
  pages = {1524--1577},
  year = {2024},
  doi = {10.1137/23M1567618},
  timestamp = {Wed, 06 Nov 2024 00:00:00 +0100},
  biburl = {https://dblp.org/rec/journals/siamcomp/KociumakaRRW24.bib},
  bibsource = {dblp computer science bibliography, https://dblp.org},
  _bib2doi_selected = {dblp:/rec/journals/siamcomp/KociumakaRRW24.bib},
  _bib2doi_confirmed = {true},
}

@article{DBLP:journals/tcs/BilleGPRS22,
  author = {Philip Bille and Inge Li G{\o}rtz and Max Rish{\o}j Pedersen and Eva Rotenberg and Teresa Anna Steiner},
  title = {String indexing for top-\emph{k} close consecutive occurrences},
  journal = {Theor. Comput. Sci.},
  volume = {927},
  pages = {133--147},
  year = {2022},
  doi = {10.1016/J.TCS.2022.06.004},
  timestamp = {Sun, 02 Nov 2025 00:00:00 +0100},
  biburl = {https://dblp.org/rec/journals/tcs/BilleGPRS22.bib},
  bibsource = {dblp computer science bibliography, https://dblp.org},
  _bib2doi_selected = {dblp:/rec/journals/tcs/BilleGPRS22.bib},
  _bib2doi_confirmed = {true},
}

@inproceedings{DBLP:conf/stacs/BilleGLPRS24,
  author = {Philip Bille and Inge Li G{\o}rtz and Moshe Lewenstein and Solon P. Pissis and Eva Rotenberg and Teresa Anna Steiner},
  title = {Gapped String Indexing in Subquadratic Space and Sublinear Query Time},
  booktitle = {41st International Symposium on Theoretical Aspects of Computer Science, {STACS} 2024},
  series = {LIPIcs},
  pages = {16:1--16:21},
  publisher = {Schloss Dagstuhl - Leibniz-Zentrum f{\"{u}}r Informatik},
  year = {2024},
  doi = {10.4230/LIPICS.STACS.2024.16},
  timestamp = {Mon, 03 Mar 2025 00:00:00 +0100},
  biburl = {https://dblp.org/rec/conf/stacs/BilleGLPRS24.bib},
  bibsource = {dblp computer science bibliography, https://dblp.org},
  _bib2doi_selected = {dblp:/rec/conf/stacs/BilleGLPRS24.bib},
  _bib2doi_confirmed = {true},
}

@article{DBLP:journals/algorithmica/BilleGPS23,
  author = {Philip Bille and Inge Li G{\o}rtz and Max Rish{\o}j Pedersen and Teresa Anna Steiner},
  title = {Gapped Indexing for Consecutive Occurrences},
  journal = {Algorithmica},
  volume = {85},
  number = {4},
  pages = {879--901},
  year = {2023},
  doi = {10.1007/S00453-022-01051-6},
  timestamp = {Mon, 26 Jun 2023 01:00:00 +0200},
  biburl = {https://dblp.org/rec/journals/algorithmica/BilleGPS23.bib},
  bibsource = {dblp computer science bibliography, https://dblp.org},
  _bib2doi_selected = {dblp:/rec/journals/algorithmica/BilleGPS23.bib},
  _bib2doi_confirmed = {true},
}

@article{DBLP:journals/is/GawrychowskiGSS26,
  author = {Pawel Gawrychowski and Garance Gourdel and Tatiana Starikovskaya and Teresa Anna Steiner},
  title = {Compressed consecutive pattern matching},
  journal = {Inf. Syst.},
  volume = {136},
  pages = {102607},
  year = {2026},
  doi = {10.1016/J.IS.2025.102607},
  timestamp = {Tue, 14 Oct 2025 01:00:00 +0200},
  biburl = {https://dblp.org/rec/journals/is/GawrychowskiGSS26.bib},
  bibsource = {dblp computer science bibliography, https://dblp.org},
  _bib2doi_selected = {dblp:/rec/journals/is/GawrychowskiGSS26.bib},
  _bib2doi_confirmed = {true},
}

@inproceedings{DBLP:conf/cpm/BrodalLPS99,
  author = {Gerth St{\o}lting Brodal and Rune B. Lyngs{\o} and Christian N. S. Pedersen and Jens Stoye},
  title = {Finding Maximal Pairs with Bounded Gap},
  booktitle = {Combinatorial Pattern Matching, 10th Annual Symposium, {CPM} 99},
  series = {Lecture Notes in Computer Science},
  pages = {134--149},
  publisher = {Springer},
  year = {1999},
  url = {https://doi.org/10.1007/3-540-48452-3\_11},
  doi = {10.1007/3-540-48452-3\_11},
  timestamp = {Tue, 21 Mar 2023 00:00:00 +0100},
  biburl = {https://dblp.org/rec/conf/cpm/BrodalLPS99.bib},
  bibsource = {dblp computer science bibliography, https://dblp.org},
  _bib2doi_selected = {dblp:/rec/conf/cpm/BrodalLPS99.bib},
  _bib2doi_confirmed = {true},
}

@inproceedings{DBLP:conf/soda/Muthukrishnan02,
  author = {S. Muthukrishnan},
  title = {Efficient algorithms for document retrieval problems},
  booktitle = {Proceedings of the Thirteenth Annual {ACM-SIAM} Symposium on Discrete Algorithms, {SODA} 2002},
  pages = {657--666},
  publisher = {{ACM/SIAM}},
  year = {2002},
  url = {http://dl.acm.org/citation.cfm?id=545381.545469},
  timestamp = {Mon, 10 May 2021 01:00:00 +0200},
  biburl = {https://dblp.org/rec/conf/soda/Muthukrishnan02.bib},
  bibsource = {dblp computer science bibliography, https://dblp.org},
  _bib2doi_selected = {dblp:/rec/conf/soda/Muthukrishnan02.bib},
  _bib2doi_confirmed = {true},
}

@inproceedings{DBLP:conf/wads/KellerKL07,
  author = {Orgad Keller and Tsvi Kopelowitz and Moshe Lewenstein},
  title = {Range Non-overlapping Indexing and Successive List Indexing},
  booktitle = {Algorithms and Data Structures, 10th International Workshop, {WADS} 2007},
  series = {Lecture Notes in Computer Science},
  pages = {625--636},
  publisher = {Springer},
  year = {2007},
  doi = {10.1007/978-3-540-73951-7\_54},
  timestamp = {Sun, 21 May 2017 01:00:00 +0200},
  biburl = {https://dblp.org/rec/conf/wads/KellerKL07.bib},
  bibsource = {dblp computer science bibliography, https://dblp.org},
  _bib2doi_selected = {dblp:/rec/conf/wads/KellerKL07.bib},
  _bib2doi_confirmed = {true},
}

@article{DBLP:journals/mst/BilleGVV14,
  author = {Philip Bille and Inge Li G{\o}rtz and Hjalte Wedel Vildh{\o}j and S{\o}ren Vind},
  title = {String Indexing for Patterns with Wildcards},
  journal = {Theory Comput. Syst.},
  volume = {55},
  number = {1},
  pages = {41--60},
  year = {2014},
  doi = {10.1007/S00224-013-9498-4},
  timestamp = {Sun, 02 Jun 2019 01:00:00 +0200},
  biburl = {https://dblp.org/rec/journals/mst/BilleGVV14.bib},
  bibsource = {dblp computer science bibliography, https://dblp.org},
  _bib2doi_selected = {dblp:/rec/journals/mst/BilleGVV14.bib},
  _bib2doi_confirmed = {true},
}

@article{DBLP:journals/algorithmica/IliopoulosR09,
  author = {Costas S. Iliopoulos and M. Sohel Rahman},
  title = {Indexing Factors with Gaps},
  journal = {Algorithmica},
  volume = {55},
  number = {1},
  pages = {60--70},
  year = {2009},
  doi = {10.1007/S00453-007-9141-3},
  timestamp = {Thu, 15 Jun 2017 01:00:00 +0200},
  biburl = {https://dblp.org/rec/journals/algorithmica/IliopoulosR09.bib},
  bibsource = {dblp computer science bibliography, https://dblp.org},
  _bib2doi_selected = {dblp:/rec/journals/algorithmica/IliopoulosR09.bib},
  _bib2doi_confirmed = {true},
}

@article{DBLP:journals/algorithmica/BilleG14,
  author = {Philip Bille and Inge Li G{\o}rtz},
  title = {Substring Range Reporting},
  journal = {Algorithmica},
  volume = {69},
  number = {2},
  pages = {384--396},
  year = {2014},
  url = {https://doi.org/10.1007/s00453-012-9733-4},
  doi = {10.1007/s00453-012-9733-4},
  timestamp = {Sun, 02 Jun 2019 01:00:00 +0200},
  biburl = {https://dblp.org/rec/journals/algorithmica/BilleG14.bib},
  bibsource = {dblp computer science bibliography, https://dblp.org},
  _bib2doi_selected = {dblp:/rec/journals/algorithmica/BilleG14.bib},
  _bib2doi_confirmed = {true},
}

@article{DBLP:journals/tcs/NavarroT16,
  author = {Gonzalo Navarro and Sharma V. Thankachan},
  title = {Reporting consecutive substring occurrences under bounded gap constraints},
  journal = {Theor. Comput. Sci.},
  volume = {638},
  pages = {108--111},
  year = {2016},
  doi = {10.1016/J.TCS.2016.02.005},
  timestamp = {Wed, 28 Feb 2024 00:00:00 +0100},
  biburl = {https://dblp.org/rec/journals/tcs/NavarroT16.bib},
  bibsource = {dblp computer science bibliography, https://dblp.org},
  _bib2doi_selected = {dblp:/rec/journals/tcs/NavarroT16.bib},
  _bib2doi_confirmed = {true},
}

@article{DBLP:journals/algorithmica/Charalampopoulos21,
  author = {Panagiotis Charalampopoulos and Tomasz Kociumaka and Manal Mohamed and Jakub Radoszewski and Wojciech Rytter and Tomasz Walen},
  title = {Internal Dictionary Matching},
  journal = {Algorithmica},
  volume = {83},
  number = {7},
  pages = {2142--2169},
  year = {2021},
  doi = {10.1007/S00453-021-00821-Y},
  timestamp = {Tue, 13 Jul 2021 01:00:00 +0200},
  biburl = {https://dblp.org/rec/journals/algorithmica/Charalampopoulos21.bib},
  bibsource = {dblp computer science bibliography, https://dblp.org},
  _bib2doi_selected = {dblp:/rec/journals/algorithmica/Charalampopoulos21.bib},
  _bib2doi_confirmed = {true},
}

@book{DBLP:books/cu/Gusfield1997,
  author = {Dan Gusfield},
  title = {Algorithms on Strings, Trees, and Sequences - Computer Science and Computational Biology},
  publisher = {Cambridge University Press},
  year = {1997},
  doi = {10.1017/CBO9780511574931},
  isbn = {0-521-58519-8},
  timestamp = {Mon, 29 Jul 2019 01:00:00 +0200},
  biburl = {https://dblp.org/rec/books/cu/Gusfield1997.bib},
  bibsource = {dblp computer science bibliography, https://dblp.org},
  _bib2doi_selected = {dblp:/rec/books/cu/Gusfield1997.bib},
  _bib2doi_confirmed = {true},
}

@book{DBLP:books/aw/Baeza-YatesR2011,
  author = {Ricardo Baeza{-}Yates and Berthier A. Ribeiro{-}Neto},
  title = {Modern Information Retrieval - the concepts and technology behind search, Second edition},
  publisher = {Pearson Education Ltd., Harlow, England},
  year = {2011},
  url = {http://www.mir2ed.org/},
  isbn = {978-0-321-41691-9},
  timestamp = {Sun, 22 Sep 2019 01:00:00 +0200},
  biburl = {https://dblp.org/rec/books/aw/Baeza-YatesR2011.bib},
  bibsource = {dblp computer science bibliography, https://dblp.org},
  _bib2doi_selected = {dblp:/rec/books/aw/Baeza-YatesR2011.bib},
  _bib2doi_confirmed = {true},
}

@inproceedings{DBLP:conf/isaac/BrodalFGL09,
  author = {Gerth St{\o}lting Brodal and Rolf Fagerberg and Mark Greve and Alejandro L{\'{o}}pez{-}Ortiz},
  title = {Online Sorted Range Reporting},
  booktitle = {Algorithms and Computation, 20th International Symposium, {ISAAC} 2009},
  series = {Lecture Notes in Computer Science},
  volume = {5878},
  pages = {173--182},
  publisher = {Springer},
  year = {2009},
  url = {https://doi.org/10.1007/978-3-642-10631-6\_19},
  doi = {10.1007/978-3-642-10631-6\_19},
  timestamp = {Sat, 19 Oct 2019 01:00:00 +0200},
  biburl = {https://dblp.org/rec/conf/isaac/BrodalFGL09.bib},
  bibsource = {dblp computer science bibliography, https://dblp.org},
  _bib2doi_selected = {dblp:/rec/conf/isaac/BrodalFGL09.bib},
  _bib2doi_confirmed = {true},
}

@article{DBLP:journals/ipl/JuanLW09,
  author = {M. T. Juan and Jia Jie Liu and Yue{-}Li Wang},
  title = {Errata for "Faster index for property matching"},
  journal = {Inf. Process. Lett.},
  volume = {109},
  number = {18},
  pages = {1027--1029},
  year = {2009},
  doi = {10.1016/J.IPL.2009.06.009},
  timestamp = {Sun, 01 Feb 2026 00:00:00 +0100},
  biburl = {https://dblp.org/rec/journals/ipl/JuanLW09.bib},
  bibsource = {dblp computer science bibliography, https://dblp.org},
  _bib2doi_selected = {dblp:/rec/journals/ipl/JuanLW09.bib},
  _bib2doi_confirmed = {true},
}

@article{DBLP:journals/ipl/IliopoulosR08,
  author = {Costas S. Iliopoulos and M. Sohel Rahman},
  title = {Faster index for property matching},
  journal = {Inf. Process. Lett.},
  volume = {105},
  number = {6},
  pages = {218--223},
  year = {2008},
  doi = {10.1016/j.ipl.2007.09.004},
  timestamp = {Wed, 14 Jun 2017 01:00:00 +0200},
  biburl = {https://dblp.org/rec/journals/ipl/IliopoulosR08.bib},
  bibsource = {dblp computer science bibliography, https://dblp.org},
  _bib2doi_selected = {dblp:/rec/journals/ipl/IliopoulosR08.bib},
  _bib2doi_confirmed = {true},
}

@inproceedings{DBLP:conf/icalp/Ruzic08,
  author = {Milan Ruzic},
  title = {Constructing Efficient Dictionaries in Close to Sorting Time},
  booktitle = {Automata, Languages and Programming, 35th International Colloquium, {ICALP} 2008},
  series = {Lecture Notes in Computer Science},
  volume = {5125},
  pages = {84--95},
  publisher = {Springer},
  year = {2008},
  doi = {10.1007/978-3-540-70575-8\_8},
  timestamp = {Tue, 23 May 2017 01:00:00 +0200},
  biburl = {https://dblp.org/rec/conf/icalp/Ruzic08.bib},
  bibsource = {dblp computer science bibliography, https://dblp.org},
  _bib2doi_selected = {dblp:/rec/conf/icalp/Ruzic08.bib},
  _bib2doi_confirmed = {true},
}

@article{DBLP:journals/siamcomp/ManberM93,
  author = {Udi Manber and Gene Myers},
  title = {Suffix Arrays: {A} New Method for On-Line String Searches},
  journal = {{SIAM} J. Comput.},
  volume = {22},
  number = {5},
  pages = {935--948},
  year = {1993},
  doi = {10.1137/0222058},
  timestamp = {Wed, 14 Nov 2018 00:00:00 +0100},
  biburl = {https://dblp.org/rec/journals/siamcomp/ManberM93.bib},
  bibsource = {dblp computer science bibliography, https://dblp.org},
  _bib2doi_selected = {dblp:/rec/journals/siamcomp/ManberM93.bib},
  _bib2doi_confirmed = {true},
}

@incollection{DBLP:books/daglib/GonnetBS92,
  author = {Gaston H. Gonnet and Ricardo A. Baeza{-}Yates and Tim Snider},
  title = {New Indices for Text: {Pat} Trees and {Pat} Arrays},
  booktitle = {Information Retrieval: Data Structures and Algorithms},
  publisher = {Prentice-Hall},
  pages = {66--82},
  year = {1992},
  timestamp = {Tue, 06 Aug 2019 01:00:00 +0200},
  biburl = {https://dblp.org/rec/books/ph/frakesB92/GonnetBS92.bib},
  bibsource = {dblp computer science bibliography, https://dblp.org},
  _bib2doi_selected = {dblp:/rec/books/ph/frakesB92/GonnetBS92.bib},
  _bib2doi_confirmed = {true},
}

@article{DBLP:journals/siamcomp/GrossiV05,
  author = {Roberto Grossi and Jeffrey Scott Vitter},
  title = {Compressed Suffix Arrays and Suffix Trees with Applications to Text Indexing and String Matching},
  journal = {{SIAM} J. Comput.},
  volume = {35},
  number = {2},
  pages = {378--407},
  year = {2005},
  doi = {10.1137/S0097539702402354},
  timestamp = {Tue, 21 Mar 2023 00:00:00 +0100},
  biburl = {https://dblp.org/rec/journals/siamcomp/GrossiV05.bib},
  bibsource = {dblp computer science bibliography, https://dblp.org},
  _bib2doi_selected = {dblp:/rec/journals/siamcomp/GrossiV05.bib},
  _bib2doi_confirmed = {true},
}

@article{DBLP:journals/jacm/FerraginaM05,
  author = {Paolo Ferragina and Giovanni Manzini},
  title = {Indexing Compressed Text},
  journal = {J. {ACM}},
  volume = {52},
  number = {4},
  pages = {552--581},
  year = {2005},
  doi = {10.1145/1082036.1082039},
  timestamp = {Sun, 19 Jan 2025 00:00:00 +0100},
  biburl = {https://dblp.org/rec/journals/jacm/FerraginaM05.bib},
  bibsource = {dblp computer science bibliography, https://dblp.org},
  _bib2doi_selected = {dblp:/rec/journals/jacm/FerraginaM05.bib},
  _bib2doi_confirmed = {true},
}

@inproceedings{DBLP:conf/soda/GagieNP18,
  author = {Travis Gagie and Gonzalo Navarro and Nicola Prezza},
  title = {Optimal-Time Text Indexing in {BWT}-Runs Bounded Space},
  booktitle = {Proceedings of the Twenty-Ninth Annual {ACM-SIAM} Symposium on Discrete Algorithms ({SODA})},
  pages = {1459--1477},
  publisher = {{SIAM}},
  year = {2018},
  doi = {10.1137/1.9781611975031.96},
  timestamp = {Wed, 28 Feb 2024 00:00:00 +0100},
  biburl = {https://dblp.org/rec/conf/soda/GagieNP18.bib},
  bibsource = {dblp computer science bibliography, https://dblp.org},
  _bib2doi_selected = {dblp:/rec/conf/soda/GagieNP18.bib},
  _bib2doi_confirmed = {true},
}

@article{DBLP:journals/csur/Navarro21,
  author = {Gonzalo Navarro},
  title = {Indexing Highly Repetitive String Collections, Part {I}: Repetitiveness Measures},
  journal = {{ACM} Comput. Surv.},
  volume = {54},
  number = {2},
  pages = {29:1--29:31},
  year = {2021},
  doi = {10.1145/3434399},
  timestamp = {Wed, 28 Feb 2024 00:00:00 +0100},
  biburl = {https://dblp.org/rec/journals/csur/Navarro21a.bib},
  bibsource = {dblp computer science bibliography, https://dblp.org},
  _bib2doi_selected = {dblp:/rec/journals/csur/Navarro21a.bib},
  _bib2doi_confirmed = {true},
}

@inproceedings{DBLP:conf/focs/Farach97,
  author = {Martin Farach},
  title = {Optimal Suffix Tree Construction with Large Alphabets},
  booktitle = {38th Annual Symposium on Foundations of Computer Science, {FOCS} 1997},
  pages = {137--143},
  publisher = {{IEEE} Computer Society},
  year = {1997},
  url = {https://doi.org/10.1109/SFCS.1997.646102},
  doi = {10.1109/SFCS.1997.646102},
  timestamp = {Thu, 23 Mar 2023 00:00:00 +0100},
  biburl = {https://dblp.org/rec/conf/focs/Farach97.bib},
  bibsource = {dblp computer science bibliography, https://dblp.org},
  _bib2doi_selected = {dblp:/rec/conf/focs/Farach97.bib},
  _bib2doi_confirmed = {true},
}

@inproceedings{DBLP:conf/focs/Weiner73,
  author = {Peter Weiner},
  title = {Linear Pattern Matching Algorithms},
  booktitle = {14th Annual Symposium on Switching and Automata Theory ({SWAT} 1973)},
  pages = {1--11},
  publisher = {{IEEE} Computer Society},
  year = {1973},
  doi = {10.1109/SWAT.1973.13},
  timestamp = {Thu, 23 Mar 2023 00:00:00 +0100},
  biburl = {https://dblp.org/rec/conf/focs/Weiner73.bib},
  bibsource = {dblp computer science bibliography, https://dblp.org},
  _bib2doi_selected = {dblp:/rec/conf/focs/Weiner73.bib},
  _bib2doi_confirmed = {true},
}

@inproceedings{DBLP:conf/sdm/0001CCGGLPP24,
  author = {Giulia Bernardini and Huiping Chen and Alessio Conte and Roberto Grossi and Veronica Guerrini and Grigorios Loukides and Nadia Pisanti and Solon P. Pissis},
  title = {Utility-Oriented String Mining},
  booktitle = {Proceedings of the 2024 {SIAM} International Conference on Data Mining, {SDM} 2024},
  pages = {190--198},
  publisher = {{SIAM}},
  year = {2024},
  doi = {10.1137/1.9781611978032.22},
  timestamp = {Wed, 29 Oct 2025 00:00:00 +0100},
  biburl = {https://dblp.org/rec/conf/sdm/0001CCGGLPP24.bib},
  bibsource = {dblp computer science bibliography, https://dblp.org},
  _bib2doi_selected = {dblp:/rec/conf/sdm/0001CCGGLPP24.bib},
  _bib2doi_confirmed = {true},
}

@article{DBLP:journals/tcs/AmirCIKZ08,
  author = {Amihood Amir and Eran Chencinski and Costas S. Iliopoulos and Tsvi Kopelowitz and Hui Zhang},
  title = {Property matching and weighted matching},
  journal = {Theor. Comput. Sci.},
  volume = {395},
  number = {2-3},
  pages = {298--310},
  year = {2008},
  doi = {10.1016/J.TCS.2008.01.006},
  timestamp = {Wed, 17 Feb 2021 00:00:00 +0100},
  biburl = {https://dblp.org/rec/journals/tcs/AmirCIKZ08.bib},
  bibsource = {dblp computer science bibliography, https://dblp.org},
  _bib2doi_selected = {dblp:/rec/journals/tcs/AmirCIKZ08.bib},
  _bib2doi_confirmed = {true},
}

@article{DBLP:journals/talg/KociumakaKRRW20,
  author = {Tomasz Kociumaka and Marcin Kubica and Jakub Radoszewski and Wojciech Rytter and Tomasz Walen},
  title = {A Linear-Time Algorithm for Seeds Computation},
  journal = {{ACM} Trans. Algorithms},
  volume = {16},
  number = {2},
  pages = {27:1--27:23},
  year = {2020},
  doi = {10.1145/3386369},
  timestamp = {Mon, 07 Apr 2025 01:00:00 +0200},
  biburl = {https://dblp.org/rec/journals/talg/KociumakaKRRW20.bib},
  bibsource = {dblp computer science bibliography, https://dblp.org},
  _bib2doi_selected = {dblp:/rec/journals/talg/KociumakaKRRW20.bib},
  _bib2doi_confirmed = {true},
}

@inproceedings{DBLP:conf/esa/Radoszewski23,
  author = {Jakub Radoszewski},
  title = {Linear Time Construction of Cover Suffix Tree and Applications},
  booktitle = {31st Annual European Symposium on Algorithms, {ESA} 2023},
  series = {LIPIcs},
  volume = {274},
  pages = {89:1--89:17},
  publisher = {Schloss Dagstuhl - Leibniz-Zentrum f{\"{u}}r Informatik},
  year = {2023},
  doi = {10.4230/LIPICS.ESA.2023.89},
  timestamp = {Wed, 30 Aug 2023 01:00:00 +0200},
  biburl = {https://dblp.org/rec/conf/esa/Radoszewski23.bib},
  bibsource = {dblp computer science bibliography, https://dblp.org},
  _bib2doi_selected = {dblp:/rec/conf/esa/Radoszewski23.bib},
  _bib2doi_confirmed = {true},
}

@book{DBLP:books/daglib/0020103,
  author = {Maxime Crochemore and Christophe Hancart and Thierry Lecroq},
  title = {Algorithms on strings},
  publisher = {Cambridge University Press},
  year = {2007},
  isbn = {978-0-521-84899-2},
  timestamp = {Wed, 23 Mar 2011 00:00:00 +0100},
  biburl = {https://dblp.org/rec/books/daglib/0020103.bib},
  bibsource = {dblp computer science bibliography, https://dblp.org},
  _bib2doi_selected = {dblp:/rec/books/daglib/0020103.bib},
  _bib2doi_confirmed = {true},
}

@article{DBLP:journals/dam/Regnier00,
  author = {Mireille R{\'{e}}gnier},
  title = {A unified approach to word occurrence probabilities},
  journal = {Discret. Appl. Math.},
  volume = {104},
  number = {1-3},
  pages = {259--280},
  year = {2000},
  url = {https://doi.org/10.1016/S0166-218X(00)00195-5},
  doi = {10.1016/S0166-218X(00)00195-5},
  timestamp = {Thu, 11 Feb 2021 00:00:00 +0100},
  biburl = {https://dblp.org/rec/journals/dam/Regnier00.bib},
  bibsource = {dblp computer science bibliography, https://dblp.org},
  _bib2doi_selected = {dblp:/rec/journals/dam/Regnier00.bib},
  _bib2doi_confirmed = {true},
}

@article{DBLP:journals/tkde/GanLFCTY21,
  author = {Wensheng Gan and Jerry Chun{-}Wei Lin and Philippe Fournier{-}Viger and Han{-}Chieh Chao and Vincent S. Tseng and Philip S. Yu},
  title = {A Survey of Utility-Oriented Pattern Mining},
  journal = {{IEEE} Trans. Knowl. Data Eng.},
  volume = {33},
  number = {4},
  pages = {1306--1327},
  year = {2021},
  url = {https://doi.org/10.1109/TKDE.2019.2942594},
  doi = {10.1109/TKDE.2019.2942594},
  timestamp = {Mon, 03 Mar 2025 00:00:00 +0100},
  biburl = {https://dblp.org/rec/journals/tkde/GanLFCTY21.bib},
  bibsource = {dblp computer science bibliography, https://dblp.org},
  _bib2doi_selected = {dblp:/rec/journals/tkde/GanLFCTY21.bib},
  _bib2doi_confirmed = {true},
}

@inproceedings{DBLP:conf/icalp/BrodalLOP02,
  author = {Gerth St{\o}lting Brodal and Rune B. Lyngs{\o} and Anna {\"{O}}stlin and Christian N. S. Pedersen},
  title = {Solving the String Statistics Problem in Time ${O}(n \log n)$},
  booktitle = {Automata, Languages and Programming, 29th International Colloquium, {ICALP} 2002},
  series = {Lecture Notes in Computer Science},
  pages = {728--739},
  publisher = {Springer},
  year = {2002},
  doi = {10.1007/3-540-45465-9\_62},
  timestamp = {Sat, 19 Oct 2019 01:00:00 +0200},
  biburl = {https://dblp.org/rec/conf/icalp/BrodalLOP02.bib},
  bibsource = {dblp computer science bibliography, https://dblp.org},
  _bib2doi_selected = {dblp:/rec/conf/icalp/BrodalLOP02.bib},
  _bib2doi_confirmed = {true},
}

@article{DBLP:journals/algorithmica/ApostolicoP96,
  author = {Alberto Apostolico and Franco P. Preparata},
  title = {Data Structures and Algorithms for the String Statistics Problem},
  journal = {Algorithmica},
  volume = {15},
  number = {5},
  pages = {481--494},
  year = {1996},
  url = {https://doi.org/10.1007/BF01955046},
  doi = {10.1007/BF01955046},
  timestamp = {Wed, 17 May 2017 01:00:00 +0200},
  biburl = {https://dblp.org/rec/journals/algorithmica/ApostolicoP96.bib},
  bibsource = {dblp computer science bibliography, https://dblp.org},
  _bib2doi_selected = {dblp:/rec/journals/algorithmica/ApostolicoP96.bib},
  _bib2doi_confirmed = {true},
}

\end{document}